\theoremstyle{plain}	
\newtheorem{definition}{Definition}
\newtheorem{theorem}{Theorem} 
\newtheorem{proposition}{Proposition} 
\newtheorem{lemma}{Lemma} 
\newtheorem{corollary}{Corollary} 
\newtheorem{case}{Case} 
\theoremstyle{remark}
\newtheorem{observation}{Observation}
\newcommand{\vw}{\omega}
\newcommand{\N}{{\mathbb N}}
\newcommand{\vectorname}{mds}
\newcommand{\IN}{\ensuremath{\mathbb{N}}}   
\newcommand{\IZ}{\ensuremath{\mathbb{Z}}}   
\newcommand{\disjcup}{\mathbin{\dot{\cup}}}	
\let\emph\textbf	\newcommand{\mathemph}[1]{\boldsymbol{#1}}
\title{The Parameterized Complexity of the Equidomination Problem}
\author[1]{Oliver Schaudt\thanks{schaudt@mathc.rwth-aachen.de}}
\author[2]{Fabian Senger\thanks{senger@zpr.uni-koeln.de}}
\affil[1]{RWTH Aachen University, Chair for Mathematics C}
\affil[2]{University of Cologne, Department of Computer Science}
\begin{document} 

\maketitle

\begin{abstract}
A graph $G=(V,E)$ is called equidominating if there exists a value $t\in\IN$ and a weight function $\vw \colon V \rightarrow \IN$ such that the total weight of a subset $D\subseteq V$ is equal to $t$ if and only if $D$ is a minimal dominating set. 
To decide whether or not a given graph is equidominating is referred to as the \textsc{Equidomination} problem.

In this paper we show that two parameterized versions of the \textsc{Equidomination} problem are fixed-parameter tractable:
the first parameterization considers the target value $t$ leading to the \textsc{Target}-$t$ \textsc{Equidomination} problem.
The second parameterization allows only weights up to a value $k$, which yields the $k$-\textsc{Equidomination} problem.

In addition, we characterize the graphs whose every induced subgraph is equidominating.
We give a finite forbidden induced subgraph characterization and derive a fast recognition algorithm.\\[1em]
\noindent \textbf{Keywords:}
minimal dominating set \textperiodcentered\
equidominating graph \textperiodcentered\
kernelization \textperiodcentered\
parameterized complexity \textperiodcentered\
hereditary property
\end{abstract}

\section{Introduction}

Let $G$ be a simple, undirected graph.
A subset $S$ of the vertices of $G$ is called a \emph{dominating set}, if every vertex of $G$ is an element of $S$ or adjacent to a vertex of $S$.
If a dominating set does not contain another dominating set as a subset, it is called a \emph{minimal dominating set}.
Throughout this paper we use the abbreviation \emph{mds} for minimal dominating sets.

While the main stream of the research on dominating sets in graphs focuses on the optimization aspects of the problem, there are several interesting graph classes defined around this concept, for example the class of efficient dominating graphs~\cite{Brandstädt2012}, of well-dominated graphs~\cite{finbow1988well}, of domination perfect graphs~\cite{JGT3190150202}, of upper domination perfect~\cite{GUTIN199895} graphs and of strong domination perfect~\cite{RAUTENBACH2001297} graphs.

Another example is the class of \emph{domishold} graphs, introduced by Benzaken and Hammer in~\cite{Benzaken19781}.
These are the graphs for which there are positive weights associated to the vertices of the graph such that a subset $D$ of vertices is dominating if and only if the sum of the weights of the vertices of $D$ exceeds a certain threshold $t$.
In other words, the characteristic vectors of the dominating sets are exactly the zero-one solutions of a linear inequality, where the coefficients of the inequality correspond to the weights of the vertices.

This concept motivated Payan to define \emph{equidominating} graphs \cite{Payan1980}.
Loose\-ly speaking, these are the graphs for which the characteristic vectors of the minimal dominating sets are the zero-one solutions of a linear equality.
Formally, equidominating graphs have the following definition.

\begin{definition}
A graph $G=(V,E)$ is called \emph{equidominating} if there exists a value $t\in\IN = \{1,2,3,\ldots\}$ and a weight function $\vw \colon V \rightarrow \IN$ such that for all $D\subseteq V$ the following equivalence holds: 
\[D \text{ is an mds } \iff \vw(D) := \sum_{v\in D}{\vw(v)} = t.\] 
The pair $(\vw,t)$ is called an \emph{equidominating structure}, $\vw$ an \emph{equidominating function} and $t$ a \emph{target value}. 
\end{definition}

Figure~\ref{fig:example_equidom_graph} shows an equidominating graph. 
Every mds has a total weight of 23 and further, every subset of weight 23 is an mds.
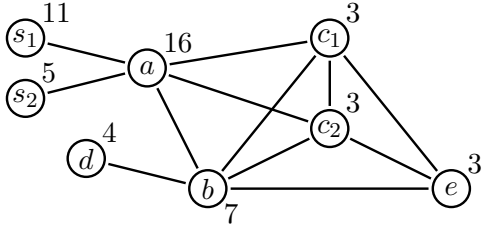
\begin{figure}	\label{fig:example_equidom_graph}
	\centering
	\psset{unit=0.4cm,radius=0.65,nodesep=0.1,linewidth=0.08,labelsep=0.1}
	\begin{pspicture}(-3,-1)(13,6)
			
		\Cnodeput(2,4){a}{$a$}
		\nput{45}{a}{16}
		\Cnodeput(4,0){b}{$b$}
		\nput{-45}{b}{7}
		\Cnodeput(8,5){c1}{$c_1$}
		\nput{45}{c1}{3}		
		\Cnodeput(8,2){c2}{$c_2$}
		\nput{45}{c2}{3}
		\Cnodeput(0,1){d}{$d$}
		\nput{45}{d}{4}
		\Cnodeput(12,0){e}{$e$}
		\nput{45}{e}{3}
		\Cnodeput(-2,5){s1}{$s_1$}
		\Cnodeput(-2,3){s2}{$s_2$}
		\nput{45}{s1}{11}
		\nput{45}{s2}{5}
		
		\ncline{a}{s1}
		\ncline{a}{s2}
		\ncline{a}{c1}
		\ncline{a}{c2}
		\ncline{a}{b}
		\ncline{c1}{c2}
		\ncline{c1}{e}
		\ncline{c2}{e}
		\ncline{c1}{b}
		\ncline{c2}{b}
		\ncline{b}{e}
		\ncline{b}{d}
		
	\end{pspicture}
	\caption{An equidominating graph on 8 vertices; the weights are drawn next to the vertices and the target value is $t=23$.}
\end{figure}
One advantage of having an equidominating structure of the graph at hand is that one can check whether a given vertex subset is an mds in linear time. 

The {\sc Equidomination} problem is to decide whether a given graph is equidominating or not.
Unfortunately, the computational complexity of this problem is unknown.
It is not even clear whether {\sc Equidomination} is in NP.

It can be seen that the following problem is coNP-complete:
given a graph $G$, a weight function $\vw$ and some number $t \in \mathbb N$, is $(\vw,t)$ an equidominating structure of $G$?
This intractability result remains true in the seemingly simple case when $G$ is just the disjoint union of edges.
That can be seen by applying literally the same reduction from the {\sc Weak Partition} problem given by Milani\v{c} et al.~\cite{MR2823204}, who proved coNP-completeness for the the analogous question for maximal stable sets.

We remark that there is no characterization of the class of equidominating graphs in terms of forbidden induced subgraphs:
if one attaches a pendant vertex to every vertex of an arbitrary graph (the so-called corona of a graph with $K_1$), the resulting graph is equidominating. 
The only existing result in this direction characterizes graphs that are both equidominating and domishold (see Theorem~2 in~\cite{Payan1980}).
Further, it is shown in~\cite{Payan1980} that threshold graphs are equidominating.

To get a grip on the computational complexity of the problem, we introduce the following two parameterized notions of equidomination.

\begin{definition}
For a given $t\in\IN$, a graph $G=(V,E)$ is called \emph{target-$\boldsymbol{t}$ equidominating} if there is an equidominating structure of the form $(\vw,t)$ for $G$.  
\end{definition}

\begin{definition}
For a given $k\in\IN$, a graph $G=(V,E)$ is called $\boldsymbol{k}$\emph{-equidominating} if there exists an equidominating structure $(\vw,t)$ with $\vw \colon V \rightarrow \{1,\ldots,k\}$ for some $t \in \IN$.
In this case, $\vw$ is said to be a $\boldsymbol{k}$\emph{-equidominating function} and the pair $(\vw,t)$ a $\boldsymbol{k}$\emph{-equidominating structure}.
\end{definition}

Note that a $k$-equidominating graph is also $k'$-equidominating for all $k'\geq k$.
It is clear that every target-$t$ equidominating graph is also $t$-equidominating since every vertex is contained in some mds and thus its weight, with respect to any equidominating function, cannot exceed~$t$.
The opposite, however, is not true.
Indeed, the edgeless graph on $t+1$ vertices is $k$-equidominating for every $k\in \IN$ but not target-$t$ equidominating.

In this paper, we study the following two parameterized versions of the {\sc Equidomination}~problem:

\vspace{1em}

\parbox{.9\linewidth}{
	{\sc $k$-Equidomination:}\\[0.5em]
	\begin{tabular}{rl}
		\textit{Instance:} & A graph $G$ and $k\in\IN$. \\
		\textit{Parameter:} & $k$. \\
		\textit{Problem:} & Decide whether $G$ is $k$-equidominating. \\
	\end{tabular}
}

\vspace{1em}

\parbox{.9\linewidth}{
	{\sc Target-$t$ Equidomination:}\\[0.5em]
	\begin{tabular}{rl}
		\textit{Instance:} & A graph $G$ and $t\in\IN$. \\
		\textit{Parameter:} & $t$. \\
		\textit{Problem:} & Decide whether $G$ is target-$t$ equidominating. \\
	\end{tabular}
}
\vspace{1em}

In the course of this paper we show that both the {\sc $k$-Equidomination} problem and the {\sc Target-$t$ Equidomination} problem are fixed-parameter tractable. 
We do this by using the so-called kernelization technique: 
we construct an equivalent instance the size of which is bounded by a function of the parameter.
Further, we give FPT algorithms for both problems by applying an XP algorithm to the kernels. 
For a given graph on $n$ vertices and $m$ edges, this leads to a running time of $\mathcal{O}\left(nm^2 + n^2 + t^{2t^2+3t+1}\right)$ for the {\sc Target-$t$ Equidomination} problem and $\mathcal{O}\left(nm^2 + n^2 + k^{6k^2+7k+1}\right)$ for the {\sc $k$-Equidomination} problem.

This paper is structured as follows: 
in Section~\ref{sec:preliminaries} we give a brief explanation of the needed fundamentals.
In Section~\ref{sec:propertiestwinclasses} we examine relationships between the twin relation and equidomination and give an algorithm to discover certain structures that can appear.
We develop an XP algorithm for the {\sc $k$-Equidomination} problem (which can also be used for the {\sc Target-$t$ Equidomination} problem) in Section~\ref{sec:xpalgorithm}.
We make use of this algorithm and of reduction rules described in Section~\ref{sec:reduction} to deduce the desired tractability results, stated in Section~\ref{sec:kernelkequidomination}.
Afterward, we characterize the class of hereditarily equidominating graphs in Section~\ref{sec:hereditarily}.
In Section~\ref{sec:conclusion} we draw a conclusion and give a brief outlook.

\section{Preliminaries} \label{sec:preliminaries}
In the first part of this section we give some basic definitions and notations that we use in this paper.
In the second part we give a brief introduction to fixed-parameter tractability and the third part is devoted to the so-called twin relation.

\subsection{Basic Definitions and Notations}
The graphs we work with all are undirected, simple and finite.
For a vertex $v\in V$ of a graph $G=(V,E)$ we call $N(v)\coloneqq \{x\in V\mid(x,v)\in E\}$ the \emph{open neighborhood} of $v$.
Further, $N[v]\coloneqq N(v)\cup\{v\}$ is the \emph{closed neighborhood} of $v$. 
We also use these terms for sets $S\subseteq V$, where $N(S)\coloneqq \bigcup_{v\in S}N(v)$ and $N[S]\coloneqq N(S)\cup S$.
We define the private neighbor set $pn[v,S]$ of $v\in S$ as $pn[v,S]\coloneqq N[v] \setminus N[S\setminus\{v\}]$ and every element of $pn[v,S]$ is called a \emph{private neighbor} of $v$.
Note that a set $D\subseteq V$ is dominating if and only if $N[D]=V$.
Further, $D$ is an mds if and only if additionally $pn[v,D]\neq\emptyset$ for all $v\in D$ holds. 
A \emph{stable set} is a subset of pairwise non-adjacent vertices and a \emph{clique} is a subset of pairwise adjacent vertices.
Analogously to an mds, a stable set is a \emph{maximal stable set} if it is not contained in another stable set.
It is easy to see that every maximal stable set is an mds.

The length of a shortest path between two vertices $v,w\in V$ is denoted by $dist(v,w)$ and called the \emph{distance} of $v$ and $w$.
For two vertices $x,y\in V$ and a subset $V'\subseteq V$ with $x\in V',\, y\notin V'$ we write shortly $V'-x+y$ for $(V'\setminus \{x\})\cup \{y\}$.
We use the abbreviation $[k]\coloneqq\{1,\ldots,k\}$ for a natural number $k\in\IN$.
For a subset $V'\subseteq V$ we denote the vertex induced subgraph of $G$ by $G[V']$.
Finally, we denote a disjoint union of two sets by $\disjcup$.

\subsection{Parameterized Complexity}
The concept of parameterized complexity is used to refine the analysis of computational hard problems.
It was formally introduced by Downey and Fellows in the 1990s \cite{DF1992}.
Parameterized complexity is motivated by the fact that often a large part of the running time of an algorithm depends not on the instance size but on a problem-specific parameter.
A parameterized decision problem is called \emph{fixed-parameter tractable} (FPT) if there exists an algorithm that solves it in time \hbox{$f(k) \cdot n^{O(1)}$}, where ~$n$ is the size of the instance and $k$ a problem-specific parameter. 
Such an algorithm is called an \emph{FPT algorithm}.
The function~$f$ is typically super-polynomial and it is essential that it depends only on~$k$.
A standard tool to prove that a parameterized problem is FPT and to design an FPT algorithm is the kernelization technique.
A \emph{kernelization} is a polynomial-time algorithm which transforms an instance~$(I,k)$ of a parameterized problem into an equivalent instance~$(I',k')$ of the same problem such that the size of~$I'$ is bounded by~$g(k)$ for some computable function~$g$ and $k'$ is bounded by a function of $k$.
The instance~$I'$ is said to be a \emph{kernel} of size~$g(k)$.
Furthermore, a transformation to a different but, loosely speaking, related problem, is called a \emph{generalized kernelization} and a transformed instance a \emph{generalized kernel} (see~\cite{BODLAENDER2009423}).
It is widely known that a parameterized problem is fixed-parameter tractable if and only if it admits a (generalized) kernelization.
In the remainder of this paper, the kernel size is expressed in terms of the number of vertices.
For more background on parameterized complexity the reader is referred to the books of Downey and Fellows~\cite{DF13} and Flum and Grohe~\cite{flumgrohe}.

\subsection{Twin Relation} \label{subsec:twinrelation}

Two vertices $v,\, w$ of a graph are called \emph{twins} if $N(v)\setminus\{w\}=N(w)\setminus\{v\}$ holds.
Here $v$ and $w$ can be either adjacent (\emph{true twins}) or non-adjacent (\emph{false twins}).
It is easy to see that the twin relation (two vertices are related if they are twins) is an equivalence relation.
The equivalence classes are called \emph{twin classes} and the partition of the vertices into twin classes is called the \emph{twin partition}.

All vertices of a twin class are either pairwise adjacent or pairwise non-adjacent.
Therefore, twin classes are specified to be \emph{clique classes} in the first and \emph{stable set classes} in the latter case.
A twin class can also be a single vertex. 
Even though a single vertex is strictly speaking a stable set as well as a clique, we use the terms clique class and stable set class only for twin classes with at least two elements. 
A twin class with one vertex is called a \emph{singleton class}.
In Figure~\ref{fig:example_equidom_graph}, $c_1$ and $c_2$ form a clique class, $s_1$ and $s_2$ a stable set class, and all other vertices singleton classes.

Let $T_1$ and $T_2$ be two twin classes.
Then either every vertex of $T_1$ is adjacent to every vertex of $T_2$ or every vertex of $T_1$ is non-adjacent to every vertex of $T_2$.
In the first case we say that $T_1$ and $T_2$ \emph{see} each other and that $T_1$ sees $T_2$ and vice versa.
We also say that a vertex and a twin class see each other, and likewise two vertices.
Furthermore, if appropriate we use expressions for twin classes which are usually used for vertices (e.g.~a twin class is adjacent to, dominates, is dominated by,~...).

For the sake of completeness we define the \emph{quotient graph} $Q(G)$ of a graph $G$, where every twin class becomes a vertex and two vertices are adjacent if and only if the corresponding twin classes see each other.
The twin partition and the quotient graph can be computed in linear time using one of the modular decomposition algorithms of \cite{ModDecomp1,ModDecomp2,ModDecomp3}.

\section{Properties of Twin Classes regarding Equidomination} \label{sec:propertiestwinclasses}
In this section we use the twin relation to obtain structural results regarding equidomination.
Initially, we introduce the following definition.

\begin{definition}
Let $G=(V,E)$ be a graph.
Two vertices $x,y\in V$ are called \emph{mds-exchangeable} if and only if there exists an mds $D\subseteq V$ with $|\{x,y\}\cap D|=1$ and if for all mds $D\subseteq V$ with $|\{x,y\}\cap D|=1$ the symmetric difference $(D \setminus\{x,y\}) \cup (\{x,y\} \setminus D)$ is also an mds.
\end{definition}
Loosely speaking, two vertices are mds-exchangeable if they can be exchanged for another in any mds containing exactly one of them.
Two mds-exchangeable vertices of an equidominating graph must have the same weight in every equidominating function.
Even if two vertices are mds-exchangeable, they can both be elements of one mds (e.g. two non-adjacent vertices of $C_4$).
While $c_1$, $c_2$ and $e$ in Figure~\ref{fig:example_equidom_graph} are mds-exchangeable, $s_1$ and $s_2$ are not since there is no mds containing only one of the two vertices.

As the following observations show, the twin relation is a very helpful instrument with regard to equidomination.  

\begin{observation}
\label{obs:keinereineroderalleinSSC}
	For every mds $D$ and every stable set class $S$, we have $|D\cap S|\in\{0,1,|S|\}$.
\end{observation}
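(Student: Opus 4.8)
The plan is to first record the one structural fact about stable set classes that makes everything work: the vertices of a stable set class $S$ are pairwise non-adjacent twins (false twins), so they all share the same open neighborhood; write $N_S$ for this common neighborhood and note that $S\cap N_S=\emptyset$. Since $\{0,1,|S|\}$ already exhausts all possibilities whenever $|S|\le 2$, the only situation to rule out is that some mds $D$ satisfies $2\le |D\cap S|\le |S|-1$; I would derive a contradiction from this.

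So suppose such an mds $D$ exists. Pick two distinct vertices $v,v'\in D\cap S$. Because $v'\in D\setminus\{v\}$ and $N[v']=N_S\cup\{v'\}\supseteq N_S$, we have $N[D\setminus\{v\}]\supseteq N_S$, and therefore
\[pn[v,D]=N[v]\setminus N[D\setminus\{v\}]\subseteq \bigl(\{v\}\cup N_S\bigr)\setminus N_S=\{v\}.\]
Since $D$ is an mds, $pn[v,D]\neq\emptyset$, so $pn[v,D]=\{v\}$; this means $v\notin N[D\setminus\{v\}]$, and as the neighborhood of $v$ is exactly $N_S$ (with $v\notin N_S$), we conclude $N_S\cap D=\emptyset$.

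To finish, I would use $|D\cap S|\le |S|-1$ to choose a vertex $w\in S\setminus D$. As $D$ is dominating and $w\notin D$, the vertex $w$ must have a neighbor in $D$; but $N(w)=N_S$, so $N_S\cap D\neq\emptyset$, contradicting the previous paragraph. Hence no such $D$ exists and $|D\cap S|\in\{0,1,|S|\}$ for every mds $D$.

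The argument is short and I do not expect a real obstacle, but the point that needs care — and what I would regard as the crux — is the observation that as soon as two vertices of $S$ lie in an mds $D$, each of them can only be its own private neighbor, which forces $D$ to avoid the common neighborhood $N_S$ entirely; the domination requirement on any leftover vertex of $S$ then immediately collides with this.
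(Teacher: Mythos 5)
Your proof is correct, and since the paper states this as an observation without proof, your argument supplies exactly the standard private-neighbor reasoning that the authors leave implicit: once two false twins from $S$ lie in $D$, each is its own unique private neighbor, forcing $D\cap N_S=\emptyset$, which is incompatible with dominating any vertex of $S\setminus D$. No gaps; this matches the intended justification.
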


\begin{observation}
\label{obs:einknotenproCC}
	For every mds $D$ and every clique class $C$, we have $|D\cap C|\leq 1$.
\end{observation}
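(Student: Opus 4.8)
The plan is to prove Observation~\ref{obs:einknotenproCC} directly from the structure of minimal dominating sets, using only the definition of a clique class and the private-neighbor characterization of an mds recalled in the Preliminaries. Suppose, for contradiction, that $D$ is an mds and $C$ is a clique class with $|D\cap C|\ge 2$; pick two distinct vertices $x,y\in D\cap C$. Since $x$ and $y$ are twins, $N[x]=N(x)\cup\{x\}$ and $N[y]=N(y)\cup\{y\}$ satisfy $N(x)\setminus\{y\}=N(y)\setminus\{x\}$, and because $C$ is a clique class the two vertices are adjacent, so in fact $N[x]=N[y]$.

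First I would make the key observation that identical closed neighborhoods force empty private neighbor sets. Concretely, since $y\in D\setminus\{x\}$ we have $N[D\setminus\{x\}]\supseteq N[y]=N[x]$, hence $pn[x,D]=N[x]\setminus N[D\setminus\{x\}]=\emptyset$. But $D$ is an mds, so by the characterization recalled in Section~\ref{sec:preliminaries} every vertex of $D$ must have a nonempty private neighbor set; this contradicts $pn[x,D]=\emptyset$. Therefore $|D\cap C|\le 1$, as claimed.

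I expect the only subtlety — hardly an obstacle — to be making explicit why membership in a clique class yields $N[x]=N[y]$ rather than merely $N(x)\setminus\{y\}=N(y)\setminus\{x\}$: one must note that $x\in N(y)$ and $y\in N(x)$ precisely because the class is a clique class (adjacency), so adding the vertices themselves back in gives the full closed neighborhoods equal. The rest is a one-line application of the private-neighbor condition. No case analysis is needed, and the argument is symmetric in $x$ and $y$.
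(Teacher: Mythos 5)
Your proof is correct and is exactly the standard argument the paper leaves implicit (the statement is given as an observation without proof): two vertices of a clique class are adjacent twins, hence have equal closed neighborhoods, so each would be a private neighbor of neither and $pn[x,D]=\emptyset$ contradicts the private-neighbor characterization of an mds recalled in the Preliminaries. Nothing is missing.
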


\begin{observation}
\label{obs:constantaufklassen}
Since the vertices of a clique class are pairwise mds-exchangeable, every equidominating function must be constant on every clique class.
Analogously, if there exists an mds containing exactly one vertex of a stable set class, every equidominating function must be constant on that stable set class.
\end{observation}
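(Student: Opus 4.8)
The plan is to isolate one elementary fact about twins and then plug it into the definition of mds-exchangeability. The fact is: if $x$ and $y$ are twins (adjacent or not), the permutation $\sigma$ of $V$ that swaps $x$ and $y$ and fixes every other vertex is an automorphism of $G$. I would verify this by a short case check on an edge $uv$ according to whether $\{u,v\}$ meets $\{x,y\}$, using only the defining equality $N(x)\setminus\{y\}=N(y)\setminus\{x\}$ (and, in the true-twin case, that $x$ and $y$ are adjacent). Since the property of being an mds is expressed purely through $N[\cdot]$ and private neighbors, it is automorphism-invariant; and because $\sigma(D)=D-x+y$ whenever $x\in D$ and $y\notin D$, we obtain at once: if $D$ is an mds with $x\in D$, $y\notin D$, then $D-x+y$ is an mds. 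Combined with the fact recalled in the text that an mds-exchangeable pair has equal weight under every equidominating function — which is itself just $t=\vw(D)=\vw(D-x+y)$, hence $\vw(x)=\vw(y)$ — it suffices to show that any two vertices of a clique class, and (under the stated hypothesis) any two vertices of a stable set class, are mds-exchangeable.

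For a clique class $C$ and distinct $x,y\in C$: every vertex lies in some mds (extend $\{x\}$ to a maximal stable set, which is an mds), so pick an mds $D$ with $x\in D$; Observation~\ref{obs:einknotenproCC} forces $y\notin D$, so some mds meets $\{x,y\}$ in exactly one vertex. For \emph{any} mds $D$ with, say, $x\in D$ and $y\notin D$, the automorphism fact gives that $D-x+y$ is an mds. Hence $x$ and $y$ are mds-exchangeable, so $\vw(x)=\vw(y)$; ranging over all pairs, $\vw$ is constant on $C$.

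For a stable set class $S$, assume some mds $D_0$ satisfies $D_0\cap S=\{z\}$. Given arbitrary distinct $x,y\in S$, apply the twin-transposition $(z\ x)$ to $D_0$ to obtain an mds $D_1$ with $D_1\cap S=\{x\}$; in particular $x\in D_1$ and $y\notin D_1$, so again some mds meets $\{x,y\}$ in exactly one vertex. Now let $D$ be any mds with, say, $x\in D$ and $y\notin D$. By Observation~\ref{obs:keinereineroderalleinSSC}, $|D\cap S|\in\{0,1,|S|\}$; since $x\in D\cap S$ but $y\notin D$, the only possibility is $D\cap S=\{x\}$. Then $(x\ y)$ is an automorphism (false twins), so $D-x+y$ is an mds, $x$ and $y$ are mds-exchangeable, $\vw(x)=\vw(y)$, and $\vw$ is constant on $S$.

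The routine part is the automorphism check; the one place that needs care is the stable-set direction. There the hypothesis only hands us a single mds meeting $S$ in one vertex, so the twin-transpositions are used twice — once to move that mds so it hits any prescribed vertex of $S$, and once more, after invoking Observation~\ref{obs:keinereineroderalleinSSC} to argue that an mds containing exactly one of $x,y$ actually meets $S$ in just that vertex, so that the swap really produces $D-x+y$ rather than a larger set. For clique classes none of this is needed, since Observation~\ref{obs:einknotenproCC} already forbids a second vertex of the class from occurring.
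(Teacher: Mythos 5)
Your proof is correct and follows exactly the route the paper intends: the paper states this as an observation without a written proof, relying on the facts that twins are mds-exchangeable (which your twin-transposition automorphism argument makes rigorous) and that mds-exchangeable vertices must receive equal weight under any equidominating function. The only superfluous step is the appeal to Observation~\ref{obs:keinereineroderalleinSSC} in the stable-set case: the transposition $(x\ y)$ sends $D$ to $D-x+y$ whenever $x\in D$ and $y\notin D$ regardless of how many other vertices of $S$ lie in $D$, so that invocation can be dropped without loss.
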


\begin{observation}\label{obs: SSC-maximal-t-knoten}
Since every maximal stable set is also an mds, every (non-maximal) stable set can be extended to an mds.
Therefore, for every stable set $S$ and for every equidominating structure $(\vw,t)$ it holds that $\vw(S) \leq t$ and hence $|S|\leq t$.
\end{observation}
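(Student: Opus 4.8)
The plan is to reduce the claim to two facts already available: that every maximal stable set is an mds (recalled in Section~\ref{sec:preliminaries}), and that an equidominating function takes only positive integer values. First I would take an arbitrary stable set $S$ of $G$ and greedily extend it to a maximal stable set $S' \supseteq S$: as long as the current stable set is not maximal there is a vertex non-adjacent to all of its members, which can be added; since $G$ is finite this process terminates with a maximal stable set $S'$ containing $S$.

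Next I would invoke that $S'$, being a maximal stable set, is an mds, so by the defining equivalence of an equidominating structure $(\vw,t)$ we have $\vw(S') = t$. Since $S \subseteq S'$ and $\vw(v) \geq 1$ for every vertex $v$ (because $\vw \colon V \to \IN = \{1,2,3,\ldots\}$), it follows that
\[
\vw(S) = \sum_{v \in S} \vw(v) \;\leq\; \sum_{v \in S'} \vw(v) = \vw(S') = t,
\]
which is the first inequality. For the cardinality bound, using once more that $\vw(v) \geq 1$ for all $v$, we get $|S| \leq \sum_{v \in S} \vw(v) = \vw(S) \leq t$.

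There is essentially no obstacle here; the only points worth stating explicitly are that maximality of a stable set can always be reached by extension, and that the codomain of an equidominating function consists of positive integers, so that both the monotonicity step ``$S \subseteq S'$ implies $\vw(S) \leq \vw(S')$'' and the step ``$|S| \leq \vw(S)$'' are justified. No structural property of $G$ beyond finiteness enters the argument.
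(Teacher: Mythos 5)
Your proof is correct and follows essentially the same route as the paper: extend $S$ to a maximal stable set, which is an mds and hence has weight exactly $t$, then use positivity of the weights to get $\vw(S)\leq t$ and $|S|\leq\vw(S)$. The paper states this as an observation with exactly this justification, so nothing further is needed.
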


Such an extension of a stable set to an mds can happen in different ways.
One way is the following: for a given stable set $S\subseteq V$ of a graph $G=(V,E)$ we determine an mds $D'$ of $G[V\setminus N[S]]$.
Then, the union $D\coloneqq S\cup D'$ is an mds of $G$.
The domination property of $D$ is obvious. 
Furthermore, every vertex of $S$ is its own private neighbor (possible among others) and for every vertex $v\in D'$ it holds that $pn(v,D')=pn(v,D)$.
That means it is not possible to delete one vertex and remain dominating.
Hence, for every stable set $S$ there exists an mds $D$ with $S\subseteq D$ and $D\cap N(S)=\emptyset$, which we will often use in the proofs of the upcoming lemmas.

In the following lemmas we examine whether different vertices of an equidominating graph can have equal weights or not.
As we will see, two vertices can only have the same weight if they lie in the same twin class or are adjacent.
That means for one thing, that when trying to construct an equidominating structure one has to consider fewer combinatorial possibilities.
And for another thing, that for a given number of weights to be allocated one can bound the diameter of an equidominating graph.

\begin{lemma}
\label{lem:verschiedenegewichteabstandzwei}
Let $G=(V,E)$ be an equidominating graph with equidominating structure $(\vw,t)$ and let $x,\,y \in V$ be two vertices of different twin classes with $dist(x,y)\geq 2$.
Then it holds that $\vw(x)\neq\vw(y)$.
\end{lemma}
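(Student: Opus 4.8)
The plan is to extract from the hypotheses a single minimal dominating set $D$ together with a vertex swap that fails to be an mds; the equidomination equivalence then forces $\vw(x)\neq\vw(y)$.

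First I would locate a suitable auxiliary vertex. Since $dist(x,y)\geq 2$, the vertices $x$ and $y$ are non-adjacent, so $N(x)\setminus\{y\}=N(x)$ and $N(y)\setminus\{x\}=N(y)$; as $x$ and $y$ lie in different twin classes they are not twins, hence $N(x)\neq N(y)$. I would pick a vertex $z$ in the symmetric difference of $N(x)$ and $N(y)$, noting that $z\notin\{x,y\}$, and by the symmetry of the statement in $x$ and $y$ I may assume $z\in N(x)\setminus N(y)$.

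Second I would build the mds. The pair $\{y,z\}$ is a stable set because $z\notin N(y)$, so by the extension argument established just before this lemma there is an mds $D$ with $\{y,z\}\subseteq D$ and $D\cap N(\{y,z\})=\emptyset$. From this I read off two facts: $x\notin D$, since $x\in N(z)\subseteq N(\{y,z\})$; and $N(y)\cap D=\emptyset$, since $N(y)\subseteq N(\{y,z\})$. Now I would consider the swapped set $D-y+x$. The vertex $y$ is dominated in $D$ only by itself, and after deleting $y$ it cannot be recovered by $x$ because $dist(x,y)\geq 2$ gives $y\notin N[x]$; hence $y\notin N[D-y+x]$, so $D-y+x$ is not dominating and in particular not an mds. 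Finally I would invoke the equidominating structure: $\vw(D)=t$ because $D$ is an mds, while $\vw(D-y+x)=t-\vw(y)+\vw(x)$ is not equal to $t$ because $D-y+x$ is not an mds; therefore $\vw(x)\neq\vw(y)$.

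I expect the one genuinely creative step to be the choice of which stable set to extend: extending $\{y\}$ alone does not rule out $x\in D$, so one must also throw in a vertex $z$ adjacent to $x$ but not to $y$, which simultaneously blocks $x$ from the mds and keeps $y$ private to itself — and this is precisely the point where the different-twin-class hypothesis is consumed. Everything after that is routine domination bookkeeping.
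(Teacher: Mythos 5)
Your proof is correct and follows essentially the same route as the paper: choose a vertex $z\in N(x)\setminus N(y)$ witnessing that $x$ and $y$ are not twins, extend the stable set $\{y,z\}$ to an mds $D$ with $D\cap N(\{y,z\})=\emptyset$, and observe that $D-y+x$ has the same weight but leaves $y$ undominated. The only difference is that the paper first treats the case where $x$ or $y$ lies in a clique class separately (via the observation that an mds meets a clique class in at most one vertex), whereas your single swap argument covers that case as well, so your version is a slight streamlining rather than a different method.
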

\begin{proof}
Suppose there are such two vertices $x, y\in V$ with $\vw(x)=\vw(y)$.
First, let one of the vertices be in a clique class $C$, say $x\in C$.
Extend $\{x,y\}$ to an mds and then exchange another vertex of $C$ for $y$.
Doing so, one gets a subset with weight $t$, which is not an mds (cf.~Observation~\ref{obs:einknotenproCC}), a contradiction. 

So, let both $x$ and $y$ be either elements of singleton classes or stable set classes.
As they are in different classes, w.l.o.g.\ let $v$ be a vertex only seen by $x$.
Extend $\{y,v\}$ to an mds $D$ such that $D\cap N(\{y,v\})=\emptyset$ and exchange $y$ for $x$.
Again we get a subset (namely $D-y+x$) of weight $t$, that is not an mds since $y$ is no longer dominated. 
\end{proof}

Note that in the previous lemma the two mentioned vertices must be from different twin classes.
Two elements of a stable set class of course can have the same weight while always having distance at least two. 

In the following we take a closer look at adjacent vertices, where we find a slightly more complicated situation.
We begin by showing that vertices of stable set classes and clique classes that see each other cannot have the same weight.

\begin{lemma} 
Let $G=(V,E)$ be an equidominating graph with equidominating structure $(\vw,t)$ and let $S\subseteq V$ be a stable set class and $C\subseteq V$ a clique class that see each other.
Then for all $x\in S$ and for all $y\in C$ it holds that $\vw(x)\neq\vw(y)$.
\end{lemma}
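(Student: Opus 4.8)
The plan is to argue by contradiction. Suppose $\omega(x)=\omega(y)$ for some $x\in S$ and some $y\in C$; the goal is to produce a vertex subset of weight exactly $t$ that is not an mds, contradicting that $(\omega,t)$ is an equidominating structure.

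First I would fix the setup. Twin classes have at least two elements, so $|S|\geq 2$ and $|C|\geq 2$. By the construction described just before this lemma, the stable set $S$ extends to an mds $D$ with $D\cap N(S)=\emptyset$; since $C\subseteq N(S)$ this yields $D\cap C=\emptyset$, and by Observation~\ref{obs:keinereineroderalleinSSC} we get $D\supseteq S$. The candidate ``bad'' set is $D':=D-x+y$ (legitimate, since $x\in S\subseteq D$ while $y\in C$ and $D\cap C=\emptyset$, so $y\notin D$); its weight is $\omega(D)-\omega(x)+\omega(y)=t$.

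The heart of the argument is to show that $D'$ is not an mds, which I would split according to $|S|$. If $|S|\geq 3$, then $D'\cap S=S\setminus\{x\}$ has $|S|-1$ elements and $|S|-1\notin\{0,1,|S|\}$, so by Observation~\ref{obs:keinereineroderalleinSSC} $D'$ cannot be an mds, and we are done. If $|S|=2$, write $S=\{x,x'\}$; if $D'$ already fails to be an mds we are done, so assume $D'$ is an mds. Then $D'\cap S=\{x'\}$, so there is an mds meeting $S$ in exactly one vertex, and Observation~\ref{obs:constantaufklassen} forces $\omega$ to be constant on $S$; in particular $\omega(x')=\omega(x)=\omega(y)$. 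Moreover $D'\cap C=\{y\}$. Pick $y'\in C\setminus\{y\}$; since the vertices of the clique class $C$ are pairwise mds-exchangeable (Observation~\ref{obs:constantaufklassen}) and $D'$ contains $y$ but not $y'$, the set $D'':=D'-y+y'$ is again an mds, containing $x'$ but not $y$. Finally $D''-x'+y$ has weight $\omega(D'')-\omega(x')+\omega(y)=t$, yet it contains both $y$ and $y'$ and hence two vertices of $C$; by Observation~\ref{obs:einknotenproCC} it is not an mds, the desired contradiction.

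I expect the case $|S|=2$ to be the only genuinely delicate point: there the cardinality observations alone do not push $D'$ out of the class of minimal dominating sets, so one must first promote $D'$ to an mds, then combine the constancy of $\omega$ on the two‑element class $S$ with mds‑exchangeability inside $C$ to rebuild a weight‑$t$ set carrying two clique‑class vertices. The remaining work is routine bookkeeping about membership of vertices in $D$, $D'$ and $D''$.
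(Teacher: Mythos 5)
Your proof is correct, but it takes a genuinely different route from the paper's at the decisive step. The paper first rules out $\vw(x')=\vw(x)$ for every other $x'\in S$ (if two vertices of $S$ shared the weight $\vw(y)$, swapping them for two vertices of $C$ in an mds containing $S$ and avoiding $N(S)$ would produce a weight-$t$ set with two clique-class vertices); it then concludes that $D'=D-x+y$ is an mds of weight $t$ and swaps $x'$ back for $x$ \emph{inside the stable set class}, obtaining an mds whose weight differs from $t$. You instead split on $|S|$: for $|S|\ge 3$ you dispose of $D'$ at once with Observation~\ref{obs:keinereineroderalleinSSC}, and for $|S|=2$ you argue in the opposite direction, deducing from Observation~\ref{obs:constantaufklassen} that $\vw$ is constant on $S$ and then routing a second vertex of $C$ into a weight-$t$ set via an exchange inside $C$ followed by an $S$-to-$C$ swap, so that Observation~\ref{obs:einknotenproCC} applies. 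Both arguments rest on the same toolkit (the extension of $S$ to an mds disjoint from $N(S)$, plus the three observations), and both are sound. The paper's version is shorter and uniform in $|S|$, but its last step tacitly uses that exchanging two false twins in an mds again yields an mds --- a fact that is true but not among the stated observations, and the paper elsewhere stresses that vertices of a stable set class need not be mds-exchangeable (the existence half of that definition can fail). Your version sidesteps this by confining all exchange steps to the clique class, where mds-exchangeability is explicitly asserted, at the price of a case distinction and one extra swap.
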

\begin{proof}
Suppose there are vertices $x\in S$ and $y\in C$ with $\vw(x)=\vw(y)$.
The other vertices of $S$ must have a different weight than $x$, since otherwise one could extend $S$ to an mds $D$ such that $D\cap N(S)=\emptyset$ and exchange two vertices of $S$ for two vertices of $C$ while maintaining the weight $t$, a contradiction.
So, extend $S$ to an mds $D$ such that $D\cap N(S)=\emptyset$.
The set $D'\coloneqq D-x+y$ has weight $t$ and, therefore, is also an mds.
Now $D'-x'+x$ ($x'\in S,\, x'\neq x$) has a weight different than $t$, but since the swapped vertices lie in one twin class it still must be an mds, again a contradiction.
\end{proof}

Further, a vertex of stable set class and an adjacent singleton class cannot have the same weight in an equidominating structure.

\begin{lemma} \label{lem:SSCundsingleton}
	Let $G=(V,E)$ be an equidominating graph with equidominating structure $(\vw,t)$.
	Let $S\subseteq V$ be a stable set class and let $\{y\}$ be a singleton class with $y\in N(S)$.
	Then for all $x\in S$ it holds that $\vw(x)\neq\vw(y)$.
\end{lemma}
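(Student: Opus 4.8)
The plan is a proof by contradiction along the lines of the preceding lemmas, but reducing to a small rigid core. Suppose $(\vw,t)$ is an equidominating structure of $G$ and $\vw(x)=\vw(y)$ for some $x\in S$. Applying the construction recalled before Lemma~\ref{lem:verschiedenegewichteabstandzwei} to the stable set $S$, I would fix an mds $D$ with $S\subseteq D$ and $D\cap N(S)=\emptyset$; since $y\in N(S)$, this gives $y\notin D$. The set $D-x+y$ then has weight $\vw(D)-\vw(x)+\vw(y)=t$, hence is an mds by the defining equivalence.

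Next, $(D-x+y)\cap S=S\setminus\{x\}$ has $|S|-1$ elements, and by Observation~\ref{obs:keinereineroderalleinSSC} this number must lie in $\{0,1,|S|\}$; since $|S|\ge 2$ this forces $|S|=2$, so write $S=\{x,x'\}$. As $D-x+y$ is an mds meeting $S$ in exactly one vertex, Observation~\ref{obs:constantaufklassen} shows $\vw$ is constant on $S$, whence $\vw(x')=\vw(x)=\vw(y)$. This leaves a very rigid configuration: a non-adjacent pair $x,x'$ with $N(x)=N(x')$, a vertex $y$ adjacent to both, all three carrying the same weight, and the mds $D=S\cup D''$ with $D''\coloneqq D\setminus S$ and $D''\cap N(S)=\emptyset$; note also that $pn[x,D]=\{x\}$ and $pn[x',D]=\{x'\}$, since every other vertex of $N(x)=N(x')$ is dominated by the remaining vertex of $S$.

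Deriving a contradiction from this last configuration is the core of the proof and the step I expect to be the main obstacle. The tools available are the three mds's $D$, $D-x+y=\{x',y\}\cup D''$ and $D-x'+y=\{x,y\}\cup D''$, all of weight $t$, together with the dominating but non-minimal set $D\cup\{y\}$, of weight $t+\vw(y)$. Minimality of $D-x+y$ forces $pn[x',D-x+y]=N(x)\setminus(N[y]\cup N[D''])$ to be non-empty, so there is a vertex $w$ that is adjacent to both $x$ and $x'$, non-adjacent to $y$, not dominated by $D''$, and outside $D$. I would then try to build from $D''$, the vertex $w$ and one of $x,x',y$ a set of weight exactly $t$ that fails to be an mds; the natural candidate is $\{x,w\}\cup D''$, which one checks is always dominating, so that if it is minimal it has weight $t$ and therefore $\vw(w)=\vw(y)$, and a subsequent swap then over- or undershoots $t$ while remaining an mds. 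The subtle part is the private-neighbour bookkeeping ruling out the degenerate cases --- in particular distinguishing whether $y\in N[D'']$ and, if need be, choosing $D$ (equivalently the mds $D''$ of $G[V\setminus N[S]]$) so that $N(x)\setminus(N[y]\cup N[D''])$ is as small as possible. The rest of the argument is routine.
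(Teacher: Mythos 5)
Your opening moves are correct and match the spirit of the paper's technique: extending $S$ to an mds $D$ with $D\cap N(S)=\emptyset$, observing that $D-x+y$ has weight $t$ and hence must be an mds, and then invoking Observations~\ref{obs:keinereineroderalleinSSC} and~\ref{obs:constantaufklassen} correctly kills the case $|S|\geq 3$ and, for $|S|=2$, forces $\vw$ to be constant on $S\cup\{y\}$. But the part you yourself flag as ``the main obstacle'' is precisely where the proof has to happen, and what you offer there does not close it. Your candidate $\{x,w\}\cup D''$ is indeed always dominating, and \emph{if} it is minimal you get $\vw(w)=\vw(y)$ and a contradiction with Lemma~\ref{lem:verschiedenegewichteabstandzwei}; but when it is not minimal its weight is simply not pinned down (you do not know $\vw(w)$ a priori), so no contradiction follows, and non-minimality is a real possibility: $x$ may lose all private neighbours to $w\cup D''$, or vertices of $D''$ may lose theirs to $w$. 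The suggestion to choose $D''$ minimizing $N(x)\setminus(N[y]\cup N[D''])$ is not developed into an argument. So the proposal, as written, is a correct reduction plus an open core case, i.e.\ it has a genuine gap.

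For comparison, the paper does not reduce to $|S|=2$ at all. Instead it manufactures a vertex $v\in N(S)\setminus N[y]$ with the additional property $N[v]\subseteq N[S]$, proving existence by showing that otherwise one can build a set of weight $t$ (of the form $D-y+x$ for a suitable mds $D\ni y$) that fails to dominate a second vertex $x'$ of $S$. That containment $N[v]\subseteq N[S]$ is the decisive ingredient your $w$ lacks: your $w$ only satisfies $N[w]\cap D''=\emptyset$, so $w$ may still have neighbours outside $N[S]$ and its private-neighbour set is uncontrolled. With $N[v]\subseteq N[S]$ in hand, the paper extends $\{v,y\}$ to an mds $D$ with $D\cap N(\{v,y\})=\emptyset$, notes that in the mds $D'=D-y+x$ the private neighbours of $v$ are exactly $S\setminus\{x\}$ (everything else in $N[v]$ lies in $N(S)\subseteq N[x]$), concludes that $D'-v+y$ is again an mds, and hence $\vw(v)=\vw(y)$ --- contradicting Lemma~\ref{lem:verschiedenegewichteabstandzwei} since $v$ and $y$ lie in different twin classes at distance at least two. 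If you want to salvage your line, the missing step is essentially to prove the existence of such a $v$ with $N[v]\subseteq N[S]$ and use it in place of your $w$.
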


\begin{proof}
	Let $x,x'\in S$, $y\in N(S)$ an adjacent singleton class and suppose that $\vw(x)=\vw(y)$.
	The set $N(S)\setminus N[y]$ cannot be empty, since otherwise one could extend $\{y\}$ to an mds $D$ such that $D\cap N(y)=\emptyset$, but $D-y+x$ does not dominate $x'$, which contradicts $\vw(x)=\vw(y)$.
	Furthermore, there must be a vertex $v\in N(S)\setminus N[y]$ with $N[v]\subseteq N[S]$ (otherwise $\{y\}\cup (V\setminus N[S])$ is a dominating set, which contains an mds $D$ with $y\in D$, but $D-y+x$ does not dominate $x'$).

	So, extend $\{v,y\}$ to an mds $D$ such that $D\cap N(\{v,y\})=\emptyset$.
	Since $D'\coloneqq D-y+x$ has weight $t$ and thus is an mds, $pn(v,D')=S\setminus \{x\}$.
	It follows that $D'-v+y$ is also an mds, which implies $\vw(v)=\vw(y)$, a contradiction to Lemma~\ref{lem:verschiedenegewichteabstandzwei}.
\end{proof}

The proof of Lemma~\ref{lem:SSCundsingleton} suggests that if $v$ and $y$ lie in one twin class, $x$ and $y$ could have the same weight.
As the next Lemma shows this is indeed possible, but only in a specific situation.

\begin{lemma} \label{lem:SSCmitgleichemgewicht}
Let $G=(V,E)$ be an equidominating graph with equidominating structure $(\vw,t)$ and let $S_1,S_2\subseteq V$ be two adjacent stable set classes with two vertices $x\in S_1,\, y \in S_2$ of the same weight.
Then the following assertions hold:
\begin{enumerate}[label={(\roman*)}]
	\item $|S_1|=|S_2|=2$, \label{lem:SSCmitgleichemgewicht_2elements}
	\item $\vw$ is constant on $S_1\cup S_2$, \label{lem:SSCmitgleichemgewicht_const}
	\item\label{lem:SSCmitgleichemgewicht_sameneighbors} every twin class seen by $S_1$ is also seen by $S_2$ and vice versa. 
\end{enumerate}
Further, if two adjacent stable set classes of size two have the same closed neighborhood, all vertices of those stable set classes have the same weight in any equidominating structure.
\end{lemma}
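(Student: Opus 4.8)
The plan is to prove the three enumerated assertions first and the closing sentence afterwards.

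For (i) and (ii) I would use the usual ``extend and swap'' device. Extend the stable set $S_1$ to an mds $D$ with $D\cap N(S_1)=\emptyset$ (as in the construction recalled before the lemmas); since $S_1$ and $S_2$ see each other we have $S_2\subseteq N(S_1)$, hence $D\cap S_2=\emptyset$. Because $\vw(x)=\vw(y)$, the set $D-x+y$ still has weight $t$, so by the equidominating property it is an mds. Now $|(D-x+y)\cap S_1|=|S_1|-1$, and Observation~\ref{obs:keinereineroderalleinSSC} forces this value to lie in $\{0,1,|S_1|\}$; as $|S_1|\ge 2$ this gives $|S_1|=2$, and the symmetric argument (extend $S_2$, swap $y$ for $x$) gives $|S_2|=2$, which is (i). Writing $S_1=\{x,x'\}$ and $S_2=\{y,y'\}$, the mds $D-x+y$ contains exactly one vertex of $S_1$, namely $x'$, so by Observation~\ref{obs:constantaufklassen} every equidominating function is constant on $S_1$; symmetrically it is constant on $S_2$, and together with $\vw(x)=\vw(y)$ this yields (ii), with common value $a:=\vw(x)$.

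For (iii) I would argue by contradiction: suppose some twin class $T\notin\{S_1,S_2\}$ is seen by $S_1$ but not by $S_2$ (the remaining case is symmetric, as $S_1$ and $S_2$ now play symmetric roles), and fix $u\in T$, so $u$ is adjacent to $x$ and $x'$ but to neither $y$ nor $y'$. Since $dist(u,y)\ge 2$ and $u,y$ lie in different twin classes, Lemma~\ref{lem:verschiedenegewichteabstandzwei} gives $\vw(u)\ne a$. It then suffices to produce an mds $D$ with $x\in D$, $u\notin D$, and such that $D-x+u$ is again an mds, for then $\vw(u)=\vw(D-x+u)-\vw(D)+\vw(x)=a$, a contradiction. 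The natural candidate is the mds $D\supseteq S_1$ with $D\cap N(S_1)=\emptyset$ from the first part: since $x$ and $x'$ are twins, each dominates the neighborhood of the other, so $pn[x,D]=\{x\}$ and $pn[x',D]=\{x'\}$, and $u,y,y'\notin D$. Then $D-x+u$ is dominating (as $u$ covers $x$) and $x$ is a private neighbor of $u$ in $D-x+u$, so the only way for $D-x+u$ to fail to be an mds is loss of minimality at some vertex of $D\setminus\{x\}$. The delicate point --- which I expect to be the real obstacle --- is the vertex $x'$: it is adjacent to $u$, and its sole private neighbor in $D$ was $x'$ itself, so it may have no private neighbor in $D-x+u$ unless one is supplied from $N(x)\setminus N[u]$, a set that contains $y$ and $y'$ exactly because $u$ sees $S_1$ but misses all of $S_2$. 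This means one must choose the mds of $G[V\setminus N[S_1]]$ inside $D$ so that it does not dominate $y$, and must check simultaneously that every other vertex of $D\setminus S_1$ retains a private neighbor; if a one-step surgery resists this, performing first the swap $x\leftrightarrow y$ and then replacing $x'$ by $u$ should succeed by the same mechanism.

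For the final assertion, let $S_1=\{x,x'\}$ and $S_2=\{y,y'\}$ be adjacent stable set classes of size two with $N[S_1]=N[S_2]$, and write $W$ for their common set of external neighbors; then $M:=\{x,x',y,y'\}$ is a module of $G$ with external neighborhood $W$ and $G[M]\cong C_4$. The key step I would prove is an exchange property: for every $D_0\subseteq V\setminus M$ and every two-element subset $A\subseteq M$, the set $A\cup D_0$ is an mds if and only if $A'\cup D_0$ is an mds for every two-element subset $A'\subseteq M$. This follows from a direct verification: domination is equivalent in all cases (each vertex of $M$ is dominated through the module once $A\ne\emptyset$, and $W$ is dominated either by $D_0$ or, since $A\ne\emptyset$, by $A$), and for minimality one uses that a vertex of $D_0$ has a private neighbor inside $M$ if and only if $D_0\cap W\ne\emptyset$, independently of $A$, while the at most two possible private neighbors of each vertex of $A$ inside $M\cup W$ are tracked explicitly; this bookkeeping is routine but must be carried out. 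Granting the exchange property, extend $\{x\}$ to an mds $D$ with $D\cap N(x)=\emptyset$; since $N(x)=W\cup\{y,y'\}$, this excludes $y$ and $y'$, while $x'$ is isolated in $G[V\setminus N[x]]$ and is therefore forced into $D$, so $D$ has module-part exactly $\{x,x'\}$. Replacing $\{x,x'\}$ by $\{x,y\}$, then by $\{x',y\}$, then by $\{y,y'\}$ yields mds's of the same weight $t$, whence $\vw(x')=\vw(y)$, then $\vw(x)=\vw(y)$, then $\vw(y')=\vw(y)$; thus all four vertices of $S_1\cup S_2$ carry the same weight in every equidominating structure.
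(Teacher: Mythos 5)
Your treatment of (i) and (ii) is correct and essentially the paper's argument: extend $S_1$ to an mds $D$ with $D\cap N(S_1)=\emptyset$, observe that $D-x+y$ has weight $t$ and hence would violate Observation~\ref{obs:keinereineroderalleinSSC} unless $|S_1|=2$, and read off constancy on each class from Observation~\ref{obs:constantaufklassen} via an mds meeting each class in exactly one vertex. Your argument for the closing sentence is also sound and in fact more explicit than the paper's one-line appeal to mds-exchangeability: $M=S_1\cup S_2$ is a module inducing a $C_4$ in which every $2$-element subset $A$ satisfies $N[A]=M\cup W$, so domination and the private neighbors of vertices of $D_0$ are independent of $A$, and for $v\in A=\{v,v_2\}$ one computes $pn(v,A\cup D_0)=\left(N[v]\cap M\right)\setminus\left(N[v_2]\cap M\right)\neq\emptyset$ exactly when $D_0\cap W=\emptyset$, which your starting mds guarantees. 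The deferred bookkeeping does go through.

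The genuine gap is in (iii), and it is the one you flagged yourself. Your strategy is to force $\vw(u)=\vw(x)$ for $u\in T$ by exhibiting an mds $D\ni x$ such that $D-x+u$ is again an mds. But the logic only closes if such a $D$ is actually produced: if $D-x+u$ fails to be an mds, the equidominating property merely tells you $\vw(D-x+u)\neq t$, i.e.\ $\vw(u)\neq\vw(x)$, which is perfectly consistent with Lemma~\ref{lem:verschiedenegewichteabstandzwei} and yields no contradiction. For your candidate $D=S_1\cup D^*$ the vertex $x'$ has $pn(x',D)=\{x'\}$, so after inserting $u$ it needs a private neighbor in $N(x)\setminus N[\{u\}\cup D^*]$; nothing forces this set to be nonempty (e.g.\ $D^*$ may contain a vertex of a class seen by $S_2$ but not by $S_1$, which then dominates $y$ and $y'$), and "choosing $D^*$ so that it does not dominate $y$" is not always possible. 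The two-step fallback $D\to D-x+y\to D-x+y-x'+u$ has the same defect: the final set need not even dominate all of $N(x)$. The paper sidesteps single-vertex swaps entirely. Assuming, say, $S_2$ sees a vertex $v$ not seen by $S_1$, it extends the stable set $S_1\cup\{v\}$ to an mds $D^*$ with $D^*\cap N(S_1\cup\{v\})=\emptyset$ and then replaces the \emph{whole class} $S_1$ by the \emph{whole class} $S_2$: by (ii) the total weight $t$ is preserved, yet the resulting set is not an mds because $N[y]\subseteq N[y']\cup N[v]$, so $y$ is redundant. You should replace your argument for (iii) by this class-for-class swap (and its mirror image for a class seen by $S_1$ but not by $S_2$).
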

\begin{proof}
Suppose one of the stable set classes has more than two elements, say $S_1$. We extend $S_1$ to an mds $D$ with $D\cap S_2=\emptyset$. 
Then, $D-x+y$ hat weight $t$ but following Observation~\ref{obs:keinereineroderalleinSSC} it is not an mds.
So, assertion~\ref{lem:SSCmitgleichemgewicht_2elements} is shown and let $S_1=\{x,x'\}$ and $S_2=\{y,y'\}$.

Let $D^*$ be an mds of $G[V\setminus N[S_1\cup S_2]]$. Then, $D\coloneqq D^*\cup\{x,y\}$ is an mds of G (note that $y'\in pn(x,D)$ and  $x'\in pn(y,D)$).
Observation~\ref{obs:constantaufklassen} yields assertion~\ref{lem:SSCmitgleichemgewicht_const}. 

Suppose $S_2$ sees a vertex $v$ which is not seen by $S_1$.
Extend $S_1\cup\{v\}$ to an mds $D^*$ such that $D^*\cap S_2=\emptyset$.
Of course $D\coloneqq(D^*\setminus S_1) \cup S_2$ has weight $t$ but $D$ is not an mds, as $N[D]=N[D\setminus\{y\}]$.
That proves assertion~\ref{lem:SSCmitgleichemgewicht_sameneighbors}.

The last statement of the lemma follows from the fact, that the vertices of the two adjacent stable set classes of size two with identical closed neighborhood are pairwise mds-exchangeable. 
\end{proof}

As a consequence of Lemma~\ref{lem:SSCmitgleichemgewicht} there can be an arbitrarily large number of stable set classes of size two with vertices of the same weight.
Such an occurrence could be a problem when trying to achieve bounded kernels for the parameterized problems.
But the good thing is that all of those stable set classes both see each other and see the same twin classes in the remainder of the graph.
Therefore, as we will see, it is possible to reduce them to a manageable number.
For a better handling we introduce the following new term. 
For that we define $K_{2n}-ne$ ($n\in\IN$) to be the graph obtained from a complete graph on $2n$ vertices by removing $n$ disjoint edges.
\begin{definition}
	Let $G=(V,E)$ be a graph and $\mathcal{S}\subseteq V$ be a maximal subset such that $G[\mathcal{S}] \cong K_{2n}-ne$ for some $n\geq 2$ and such that each vertex of $\mathcal{S}$ is adjacent to the same vertices in $V\setminus\mathcal{S}$.
	Then $\mathcal{S}$ is called a \emph{stable set bundle}.
\end{definition}
Maximal here means that no other subset  fulfills the two conditions and properly contains $\mathcal{S}$.
Every stable set bundle contains several stable set classes of size two.
Note that a stable set bundle forms a clique class in the quotient graph.
Considered the other way around, adding a false twin to every vertex of a clique class yields a stable set bundle.
Following Lemma~\ref{lem:SSCmitgleichemgewicht} the vertices of a stable set bundle are pairwise mds-exchangeable and, therefore, every equidominating function is constant on a stable set bundle.

Now, regarding the question whether two vertices can have the same weight in an equidominating structure, the last open question is:
can vertices of a clique class or a singleton class have the same weight as its neighboring clique class or singleton class?
The answer is yes: there can be a clique, that is not a clique class, whose vertices are pairwise mds-exchangeable.
\begin{definition}
	Let $G$ be a graph and $\mathcal{C}$ an inclusion-wise maximal clique of pairwise mds-exchangeable vertices that contains at least two twin classes.
	Then $\mathcal{C}$ is called a \emph{clique bundle}.
\end{definition}

Upon first reading it seems a little bit odd to define clique bundles exactly as what we are looking for: 
pairwise mds-exchangeable vertices of possibly different twin classes.
However, the crucial thing here is that we can identify clique bundles efficiently (see Algorithm~\ref{alg:adjacentmdsexchangeable}).
We require at least two twin classes to be in a clique bundle in order that a twin class on its own is not a clique bundle and a twin class at the same time.

In a clique bundle there can be both clique classes and singleton classes but no stable set classes.
In the graph shown in Figure~\ref{fig:example_equidom_graph} the clique class $\{c_1,c_2\}$ and the singleton class $\{e\}$ form a clique bundle.
In Figure~\ref{fig:cliquebundle} you can see an equidominating graph which consists of two clique bundles.
In this graph every mds contains exactly one vertex of each clique bundle.

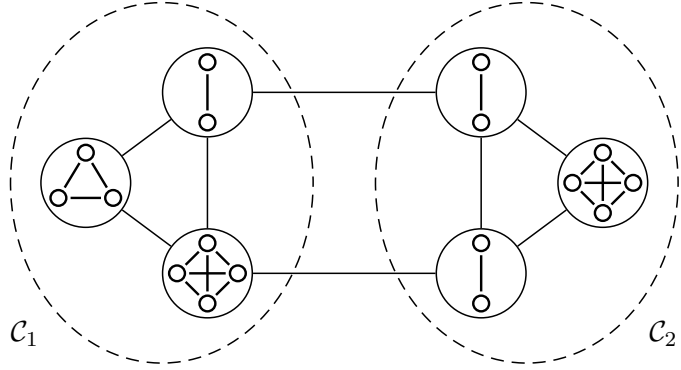
\begin{figure} 
	\centering
	\psset{unit=0.4cm,radius=0.3,nodesep=0.1,linewidth=0.08}
	\begin{pspicture}(0,0)(21,12)
		\Cnode(1.1,5.5){a11}
		\Cnode(2.9,5.5){a12}
		\Cnode(2,7){a13}
		\ncline{a11}{a12} \ncline{a11}{a13} \ncline{a13}{a12}
		
		\Cnode(6,10){a21}
		\Cnode(6,8){a22}
		\ncline{a21}{a22}
		
		\Cnode(6,4){a31}
		\Cnode(6,2){a32}
		\Cnode(5,3){a33}
		\Cnode(7,3){a34}
		\ncline{a31}{a32} \ncline{a31}{a33} \ncline{a31}{a34} \ncline{a32}{a33} \ncline{a32}{a34} \ncline{a33}{a34}
		
		\Cnode(15,10){b11}
		\Cnode(15,8){b12}
		\ncline{b11}{b12}

		\Cnode(15,4){b21}
		\Cnode(15,2){b22}
		\ncline{b21}{b22}
		
		\Cnode(19,7){b31}
		\Cnode(19,5){b32}
		\Cnode(18,6){b33}
		\Cnode(20,6){b34}
		\ncline{b31}{b32} \ncline{b31}{b33} \ncline{b31}{b34} \ncline{b32}{b33} \ncline{b32}{b34} \ncline{b33}{b34}
	
		\psset{nodesep=0,linewidth=0.05}
		
		\Cnode[radius=1.5](2,6){a1}
		\Cnode[radius=1.5](6,9){a2}
		\Cnode[radius=1.5](6,3){a3}
		\Cnode[radius=1.5](15,9){b1}
		\Cnode[radius=1.5](15,3){b2}
		\Cnode[radius=1.5](19,6){b3}			
		
		\ncline{a1}{a2}
		\ncline{a1}{a3}
		\ncline{a2}{a3}

		\ncline{b1}{b2}
		\ncline{b1}{b3}
		\ncline{b2}{b3}

		\ncline{a2}{b1}
		\ncline{a3}{b2}
		
		\psset{linestyle=dashed}
		
		\psellipse(4.5,6)(5,6)
		\psellipse(16.5,6)(5,6)
	
		\rput(0,1){$\mathcal{C}_1$}
		\rput(21,1){$\mathcal{C}_2$}
				
	\end{pspicture}
	\caption{Example of an equidominating graph consisting of the two clique bundles $\mathcal{C}_1$ and $\mathcal{C}_2$ each containing three clique-classes with two to four vertices. The clique classes are indicated by circles and an edge between the circles of two clique classes represents all edges between the vertices of the corresponding clique classes. \label{fig:cliquebundle} }

\end{figure}

We use the term \emph{bundle} to refer to either a stable set bundle or a clique bundle.
Recall that a twin class is either a stable set class, a clique class or a singleton class.
Due to the existence of bundles we introduce a sort of generalization of twin classes:
a \emph{pseudo class} is either a twin class not contained in a bundle or a stable set bundle or a clique bundle.
That is, a pseudo class is exactly one of following: a) a singleton class, b) a stable set class, c) a clique class, d) a stable set bundle or e) a clique bundle.
By this definition and the previous discussion we get the following result.

\begin{corollary}\label{cor:unique_pseudo_class_partition}
	There is a unique partition of the vertices of a graph into pseudo classes.
\end{corollary}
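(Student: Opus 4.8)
The plan is to establish the corollary directly from the structural results already proved, by showing that the collection of pseudo classes both covers every vertex and is pairwise disjoint, and that no vertex has a choice of which pseudo class to join. First I would recall that the twin partition itself is a genuine partition of $V$ (the twin relation is an equivalence relation, as noted in Section~\ref{subsec:twinrelation}). So every vertex lies in exactly one twin class, and a pseudo class is obtained either by keeping a twin class intact (when it is not contained in a bundle) or by grouping several twin classes together into a bundle. Thus the only thing to check is that the bundles are well-defined, i.e.\ each twin class contained in \emph{some} bundle is contained in a \emph{unique} bundle, and that stable set bundles and clique bundles do not overlap.

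Next I would argue uniqueness of bundles. For clique bundles: suppose a twin class $T$ is contained in two clique bundles $\mathcal{C}$ and $\mathcal{C}'$. Both are cliques of pairwise mds-exchangeable vertices containing at least two twin classes. I would show $\mathcal{C}\cup\mathcal{C}'$ is again a clique of pairwise mds-exchangeable vertices: adjacency across $\mathcal{C}$ and $\mathcal{C}'$ follows because any vertex of $\mathcal{C}'$ outside $T$ sees $T$ (twin classes see each other or not, all-or-nothing) hence sees all of $\mathcal{C}$ by the same all-or-nothing principle applied through a common vertex in $T$; similarly for the mds-exchangeability, which is inherited along the twin class $T$ acting as a ``hub''. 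By inclusion-wise maximality both $\mathcal{C}$ and $\mathcal{C}'$ must then equal $\mathcal{C}\cup\mathcal{C}'$, so $\mathcal{C}=\mathcal{C}'$. The argument for stable set bundles is analogous, using that a stable set bundle induces $K_{2n}-ne$ and that its defining conditions (the induced structure and having the same outside neighborhood) are preserved under taking the union of two such sets sharing a stable set class of size two; maximality then forces equality.

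Then I would note that a clique bundle and a stable set bundle cannot share a vertex: a clique bundle contains no stable set class (as remarked just before the corollary), while every stable set bundle is a union of stable set classes of size two, so any vertex of a stable set bundle lies in a stable set class — contradiction. Similarly, a twin class not contained in any bundle is by definition disjoint from every bundle. Combining these facts, each vertex belongs to exactly one of the five listed types of pseudo class, and that pseudo class is uniquely determined. Hence the partition into pseudo classes exists and is unique.

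The main obstacle I expect is the ``gluing'' step for bundles: one must verify carefully that the union of two bundles sharing a twin class is again a bundle of the same type, so that maximality can be invoked to deduce uniqueness. This requires using the all-or-nothing adjacency between twin classes and the transitivity-like behaviour of mds-exchangeability through a shared twin class; the rest of the proof is bookkeeping on top of the already-established fact that twin classes partition $V$. None of this is heavy, but the bundle-gluing is the place where a sloppy argument could slip.
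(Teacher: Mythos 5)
The paper offers no proof of this corollary (it is asserted to follow ``by this definition and the previous discussion''), so you are right to see that the whole content lies in showing that two distinct bundles cannot overlap, and your overall strategy --- glue two overlapping bundles into a larger object of the same type and contradict inclusion-wise maximality --- is the natural one. For stable set bundles your sketch can indeed be completed: if $\mathcal{S}$ and $\mathcal{S}'$ share a stable set class $\{a,b\}$, the condition that every vertex of $\mathcal{S}'$ has the same neighbourhood in $V\setminus\mathcal{S}'$ as $a$ (and symmetrically) forces $\mathcal{S}\cup\mathcal{S}'$ to induce $K_{2m}-me$ with uniform outside neighbourhood, and maximality applies. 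Likewise the disjointness of clique bundles from stable set bundles is fine.

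The genuine gap is in the clique-bundle gluing. You justify the adjacency of $u\in\mathcal{C}\setminus T$ and $v\in\mathcal{C}'\setminus T$ by letting the ``all-or-nothing'' principle act \emph{through} the shared class $T$; but that principle (Section~\ref{subsec:twinrelation}) only governs a single pair of twin classes, and ``sees'' is not transitive: from ``$T_u$ sees $T$'' and ``$T_v$ sees $T$'' one cannot conclude ``$T_u$ sees $T_v$''. The same objection applies to the claim that mds-exchangeability is ``inherited along the twin class $T$ acting as a hub'' --- mds-exchangeability is nowhere shown to be transitive, and in general the relation ``adjacent and mds-exchangeable'' is \emph{not} transitive: in $C_4$ with vertices $x,y,w,z$ (edges $xy,yw,wz,zx$) all four vertices are pairwise mds-exchangeable, so $x$ is adjacent to and mds-exchangeable with both $y$ and $z$, yet $y\not\sim z$. (This particular configuration is excluded from being a clique bundle only because $C_4$ decomposes into stable set classes.) So the step you flag as the place ``where a sloppy argument could slip'' is exactly where your argument slips: to close the gap you must use that the twin classes occurring in a clique bundle are clique classes or singleton classes, and derive from the definition of mds-exchangeability (e.g.\ via the private-neighbour characterization underlying Algorithm~\ref{alg:adjacentmdsexchangeable}) that for such vertices $x\sim y$, $x\sim z$ with both pairs mds-exchangeable forces $y\sim z$ and $y,z$ mds-exchangeable. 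No such argument is present, so the uniqueness of the clique-bundle blocks --- and hence the corollary --- is not yet established by your proof.
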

The introduction of pseudo classes is motivated by the following corollary, which is a summary of this section.

\begin{corollary} \label{cor:differentpseudodifferentweights}
Let $G$ be an equidominating graph with equidominating structure $(\vw,t)$ and $P_1,P_2$ be two different pseudo classes. Then for all $x\in P_1$ and $y\in P_2$ it holds that $\vw(x)\neq\vw(y)$.
\end{corollary}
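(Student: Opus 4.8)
The plan is to prove Corollary~\ref{cor:differentpseudodifferentweights} by a case analysis on the types of the two pseudo classes $P_1$ and $P_2$, reducing each case to one of the lemmas already established. First I would observe that if $x \in P_1$ and $y \in P_2$ lie in different twin classes and $\mathrm{dist}(x,y) \geq 2$, then Lemma~\ref{lem:verschiedenegewichteabstandzwei} immediately gives $\vw(x) \neq \vw(y)$. Since $x$ and $y$ belong to different pseudo classes, they in particular belong to different twin classes, so the only remaining situation to handle is when $x$ and $y$ are adjacent.

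So assume $x$ and $y$ are adjacent, and hence the twin classes containing them see each other. I would then split into cases according to whether each of $P_1, P_2$ is (a) a singleton class, (b) a stable set class, (c) a clique class, (d) a stable set bundle, or (e) a clique bundle. The case where both are stable set classes is covered by Lemma~\ref{lem:SSCmitgleichemgewicht}: if $\vw(x) = \vw(y)$ for adjacent stable set classes $S_1 \ni x$, $S_2 \ni y$, then by parts~\ref{lem:SSCmitgleichemgewicht_2elements} and~\ref{lem:SSCmitgleichemgewicht_sameneighbors} they would have size two and the same closed neighborhood, whence by Lemma~\ref{lem:SSCmitgleichemgewicht} all four vertices would be pairwise mds-exchangeable; but that makes $S_1 \cup S_2$ a stable set bundle, contradicting the assumption that $S_1, S_2$ are pseudo classes (a stable set class inside a bundle is by definition not a pseudo class). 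The case of a stable set class adjacent to a clique class is exactly the second lemma, and the case of a stable set class adjacent to a singleton class is exactly Lemma~\ref{lem:SSCundsingleton}. For a stable set bundle $P$, note that a stable set bundle is a clique class in the quotient graph; a vertex $x$ of $P$ sees the same classes as does any clique class, and for any twin class it sees, the same arguments as in the stable set class case apply, or more directly: $x$ lies in a stable set class of size two inside $P$, and if some adjacent pseudo class $P_2$ had a vertex of equal weight, one reruns the relevant lemma — the only subtlety is ensuring the stable set class of $x$ is not absorbed into a larger bundle together with $P_2$, which is again handled by the maximality in the definitions.

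The remaining cases all involve cliques (singleton, clique class, or clique bundle) adjacent to cliques or to the clique-like objects. Here the key point is that pairwise mds-exchangeability of two adjacent vertices $x, y$ from cliques forces them into a common clique bundle: if $x$ and $y$ are adjacent, of equal weight, and from clique-type pseudo classes, I would argue that they must be mds-exchangeable (via the standard swap-in-an-mds argument: extend a suitable set to an mds, swap $x$ for $y$, the weight is preserved so it is again an mds), and then the maximal clique of pairwise mds-exchangeable vertices containing both of them is a clique bundle properly containing each of $P_1, P_2$, contradicting that $P_1, P_2$ are pseudo classes. I would need to be slightly careful that "$x$ and $y$ mds-exchangeable" combined with "both in cliques" genuinely yields a single clique bundle: the set of vertices pairwise mds-exchangeable with both need not obviously be a clique, so I would instead appeal directly to the fact that the pseudo class partition of Corollary~\ref{cor:unique_pseudo_class_partition} is well defined and that $x, y$ equal-weighted and adjacent from clique-type classes would merge, contradicting distinctness of $P_1$ and $P_2$.

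The main obstacle I expect is bookkeeping: making the reduction in the clique-versus-clique cases airtight, i.e.\ showing cleanly that adjacent equal-weight vertices coming from singleton/clique classes are forced to be pairwise mds-exchangeable with the rest of each other's classes, so that they lie in a common clique bundle. The stable-set cases are essentially quotations of the three preceding lemmas, but a stable set bundle adjacent to, say, a clique bundle is a combination not literally stated in any single lemma, so I would either restate the relevant swap argument in that configuration or observe that a stable set bundle sees exactly the classes a clique class sees and invoke the clique-class lemmas. Throughout, the recurring technique is the same: given $\vw(x) = \vw(y)$ with $x \in P_1$, $y \in P_2$, extend an appropriate stable set (often $S \cup \{v\}$ for a well-chosen private-neighbor witness $v$) to an mds $D$ with $D \cap N(S) = \emptyset$ as in Observation~\ref{obs: SSC-maximal-t-knoten}, perform a weight-preserving swap, and derive that the swapped set is an mds even though it violates the private-neighbor or domination condition — the contradiction.
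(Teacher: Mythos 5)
Your proposal is correct and matches the paper's intent exactly: the paper offers no explicit proof (the corollary is presented as ``a summary of this section''), and your case analysis is precisely the intended assembly of Lemma~\ref{lem:verschiedenegewichteabstandzwei} through Lemma~\ref{lem:SSCmitgleichemgewicht} together with the definitions of stable set and clique bundles. The one step you flag as delicate --- that adjacent equal-weight vertices from clique-type pseudo classes must merge into a single clique bundle --- closes cleanly: since $\vw$ is constant on each of $P_1$ and $P_2$, Lemma~\ref{lem:verschiedenegewichteabstandzwei} forces every vertex of $P_1$ to be adjacent to every vertex of $P_2$ (otherwise two equal-weight vertices from different twin classes at distance at least two would exist), so $P_1\cup P_2$ is a clique of pairwise mds-exchangeable vertices containing at least two twin classes and hence lies inside one clique bundle, contradicting that $P_1$ and $P_2$ are distinct pseudo classes.
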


To recognize bundles we developed Algorithm~\ref{alg:adjacentmdsexchangeable}, which decides for two adjacent vertices whether or not they are mds-exchangeable.
More precisely, the algorithm checks if there can be private neighbors of one vertex in any dominating set that are not seen by the other one.
After computing the twin partition one can apply Algorithm~\ref{alg:adjacentmdsexchangeable} to adjacent clique classes and singleton classes, and to adjacent stable set classes of size two to find all clique bundles and stable set bundles, respectively.

\begin{algorithm} 
	\caption{Checking adjacent vertices for mds-exchangeability} 
	\label{alg:adjacentmdsexchangeable}
	\begin{algorithmic}[1]
		\Require{Two adjacent vertices $x,y\in V$ of a graph $G=(V,E)$}
		\Ensure{\textbf{YES}, if $x$ and $y$ are mds-exchangeable, otherwise \textbf{NO}}
		\ForAll{$(v_1,v_2)\in\{(x,y),(y,x)\}$} \Comment{Check both combinations}
			\ForAll{$v' \in N(v_1) \setminus N[v_2]$} 
				\If {$\{v_1\}\cup \Big( V(G)\setminus \big(N[v']\cup\{v_2\}\big)\Big)$ is a dominating set} 
					\State return \textbf{NO} \Comment{$x$ and $y$ are not mds-exchangeable}
				\EndIf
			\EndFor
		\EndFor
		\State return \textbf{YES}  \Comment{$x$ and $y$ are mds-exchangeable}
	\end{algorithmic}
\end{algorithm}

\begin{proposition}
Algorithm~\ref{alg:adjacentmdsexchangeable} is correct and runs in $\mathcal O(nm)$ time.
\end{proposition}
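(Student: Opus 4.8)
**Plan for proving the Proposition (correctness and $\mathcal O(nm)$ running time of Algorithm~\ref{alg:adjacentmdsexchangeable}).**

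The plan is to split the argument into a correctness part and a running-time part. For correctness, I would first unpack what the algorithm checks. Fix two adjacent vertices $x,y$. The algorithm returns \textbf{NO} precisely when there is an ordered pair $(v_1,v_2)\in\{(x,y),(y,x)\}$ and a vertex $v'\in N(v_1)\setminus N[v_2]$ such that $\{v_1\}\cup\big(V\setminus(N[v']\cup\{v_2\})\big)$ is a dominating set. I would show this condition is equivalent to the failure of mds-exchangeability of $x$ and $y$, i.e.\ to the existence of an mds $D$ with $|\{x,y\}\cap D|=1$ for which the swapped set $(D\setminus\{x,y\})\cup(\{x,y\}\setminus D)$ is \emph{not} an mds. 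Note first that since $x,y$ are adjacent, they are seen by each other, so the swapped set $D'\coloneqq D-v_1+v_2$ (where $v_1$ is the one of $x,y$ in $D$) is always dominating; hence $D'$ fails to be an mds exactly when some vertex of $D'$ has empty private neighbor set. The key structural observation is that the only vertex whose private neighborhood can shrink in passing from $D$ to $D'$ is $v_2$ itself together with $v_1$'s old private neighbors; more carefully, the obstruction is always witnessed by a vertex $v'$ whose only way to be dominated outside $v_1$ is via $v_2$, i.e.\ $v'\in N[v_1]\setminus N[v_2]$ (so $v'\in pn[v_1,D]$ forces, after the swap, that $v'$ must rely on $v_2$ and nothing else in $D'$). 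Translating "no vertex of $D'$ other than possibly via $v_2$ loses all private neighbors'' into the existence of \emph{some} dominating set of the required shape is the heart of the equivalence.

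Concretely, for the "NO is correct'' direction: if $\{v_1\}\cup\big(V\setminus(N[v']\cup\{v_2\})\big)$ is dominating, I would extract from it an mds $D$ containing $v_1$ but not $v_2$ (possible since $v'\notin N[v_2]$ means removing $v_2$ does not hurt domination of $v'$, and $v_1$ cannot be deleted because $v'$ is its private neighbor — $v'\in N[v_1]$, $v'\notin N[v']\cup\{v_2\}$-part so $v'$ is dominated only by $v_1$). Then $D-v_1+v_2$ is dominating but $v_2$'s contribution leaves $v'$ undominated — wait, more precisely $v'$ is no longer dominated at all, so $D-v_1+v_2$ is not even dominating — contradicting exchangeability directly, hence $x,y$ are not mds-exchangeable and \textbf{NO} is correct. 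For the "YES is correct'' (contrapositive) direction: suppose $x,y$ are not mds-exchangeable, witnessed by an mds $D$ with, say, $v_1\in D$, $v_2\notin D$, and $D'=D-v_1+v_2$ not an mds. Since $D'$ is dominating, some $u\in D'$ has $pn[u,D']=\emptyset$. I would argue $u=v_2$ is impossible to be the \emph{only} kind of obstruction in a way that is not detected, and that in all cases one can produce a vertex $v'\in N(v_1)\setminus N[v_2]$ (a former private neighbor of $v_1$ in $D$, which must exist and must be non-adjacent to $v_2$ for the swap to break domination/privacy) such that the set $\{v_1\}\cup\big(V\setminus(N[v']\cup\{v_2\})\big)$ — which contains $D$ minus possibly some vertices but is a \emph{superset} of a dominating set, hence dominating — triggers the \textbf{NO} branch. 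The bookkeeping of which vertex plays the role of $v'$ and verifying the set in the \textbf{if} is dominating is the main obstacle; I expect it requires care to cover the subcase where the private neighbor of $v_1$ lost in the swap is $v_1$ itself ($v_1\in pn[v_1,D]$), which is exactly the case where $N[v_1]\subseteq N[D-v_1]\cup$\ldots — here one uses adjacency of $v_1,v_2$ and picks $v'$ appropriately among $N(v_1)\setminus N[v_2]$, or notes the set $\{v_1\}\cup(V\setminus(N[v']\cup\{v_2\}))$ still works.

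For the running time: the twin partition (used beforehand to know which pairs to feed the algorithm) is computed once in linear time via modular decomposition, so I would bound only the cost of a single invocation on a pair $(x,y)$. The outer loop runs twice. For each ordered pair $(v_1,v_2)$, computing $N(v_1)\setminus N[v_2]$ takes $\mathcal O(n)$ time (or $\mathcal O(\deg(v_1))$), and for each candidate $v'$ in this set — at most $\deg(v_1)\le n$ of them — we test whether $\{v_1\}\cup\big(V\setminus(N[v']\cup\{v_2\})\big)$ is dominating. A single domination test for a vertex set can be done in $\mathcal O(n+m)$ time by marking the closed neighborhood of the set and checking all vertices are marked; building the set itself is $\mathcal O(n)$. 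Hence each $v'$ costs $\mathcal O(n+m)=\mathcal O(m)$ (assuming $G$ has no isolated vertices, or bounding by $n+m$ throughout), giving $\mathcal O(n\cdot m)$ per ordered pair and $\mathcal O(nm)$ in total. I would remark that even summed over all adjacent pairs of pseudo-class representatives the total remains polynomial, but the per-invocation bound $\mathcal O(nm)$ is what the statement asserts, so that suffices. No step here is a real obstacle; the only thing to state cleanly is the $\mathcal O(n+m)$ domination check and that set construction is subsumed by it.
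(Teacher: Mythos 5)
Your running-time analysis and your treatment of the \textbf{NO} branch essentially match the paper's proof: when the test in line~3 succeeds, the stated dominating set contains an mds $D$ with $v_1\in D$, $v_2\notin D$ and $v'\in pn[v_1,D]\setminus N[v_2]$, so $D-v_1+v_2$ fails to dominate $v'$ and the pair is not exchangeable. (The paper gets the $\mathcal O(m)$ bound for line~3 by observing that only the vertices of $N(v')\setminus\{v_1\}$ need to be checked for domination, but your $\mathcal O(n+m)$ marking argument yields the same $\mathcal O(nm)$ total.)

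The genuine gap is in the converse direction, which you yourself flag as ``the main obstacle'' and do not resolve. Two concrete problems. First, your opening claim that $D'\coloneqq D-v_1+v_2$ is ``always dominating'' because $x$ and $y$ are adjacent is false --- a private neighbor of $v_1$ outside $N[v_2]$ loses domination --- and you contradict it two sentences later; the case split you actually need is between $D'$ failing to dominate and $D'$ being dominating but non-minimal. Second, in the non-minimal case your assertion that ``in all cases one can produce a vertex $v'\in N(v_1)\setminus N[v_2]$'' is wrong: if some $u\in D\setminus\{v_1\}$ has $pn[u,D']=\emptyset$, then every $w\in pn[u,D]$ automatically avoids $N[v_1]$ (because $v_1\in D\setminus\{u\}$) and must be absorbed by $v_2$, so $w\in N(v_2)\setminus N[v_1]$; the dominating set $D'\setminus\{u\}\subseteq\{v_2\}\cup\bigl(V\setminus(N[w]\cup\{v_1\})\bigr)$ is what triggers the \textbf{NO}, i.e.\ the witness lives in the \emph{other} ordered pair $(v_2,v_1)$ of the outer loop. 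Without this observation the equivalence between reaching line~8 and mds-exchangeability is not established. (For fairness: the paper's own proof is also terse here --- it argues only that no mds $D$ with $v_1\in D$, $v_2\notin D$ has a private neighbor of $v_1$ outside $N[v_2]$, hence the swap preserves domination, and it does not spell out minimality either --- but your write-up additionally contains the false intermediate claim and explicitly leaves the key step open.)
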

\begin{proof}
First we prove the correctness.
Since the algorithm checks both combinations (cf.\ line~1), the following argumentation holds for $v_1=x,\,v_2=y$ as well as for $v_1=y,\,v_2=x$. 
If the output is NO, the stated set in line~3 contains an mds $D$ with $v'\in pn(v_1,D)$ and $v'\notin N[v_2]$.
So, $v'$ is not dominated by $D-v_1+v_2$ and thus $x$ and $y$ cannot be mds-exchangeable.
If the output is YES and the algorithm reaches line 8, there is no vertex $v'$ in $N(v_1)$, which is not adjacent (or equal) to $v_2$ and which is a private neighbor of $v_1$  in any mds that does not contain $v_2$. 
So, we cannot find an mds $D$ with $v_1\in D$,  $v_2\notin D$ and $pn(v_1,D)\cap N[v_2]\neq\emptyset$.
Further, since $v_1$ and $v_2$ are adjacent, there are mds that contain only $v_1$ and that contain only $v_2$.
Therefore, $x$ and $y$ are mds-exchangeable.

It is clear that the algorithm terminates for a finite graph.
There are $\mathcal O(n)$ vertices in $N(v_1) \setminus N[v_2]$.
For each vertex $v' \in N(v_1) \setminus N[v_2]$ it is sufficient to check if the vertices of $N(v')\setminus\{v_1\}$ are dominated since every other vertex is an element of the set stated in line~3.
With the use of two flags for each vertex and a global counter we can do this considering every edge two times. 
So, line~3 needs $\mathcal O(m)$ time which leads to a total running time of $\mathcal O(nm)$.
\end{proof}

To discover all bundles one has to apply Algorithm~\ref{alg:adjacentmdsexchangeable} for every edge, which gives us a total running time of $\mathcal O(nm^2)$ to determine the pseudo class partition.
	
\section{An XP algorithm for the \textit{k}-{\sc Equidomination} Problem} \label{sec:xpalgorithm}
In this section we describe an algorithm which decides whether a given graph is $k$-equidominating for some fixed $k\in \IN$ with a running with only $k$ appearing in the exponents.
The aim is to apply this algorithm to the constructed kernels of the parameterized problems.
The algorithm mainly follows the ideas and the algorithm for the $k$-{\sc Equistability} problem of Levit et al.~\cite{kequistable_fpt,Recog_kequistable}. 
However, it has to be extended due to the existence of clique bundles and stable set bundles. 

The basic idea of the algorithm is that by considering the pseudo classes one does not have to examine every possible weight function nor every possible subset of vertices.
Since different vertices of the same pseudo class, roughly said, play the same role regarding domination, two weight functions that differ only by switched weights for two vertices of one pseudo class, can be handled as the same.
This leads to equivalence classes of weight functions.
Further, we reduce the running time from a brute force algorithm by classifying subsets of vertices.
It does not matter, for example, which vertex of a clique class is in an mds.
Then, we have to check only one subset per class for being an mds.

\begin{theorem}
For a given $k\in\IN$ it is decidable if a graph $G=(V,E)$ with $|V|=n$ and $|E|=m$ is $k$-equidominating or not in time $\mathcal{O}\left(nm^2 + n^kk^k + n^{2k+2}k^{-k-1} + k^{3k+3}\right)$ and a $k$-equidominating structure is computed in this time.
Further, for $t\in\IN$ the same algorithm can be used to decide if $G$ is target-$t$ equidominating.
\end{theorem}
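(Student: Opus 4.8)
The plan is to run the three‑phase scheme sketched at the beginning of this section. First I would compute the pseudo class partition of $G$, which by the discussion following Algorithm~\ref{alg:adjacentmdsexchangeable} costs $\mathcal{O}(nm^2)$. The structural key is Corollary~\ref{cor:differentpseudodifferentweights}: in any $k$-equidominating structure $(\vw,t)$ the weights occurring on distinct pseudo classes form pairwise disjoint nonempty subsets of $[k]$, so $G$ has at most $k$ pseudo classes. Hence if the partition has more than $k$ classes the algorithm returns NO, and otherwise it continues with $r\le k$ pseudo classes $P_1,\dots,P_r$.

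In the second phase I would enumerate candidate weight functions up to the natural symmetry. By Observation~\ref{obs:constantaufklassen} and the remarks on bundles, a $k$-equidominating function is constant on every clique class, clique bundle, stable set bundle and singleton class, and it is constant on a stable set class too \emph{unless} no mds meets that class in exactly one vertex; in the exceptional case only the total weight of the class --- together with the weight multiset inside it, which can be enumerated --- influences whether a set is an mds or has weight $t$. A candidate is therefore described by choosing disjoint nonempty subsets of $[k]$, one per pseudo class, plus a weight multiset for each non-constant stable set class; the number of candidates is a function of $k$ times a polynomial in $n$, accounting for the $n^kk^k$ and $k^{3k+3}$ terms. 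For each candidate we set $t$ to the weight of one concrete mds, for instance the mds obtained from a maximal stable set by the extension construction of Section~\ref{sec:propertiestwinclasses}.

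The third phase verifies, for each candidate $(\vw,t)$, the two implications of the equidomination condition. For ``every mds has weight $t$'' I would use Observations~\ref{obs:keinereineroderalleinSSC} and~\ref{obs:einknotenproCC} and their analogues for bundles: an mds meets each clique class in at most one vertex and each stable set class in none, one, or all of its vertices, so every mds fits one of $n^{\mathcal{O}(k)}$ patterns over the $\le k$ pseudo classes together with a choice of representative vertices; for each representative set one checks the domination and private-neighbour conditions and computes its weight in polynomial time, and this is the source of the $n^{2k+2}k^{-k-1}$ term. We then require all these weights to equal $t$. For the converse ``every set of weight $t$ is an mds'', observe that a dominating but non-minimal set properly contains an mds, which (once the first implication holds) already has weight $t$, hence the larger set has weight strictly above $t$; so only non-dominating sets of weight $t$ remain to be excluded. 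Since a set is non-dominating exactly when it lies inside $V\setminus N[u]$ for some vertex $u$, this reduces to testing, for each of the $n$ vertices $u$, whether $t$ is a subset-sum of the weights in $V\setminus N[u]$, which a standard dynamic program handles in polynomial time. A candidate that survives both tests is output as a witnessing $k$-equidominating structure; if none survives, the answer is NO.

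The {\sc Target-$t$ Equidomination} question is handled by the very same procedure run with $k:=t$ --- legitimate because every target-$t$ equidominating graph is $t$-equidominating --- while accepting only candidates whose fixed target value equals $t$, which adds nothing to the running time. I expect the converse direction of the verification to be the main obstacle: one has to argue that ruling out \emph{all} the exponentially many non-mds sets of weight $t$ genuinely reduces to the two bounded checks above, and, for the mds-pattern enumeration, to pin down exactly which representative vertices must be inspected and to bound their number by the claimed polynomial --- this is where the more delicate part of the running-time analysis lies.
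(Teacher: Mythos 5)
Your proposal is correct in its overall architecture and matches the paper's proof in its first two phases and in the ``every mds has weight $t$'' check: the paper likewise prunes via the pseudo class partition and Corollary~\ref{cor:differentpseudodifferentweights}, enumerates weight functions up to the equivalence ``same number of vertices of each weight in each pseudo class'' (its set $\Omega$, of size $\mathcal{O}(k!\binom{n+k-1}{n})$), and then classifies subsets by the vector $x$ recording how many vertices of each weight they contain, checking a single representative of each class $\mathcal{S}_\vw(x)$ for mds-ness --- this is exactly your ``pattern'' enumeration, and the step you flag as delicate (which representatives suffice) is settled in the paper by the claim that each class $\mathcal{S}_\vw(x)$ consists entirely of mds or entirely of non-mds, which rests on the mds-exchangeability of vertices within a pseudo class. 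Where you genuinely diverge is the converse direction: the paper reuses the same enumeration, rejecting $\vw$ whenever some representative is a non-mds of weight $t_\vw$, whereas you dispose of dominating-but-non-minimal sets by the positivity-of-weights argument (they strictly contain an mds of weight $t$, hence weigh more than $t$) and of non-dominating sets by observing that such a set lies in $V\setminus N[u]$ for some $u$, reducing to $n$ pseudo-polynomial subset-sum computations. Your route is sound, fits within the claimed time bound, and has the advantage of not needing the uniformity of the classes $\mathcal{S}_\vw(x)$ for \emph{non}-mds sets (the more fragile half of the paper's uniformity claim, since swapping clique-bundle vertices of differing neighborhoods is only guaranteed to preserve mds-ness, not non-domination); the paper's route buys a single unified enumeration and a marginally simpler presentation. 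Your handling of the target-$t$ variant (run with $k\coloneqq t$ and filter candidates by target value) is exactly what the paper does.
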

\begin{proof}
We first discuss how the algorithm works and calculate its running time afterward.
There are $k^n$ different weight functions from $V$ to $[k]$ we would have to test and $2^n$ potential mds.
As we will see, we can reduce both numbers using the pseudo classes. 

\begin{algorithm}
	\caption{An XP algorithm for the \textit{k}-{\sc Equidomination} problem} 
	\label{alg:xp_algorithm}
	\begin{algorithmic}[1]
		\Require{A graph $G=(V,E)$, $k\in\IN$}
		\Ensure{a $k$-equidominating structure if $G$ is $k$-equidominating, otherwise \textbf{NO}}
		\State determine pseudo classes $P_1,\ldots P_s$ by computing the twin partition and identifying clique bundles and stable set bundles via Algorithm~\ref{alg:adjacentmdsexchangeable} \label{alg:xp_algorithm-line:pseudoclasses}
		\If {$s>k$} 
			\State return \textbf{NO}	\Comment{$G$ cannot be $k$-equidominating}
		\EndIf
		\State compute an arbitrary mds $D$
		\State compute the set $\Omega$ of weight functions to check \label{alg:xp_algorithm-line:compute_omega}
		\ForAll {$\vw\in\Omega$}
			\State compute $t_\vw=\sum_{v\in D}\vw(v)$ \label{alg:xp_algorithm-line:compute_t}
			\ForAll {$x\in X_\vw$}
				\State compute an arbitrary $S\in\mathcal{S}_\vw(x)$ \label{alg:xp_algorithm-line:compute_S}
				\If {$S$ is an mds}
					\If {$\sum_{i=1}^k ix_i\neq t_\vw$} \label{alg:xp_algorithm-line:sum1}
						\State \textbf{next} $\vw$	\Comment{there is an mds of weight unequal to $t$}
					\EndIf
				\ElsIf {$\sum_{i=1}^k ix_i= t_\vw$}	\label{alg:xp_algorithm-line:sum2}
						\State \textbf{next} $\vw$	\Comment{there is a set of weight $t$ which is not an mds}
				\EndIf
				\State return $(\vw,t_\vw)$	\Comment{$\vw$ is a $k$-equidominating function}
			\EndFor
		\EndFor
		\State return \textbf{NO}	\Comment{no $k$-equidominating function was found}
	\end{algorithmic}
\end{algorithm}

Let $(P_1,\ldots,P_s)$ be the partition of $V$ into pseudo classes (note that $s\leq k$, otherwise $G$ is not $k$-equidominating).
We define an equivalence relation on the set of weight functions as follows: 
\[\vw_1 \sim \vw_2 :\Longleftrightarrow \forall j \in [s] ~ \forall i \in [k]\colon  |\vw_1^{-1}(i)\cap P_j|=|\vw_2^{-1}(i)\cap P_j|\,.\]
In simple terms two weight functions are equivalent if they assign each weight to the same number of vertices within every pseudo class.
It is easy to verify that either all functions of an equivalence class are $k$-equidominating functions or none.
That means we only have to check one representative per equivalence class.

Following Corollary~\ref{cor:differentpseudodifferentweights} it is a necessary condition for every equidominating function that elements of different pseudo classes have different weights.
Let $(v_1,\ldots,v_n)$ be an fixed ordering of $V$ in which vertices of the same pseudo class appear sequentially.
We define $\Omega$ to be the set of all weight functions $\vw \colon V \rightarrow [k]$ for which the weights also appear consecutively in the weight vector $(\vw(v_1),\ldots,\vw(v_n))\in [k]^n$ and which do not allocate same weights to different pseudo classes.
Now it is sufficient to check every function of $\Omega$ to decide whether $G$ is $k$-equidominating since every $k$-equidominating function not in $\Omega$ has an equivalent weight function in $\Omega$. 

In a second step we have to check for each $\vw\in\Omega$ if it is a $k$-equidominating function.
For that, let $\vw\in\Omega$, $D\subseteq V$ be an arbitrary mds and $t_\vw\coloneqq\vw(D)$.
We define $X_\vw\coloneqq\{x\in\IZ^k\mid 0\leq x_i\leq \vert \vw^{-1}(i)\vert\}$. 
Every vector $x\in X_\vw$ encodes a set of subsets $\mathcal{S}_\vw(x)\subseteq \mathcal{P}(V)$ (with $\mathcal{P}(V)$ being the power set of $V$) in the following sense:
The value $x_i$ of the $i$-th coordinate of $x$ is equal to the vertices of weight $i$ being in a subset, i.e.\ $\mathcal{S}_\vw(x)=\{S\subseteq V\mid \vert S\cap\vw^{-1}(i)\vert= x_i,\, i=1,\ldots,k\}$.
Since only vertices within a pseudo class have equal weights and since by construction it is regardless for being an mds which vertices of a pseudo class are in a subset of $V$, either all elements of $\mathcal{S}_\vw(x)$ are mds or none.
(In fact, implicitly we have defined an equivalent relation on $\mathcal{P}(V)$ and the $\mathcal{S}_\vw(x)$ are the equivalence classes.)
The last thing to do is to check for each $x\in X_w$ whether $x$ encodes mds if and only if $\sum_{i=1}^k ix_i=t_\vw$. In conclusion we achieve Algorithm~\ref{alg:xp_algorithm}.

We now calculate the running time of the above described algorithm.
The determination of the pseudo classes in line~\ref{alg:xp_algorithm-line:pseudoclasses} needs $\mathcal O(n+m+nm^2)$ time.
An arbitrary mds $D$ can be calculated straightforward in time $\mathcal O(n+m)$. 
A weight function can be described by its weight vector regarding the fixed ordering.
A basic, combinatorial result says that there are $\binom{n+k-1}{n}$ possibilities to divide the fixed ordered vertices $(v_1,\ldots,v_n)$ into $k$ (possibly empty) intervals, whereat an interval only contains consequently in the ordering appearing vertices.
There are $k!$ ways to distribute $k$ weights one-to-one to the intervals, which leads to $|\Omega|\leq k!\binom{n+k-1}{n}$.
With $\hat{n}\coloneqq\max\{n,k^2\}$ we can further estimate the number of potential $k$-equidominating functions to be $|\Omega|=\mathcal O(\hat{n}^k/k)$ and $\Omega$ can be computed in time $\mathcal O((k\hat{n})^k)$ (for more details see \cite{kequistable_fpt}). The lines~\ref{alg:xp_algorithm-line:compute_t} and \ref{alg:xp_algorithm-line:compute_S} can be executed in time $\mathcal O(n)$.
The numbers of vectors in $X_\vw$ is $\Pi_{i=1}^k(|\vw^{-1}(i)|+1)$, which is bounded by $(n/k+1)^k=\mathcal O((\hat{n}/k)^k)$ (see Lemma~4 in \cite{kequistable_fpt}).
The sums in line~\ref{alg:xp_algorithm-line:sum1} and \ref{alg:xp_algorithm-line:sum2} are calculated in time $\mathcal O(k)$.
Finally, it takes time $\mathcal O(n^2)$ to check if a subset of $V$ is an mds. 

Taken together we obtain a total running time of Algorithm~\ref{alg:xp_algorithm} of
\begin{align*}
	 & \mathcal{O}\left( n + m + nm^2 + (\hat{n}k)^k + \frac{\hat{n}^k}{k}\left(n+ \left(\frac{\hat{n}}{k}\right)^k\left(n+n^2+k\right) \right) \right) \\
	 & \hspace{2em} = \mathcal{O}\left(nm^2 + \left(\hat{n}k\right)^k + \frac{\hat{n}^{2k}n^2}{k^{k+1}}+\frac{\hat{n}^{2k}}{k^{k}}\right) \\
	 & \hspace{2em} = \mathcal{O}\left(nm^2 + n^kk^k + \frac{n^{2k+2}}{k^{k+1}}+\frac{n^{2k}k}{k^{k+1}} +
																	k^{3k} + \frac{k^{4k}n^2}{k^{k+1}}+\frac{k^{4k}}{k^{k}}\right) \\
	 & \hspace{2em} = \mathcal{O}\left(nm^2 + n^kk^k + n^{2k+2}k^{-k-1} + k^{3k+3}\right).																	
\end{align*}

	To check if a graph is target-$t$ equidominating, we simply set $k\coloneqq t$ and instead of defining $t_\vw\coloneqq\vw(D)$ in line~\ref{alg:xp_algorithm-line:compute_t}, we proceed only with those weight functions $\vw\in\Omega$ for which $\vw(D)=t$ holds.
\end{proof}

Basically, the algorithm considers all subsets of vertices.
Therefore it can be easily modified to provide the domination number and the upper domination number as well as mds of minimum and maximum cardinality.

\section{Reduction Rules} \label{sec:reduction}

In the following we examine three reduction rules which we use to construct (generalized) kernels of the \textsc{Target}-$t$ \textsc{Equidomination} problem as well as the $k$-\textsc{Equidomination} problem.
A graph is called \emph{ target-$\boldsymbol{t}$ $\boldsymbol{k}$-equidominating} if there is a $k$-equidominating structure with target value $t$.
Note that this is stronger than being $k$-equidominating and target-$t$ equidominating. 

\begin{lemma}\label{lem:reductions_general}
	Let $G$ be a graph, $r,\,k\in \IN$ and $M\subseteq V(G)$ a subset of pairwise mds-exchangeable vertices with $|M|>r$.
	Furthermore, let $G'$ be the graph obtained from $G$ by deleting all but $r$ vertices of $M$.
	If  
	\begin{enumerate}[label={(\roman*)},ref={\thetheorem(\roman*)}]
		\item\label{lem:reductions_general:Vor:hochstens-einer}  
			$|D\cap M|\leq r$ for every mds $D\subseteq V(G)$ of $G$,
		\item\label{lem:reductions_general:Vor:austauschbar}
			the vertices of $M\cap V(G'$) are pairwise mds-exchangeable, and
		\item\label{lem:reductions_general:Vor:dom-set} 
			every dominating set $D\subseteq V(G')$ of $G'$ is a dominating set of $G$,
	\end{enumerate}
	then the following equivalence holds for all $t\leq r$:
	\[G \text{ is target-}t \ k\text{-equidominating} \Longleftrightarrow G' \text{ is target-}t \ k\text{-equidominating}.\]
\end{lemma}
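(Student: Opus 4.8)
Write $R := V(G)\setminus M$ and $M' := M\cap V(G')$, so that $V(G') = R\cup M'$ with $R\cap M=\emptyset$ and $|M'| = r < |M|$. Before handling the two implications I would establish two auxiliary facts. The first is a \emph{transfer principle}: for every $D\subseteq V(G')$, the set $D$ is an mds of $G$ if and only if it is an mds of $G'$. Domination passes from $G$ to $G'$ because $G'$ is an induced subgraph, and from $G'$ to $G$ by condition~(iii); for minimality one argues in contrapositive form — if $D-v$ dominates $G'$ for some $v\in D$, then by~(iii) it dominates $G$ as well, and if $D-v$ dominates $G$ it trivially dominates the induced subgraph $G'$ — so $D$ is minimal in $G$ exactly when it is minimal in $G'$. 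The second is a \emph{reshuffling principle}: if $D$ is an mds of $G$ with $|D\cap M| = j$, then $(D\setminus M)\cup M''$ is an mds of $G$ for every $j$-element subset $M''\subseteq M$; this follows by iterating the definition of mds-exchangeability (a single swap of a vertex of $D\cap M$ for a vertex of $M\setminus D$ keeps the set an mds, since the two vertices lie in $M$ and exactly one of them is in the set, and every $j$-subset of $M$ is reachable from $D\cap M$ by such swaps, with every intermediate set again an mds).

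For the direction ``$G$ target-$t$ $k$-equidominating $\Rightarrow$ $G'$ target-$t$ $k$-equidominating'' I would merely restrict the structure. Given a $k$-equidominating structure $(\vw,t)$ of $G$, set $\vw' := \vw|_{V(G')}$, which still takes values in $[k]$. By the transfer principle, for $D\subseteq V(G')$ the set $D$ is an mds of $G'$ iff it is an mds of $G$ iff $\vw'(D) = \vw(D) = t$; hence $(\vw',t)$ is a $k$-equidominating structure of $G'$ with target $t$. Only condition~(iii) is used in this direction.

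For the converse I would extend the structure. Given a $k$-equidominating structure $(\vw',t)$ of $G'$, condition~(ii) tells us that the vertices of $M'$ are pairwise mds-exchangeable in the equidominating graph $G'$, so they all receive a common weight $c\in[k]$ (recall that mds-exchangeable vertices of an equidominating graph have equal weight in every equidominating function); extend $\vw'$ to $\vw\colon V(G)\to[k]$ by putting $\vw(v) := c$ for all $v\in M\setminus M'$, so that $\vw$ is constant $c$ on $M$ and agrees with $\vw'$ elsewhere. The key observation is that whenever $D\subseteq V(G)$ satisfies $|D\cap M|\le r$, one may pick $M''_D\subseteq M'$ with $|M''_D| = |D\cap M|$ and form $\widehat D := (D\setminus M)\cup M''_D\subseteq V(G')$; then $\vw(\widehat D) = \vw'(\widehat D) = \vw(D)$ because $\vw$ is constant on $M$, and combining the reshuffling and transfer principles shows that $D$ is an mds of $G$ iff $\widehat D$ is an mds of $G'$. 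It remains to check that the relevant sets satisfy $|D\cap M|\le r$: if $D$ is an mds of $G$ this is precisely condition~(i), while if $\vw(D) = t$ then $|D|\le\vw(D) = t\le r$ since all weights are at least $1$. In both cases the observation applies and yields the equivalence ``$D$ is an mds of $G$ $\iff\vw(D) = t$'', so $(\vw,t)$ is a $k$-equidominating structure of $G$ with target $t$.

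The step I expect to need the most care is the minimality half of the transfer principle. Deleting the vertices of $M\setminus M'$ can in principle destroy private neighbours of vertices of a set $D\subseteq V(G')$, so a direct private-neighbour argument does not go through; the remedy is to phrase minimality purely in terms of domination — $D$ is minimal iff $D-v$ fails to dominate for every $v\in D$ — and to invoke condition~(iii) to move (non-)domination between $G'$ and $G$. Once the transfer and reshuffling principles are in place, the remaining work is the bookkeeping described above, in particular pinning down exactly where conditions~(i), (ii), (iii) and the hypothesis $t\le r$ enter.
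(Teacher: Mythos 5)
Your proposal is correct and follows essentially the same route as the paper's proof: both first establish that a subset of $V(G')$ is an mds of $G$ if and only if it is an mds of $G'$ (your ``transfer principle''), restrict the weight function for the forward direction, and for the converse extend it by the common weight on $M'$ and move sets into $V(G')$ by mds-exchange swaps (your ``reshuffling principle'' is exactly the paper's one-to-one exchange of $X\setminus V(G')$ with $M'\setminus X$). The only cosmetic difference is that you argue minimality in the transfer step via single-vertex removals while the paper argues via proper subsets; these are equivalent.
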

\begin{proof}
	To start, we claim that a subset $D\subseteq V(G')$ is an mds of $G'$ if and only if $D$ is an mds of $G$.
	For that, let $D\subseteq V(G')$.
	
	First, let $D$ be an mds of $G'$.
	By assumption~\hyperref[lem:reductions_general:Vor:dom-set]{(iii)}, $D$ is a dominating set of $G$.
	Suppose that $D$ is not minimal in $G$.
	Then there is a proper subset $\widetilde{D}\subsetneq D$ that is a dominating set of $G$.
	Since $G'$ is an induced subgraph of $G$, $\widetilde{D}$ is a dominating set of $G'$, a contradiction.
	
	Secondly, let $D$ be an mds of $G$. 
	Again it follows that $D$ is a dominating set of $G'$.
	Suppose $D$ is not minimal in $G'$.
	Then there is a proper subset $\widetilde{D}\subsetneq D$ that is an mds of $G'$.
	It follows from the preceding paragraph that $\widetilde{D}$ is also an mds of $G$.
	This contradicts the minimality of $D$ in $G$ and the claim is proved.	
	
	Now, we prove the equivalence stated in the lemma.
	Let $t\leq r$ and $M'\coloneqq M\cap V(G')$.

	\uline{$\Longrightarrow$:} Let $G$ be target-$t$ $k$-equidominating and $(\vw,t)$ be a $k$-equidominating structure of $G$.
	Any subset of $V(G')$ is an mds of $G'$ if and only if it is an mds of $G$, which in turn is the case if and only if it has total weight $t$.	
	It follows that $(\vw',t)$ with $\vw'\coloneqq\vw\big|_{V(G')}$ is a $k$-equidominating structure of $G'$ with target value $t$.

	\uline{$\Longleftarrow$:} Let $G'$ be target-$t$ $k$-equidominating and $(\vw',t)$ be a $k$-equidominating structure of $G'$.
	We define 
	\[\vw(v)\coloneqq \begin{cases} \vw'(v), & \text{if }v \in  V(G'),\\
																	\vw'(w), & \text{otherwise}, \end{cases}\]
	for any $w \in  M\cap V(G')$. 
	Since every equidominating function is constant on a set of pairwise mds-exchangeable vertices, the choice of $w$ is irrelevant for the definition of $\vw$.
	We show that $(\vw,t)$ is a $k$-equidominating structure of $G$.
	It is clear that $\vw(v)\leq k$ for every $v\in V(G)$.

	Let $X\subseteq V(G)$ be a subset of vertices of $G$ with $\vw(X)=t$. 
	As $\vw(X\cap M)\leq\vw(X)=t$, we get $|X\cap M|\leq t$.
	Therefore, we can exchange vertices of $X\setminus V(G')$, if any, with vertices of $M'\setminus X$ (one to one) to obtain a subset $\widetilde{X}\subseteq V(G')$ with $\vw'(\widetilde{X})=\vw(\widetilde{X})=t$.
	It follows that $\widetilde{X}$ is an mds of $G'$ and thus of $G$, too.
	To construct $\widetilde{X}$, we only exchanged vertices with each other that are mds-exchangeable.
	It follows that $X$ is an mds of $G$.

	Now, let $D\subseteq V(G)$ be an mds of $G$.
	By assumption~\hyperref[lem:reductions_general:Vor:hochstens-einer]{(i)}, we have $|D\cap M|\leq r$.
	As before, we can exchange vertices of $D\setminus V(G')$ with vertices of $M'\setminus D$ to get a subset $\widetilde{D}\subseteq V(G')$ that is an mds of $G$ and thus of $G'$.
	This means $\vw(\widetilde{D})=t$ and since $\vw$ is constant on $M$, we get $\vw(D)=t$.
\end{proof}

The first rule is about reducing a clique class to a certain number $r\in\IN$ of vertices.
\begin{center}
\parbox{0.9\columnwidth}{ \emph{$\boldsymbol{r}$-Clique Class Reduction}: If a clique class $C$ contains more than $r$ vertices, delete all but $r$ vertices from $C$.
}
\end{center}

\begin{lemma} \label{lem:cliqueclassreduction}
Let $G$ be a graph, $r,\,k\in \IN$ and $C\subseteq V(G)$ a clique class with $|C|>r$.
Furthermore, let $G'$ be the graph obtained from $G$ by applying the $r$-Clique Class Reduction rule with respect to $C$.
Then for all $t\leq r$, the graph $G$ is target-$t$ $k$-equidominating if and only if $G'$ is target-$t$ $k$-equidominating.
\end{lemma}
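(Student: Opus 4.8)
The plan is to reduce this to Lemma~\ref{lem:reductions_general} applied with $M \coloneqq C$; since $G'$ is exactly the graph obtained from $G$ by deleting all but $r$ vertices of $M$, the claimed equivalence for all $t \leq r$ follows at once, provided we check the three hypotheses of that lemma. Note first that $r \geq 1$ (as $r \in \IN$) and $|C| \geq 2$ with $|C| > r$, so at least one vertex $c \in C$ survives in $G'$; moreover all vertices of $C$, being true twins, have the same closed neighborhood $N_G[c]$ in $G$.

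For hypothesis~\hyperref[lem:reductions_general:Vor:hochstens-einer]{(i)}, Observation~\ref{obs:einknotenproCC} gives $|D \cap C| \leq 1 \leq r$ for every mds $D$ of $G$. For hypothesis~\hyperref[lem:reductions_general:Vor:austauschbar]{(ii)}, I would observe that passing to the induced subgraph $G'$ preserves the twin relation among the surviving vertices of $C$: for $u, u' \in C \cap V(G')$ one has $N_{G'}(u)\setminus\{u'\} = (N_G(u)\setminus\{u'\})\cap V(G') = (N_G(u')\setminus\{u\})\cap V(G') = N_{G'}(u')\setminus\{u\}$, and they remain adjacent. Hence $C \cap V(G')$ consists of pairwise true twins of $G'$, which are pairwise mds-exchangeable (if $r=1$ the condition is vacuous; if $r\geq 2$ they form a clique class of $G'$ and Observation~\ref{obs:constantaufklassen} applies).

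For hypothesis~\hyperref[lem:reductions_general:Vor:dom-set]{(iii)}, let $D \subseteq V(G')$ be a dominating set of $G'$. Every vertex of $V(G')$ is dominated by $D$ in $G'$, hence in $G$ (as $G'$ is an induced subgraph). It remains to dominate the deleted vertices: let $c' \in C \setminus V(G')$. Since $c \in V(G')$ is dominated in $G'$, there is $d \in D$ with $d \in N_{G'}[c] \subseteq N_G[c]$, and because $c, c'$ are true twins in $G$ we have $N_G[c] = N_G[c']$, so $d \in N_G[c']$ and $c'$ is dominated. Thus all three hypotheses hold and Lemma~\ref{lem:reductions_general} gives the equivalence for all $t \leq r$.

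I do not expect a genuine obstacle here: every step is immediate from the definitions and from the observations of Section~\ref{sec:propertiestwinclasses}. The only point that warrants a moment's care is hypothesis~(iii), where one must use that the vertices of a clique class are true twins with a common closed neighborhood, so that any surviving vertex $c$ of $C$ automatically "certifies" the domination of every deleted vertex $c'$.
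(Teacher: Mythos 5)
Your proposal is correct and follows essentially the same route as the paper: both verify the three hypotheses of Lemma~\ref{lem:reductions_general} with $M=C$, using Observation~\ref{obs:einknotenproCC} for condition~(i), the fact that the surviving vertices of $C$ remain a set of pairwise mds-exchangeable true twins for condition~(ii), and the observation that whatever dominates a surviving vertex of $C$ also dominates the deleted ones (since all vertices of a clique class share the same closed neighborhood) for condition~(iii). Your write-up is merely more explicit than the paper's, which compresses these checks into a few lines.
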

\begin{proof}
	The vertices of a clique class are pairwise mds-exchangeable, both in $G$ and $G'$.
	Further, we know that $|D\cap C|\leq 1$ for every mds $D\subseteq V(G)$ of $G$, by Observation~\ref{obs:einknotenproCC}.
	We define $C'\coloneqq C\cap V(G')$.
	Let $D\subseteq V(G')$ be a dominating set of $G'$.
	The vertex (or vertices) of $D$ that dominates $C'$ also dominates the vertices of $C\setminus C'$ in $G$.
	So, $D$ is a dominating set of $G$ and we can apply Lemma~\ref{lem:reductions_general}. 
	This finishes the proof. 
\end{proof}

The next rule is about the previous defined stable set bundles.
As seen before there can be arbitrarily large stable set bundles in an equidominating graph.
However, we can reduce them to a suitable size.
Again a positive integer $r\in\IN$ specifies the reduction rule.
\begin{center}
\parbox{0.9\columnwidth}{ \emph{$\boldsymbol{r}$-Stable Set Bundle Reduction}: If a stable set bundle $\mathcal{S}$ contains more than $r$ stable set classes, delete all but $r$ stable set classes of $\mathcal{S}$.
}
\end{center}

The following lemma shows that the $r$-Stable Set Bundle Reduction rule can be used to construct kernels for the parameterized problems.

\begin{lemma} \label{lem:stablesetbundlereduction}
	Let $G$ be a graph, $r,\,k\in \IN$ and $\mathcal{S}\subseteq V(G)$ a stable set bundle containing more than $r$ stable set classes.
	Further, let $G'$ be the graph obtained from $G$ by applying the $r$-Stable Set Bundle Reduction rule with respect to $\mathcal{S}$.
	Then for all $t\leq 2r$, the graph $G$ is target-$t$ $k$-equidominating if and only if $G'$ is target-$t$ $k$-equidominating.
\end{lemma}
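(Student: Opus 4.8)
The plan is to derive this lemma from the general reduction result, Lemma~\ref{lem:reductions_general}, applied with $M\coloneqq\mathcal{S}$. Since every stable set class of a stable set bundle has exactly two vertices, applying the $r$-Stable Set Bundle Reduction (deleting all but $r$ of the stable set classes of $\mathcal{S}$) is the same as deleting all but $2r$ vertices of $M$, namely the $2r$ vertices of the $r$ retained stable set classes. Hence $G'$ is exactly the graph produced by Lemma~\ref{lem:reductions_general} from $G$ and $M$ when the parameter ``$r$'' of that lemma is instantiated to $2r$, and its conclusion then yields the claimed equivalence for all $t\le 2r$. It therefore remains to verify the three hypotheses \ref{lem:reductions_general:Vor:hochstens-einer}, \ref{lem:reductions_general:Vor:austauschbar} and \ref{lem:reductions_general:Vor:dom-set} of Lemma~\ref{lem:reductions_general} with ``$r$'' replaced by $2r$; note $|\mathcal{S}|>2r$ as required, and we may assume $r\ge1$ (for $r=0$ the rule just deletes the whole bundle and is not used).

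The real content is hypothesis~\ref{lem:reductions_general:Vor:hochstens-einer}: I would show that every mds $D$ of $G$ satisfies $|D\cap\mathcal{S}|\le 2\le 2r$. Write $\mathcal{S}=\{x_1,x_1',\ldots,x_n,x_n'\}$ with missing edges $x_ix_i'$, and let $W$ be the set of vertices outside $\mathcal{S}$ seen by $\mathcal{S}$, so by the definition of a stable set bundle $N[x_i]=(\mathcal{S}\setminus\{x_i'\})\cup W$ for every $i$. Suppose for contradiction that some mds $D$ contains three distinct vertices $a,b,c\in\mathcal{S}$. Each of $b,c$ lies in $\mathcal{S}$ and hence dominates all of $W$; moreover the partner map on $\mathcal{S}$ is a bijection, so $b'\neq c'$ and $(\mathcal{S}\setminus\{b'\})\cup(\mathcal{S}\setminus\{c'\})=\mathcal{S}$. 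Consequently $N[\{b,c\}]\supseteq\mathcal{S}\cup W\supseteq N[a]$, and since $b,c\in D\setminus\{a\}$ we get $pn[a,D]=N[a]\setminus N[D\setminus\{a\}]=\emptyset$, contradicting the minimality of $D$. This argument is where the precise structure $K_{2n}-ne$ of a stable set bundle is used, and I expect it to be the main obstacle of the proof; the other two hypotheses are essentially bookkeeping.

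For hypothesis~\ref{lem:reductions_general:Vor:austauschbar}, the surviving set $\mathcal{S}'\coloneqq\mathcal{S}\cap V(G')$ still consists of $r$ size-two stable set classes with a common outside neighbourhood, hence is again a stable set bundle (if $r\ge2$) or a single stable set class (if $r=1$); since the fact that the vertices of a stable set bundle are pairwise mds-exchangeable (via Lemma~\ref{lem:SSCmitgleichemgewicht}) uses only this local structure and not maximality in the surrounding graph, the vertices of $\mathcal{S}'$ are pairwise mds-exchangeable in $G'$ — this small point about maximality is the only subtlety here. For hypothesis~\ref{lem:reductions_general:Vor:dom-set}, let $D\subseteq V(G')$ be a dominating set of $G'$ and let $u\in\mathcal{S}\setminus\mathcal{S}'$ be a deleted vertex; since whole classes are deleted, the partner of $u$ is deleted too, so the only non-neighbour of $u$ inside $\mathcal{S}$ is not in $V(G')$. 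Pick any $x\in\mathcal{S}'$ (possible as $r\ge1$). If $x\in D$ then $x\sim u$ (different classes) and $u\in N[D]$; otherwise some $d\in D$ with $d\sim x$ exists, and either $d\in\mathcal{S}$, in which case $d\in V(G')$ lies in a class different from that of $u$ and hence $d\sim u$, or $d\notin\mathcal{S}$, in which case $d$ has a neighbour in $\mathcal{S}$ and thus, by the common-neighbourhood property of the bundle, is adjacent to all of $\mathcal{S}$, in particular to $u$. In every case $u\in N[D]$, so $D$ dominates $G$. With all three hypotheses established, Lemma~\ref{lem:reductions_general} completes the proof.
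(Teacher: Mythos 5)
For $r\geq 2$ your argument is essentially the paper's: both reduce to Lemma~\ref{lem:reductions_general} with $M=\mathcal{S}$, and your verifications of hypotheses (i) and (iii) (which the paper dismisses as ``easy to see'' and ``analogous to Lemma~\ref{lem:cliqueclassreduction}'') are correct and welcome detail. However, there is a genuine gap in the case $r=1$, which is precisely the case the paper singles out and handles by a separate direct argument. When $r=1$, the surviving set $\mathcal{S}'=\mathcal{S}\cap V(G')$ is a single stable set class of size two, and your claim that its two vertices are pairwise mds-exchangeable in $G'$ ``via Lemma~\ref{lem:SSCmitgleichemgewicht}'' does not go through: that lemma concerns \emph{two adjacent} stable set classes, and a lone stable set class is exactly the situation in which mds-exchangeability can fail. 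Indeed, the definition of mds-exchangeability requires the \emph{existence} of an mds containing exactly one of the two vertices; for a stable set class $\{x,x'\}$ whose only neighbours are the common outside set $W$, no such mds exists when $W=\emptyset$ (and the paper's own Figure~1 exhibits a stable set class $\{s_1,s_2\}$ that is explicitly not mds-exchangeable for this reason). So hypothesis~\ref{lem:reductions_general:Vor:austauschbar} of Lemma~\ref{lem:reductions_general} is not established for $r=1$, and you cannot invoke that lemma there. This is not merely cosmetic: the backward direction of Lemma~\ref{lem:reductions_general} uses constancy of $\vw'$ on $M\cap V(G')$ and the ability to swap vertices of $M'$ inside subsets of weight $t$, both of which rest on that hypothesis.

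The fix is what the paper does: treat $r=1$ separately. There $t\in\{1,2\}$, and one argues directly that $G$ and $G'$ are simultaneously target-$1$ (never, unless complete) and target-$2$ $k$-equidominating (exactly when the only other pseudo class is an adjacent singleton or clique class), without appealing to Lemma~\ref{lem:reductions_general} at all. With that case added, your proof is complete; your observation that for $r\geq 2$ at least two classes survive, so $\mathcal{S}'$ is again a stable set bundle and Lemma~\ref{lem:SSCmitgleichemgewicht} applies, is exactly the point the paper makes as well.
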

\begin{proof}
	Note that there are $2r$ vertices in $\mathcal{S}\cap V(G')$.
	First, we consider the case $r=1$.
	In this case, $\mathcal{S}$ becomes a stable set class in $G'$.
	Nevertheless, the lemma holds. 
	There are only the two possible values for $t$: $t=1$ or $t=2$.
	Neither $G$ nor $G'$ are target-1 $k$-equidominating as the complete graphs $K_n$ ($n\in\IN$) are the only target-1 $k$-equidominating graphs (with equidominating structure $(\vw\equiv 1,1)$).
	
	For $t=2$, there can exist only one more pseudo class besides $\mathcal{S}$, otherwise $G$ and $G'$ are not target-2 $k$-equidominating, by Corollary~\ref{cor:differentpseudodifferentweights}.
	If this second pseudo class is an adjacent singleton class or an adjacent clique class, then both $G$ and $G'$ are target-2 $k$-equidominating.
	In the other cases neither $G$ nor $G'$ are target-2 $k$-equidominating since then an mds exists with more than two vertices.
	
	Secondly, let $r\geq 2$.
	Again, we show that all conditions of Lemma~\ref{lem:reductions_general} are met.
	Following Lemma~\ref{lem:SSCmitgleichemgewicht}, the vertices of $\mathcal{S}$ are pairwise mds-exchangeable.
	The same holds for the vertices of $\mathcal{S}\cap V(G')$ since at least two stable set classes of $\mathcal{S}$ remain in $G'$.
	It is easy to see that $|D\cap\mathcal{S}|\leq 2$ for every mds $D\subseteq V(G)$ of $G$.
	Finally, we can show, analogously to the proof of Lemma~\ref{lem:cliqueclassreduction}, that every mds of $G'$ is a dominating set of $G$.
\end{proof}

The last reduction rules considers clique bundles.
As we already know, the vertices of a clique bundle can have different neighborhoods. 
Thus, even though we can bound the number of pseudo classes of a graph, there can be a large number of twin classes (more precisely, singleton classes and clique classes) in a clique bundle.
In the case of more than one clique bundle being in a graph, there can be arbitrarily many distinct neighborhoods, and consequently, arbitrarily many twin classes in a clique bundle.
For example, the graph shown in Figure~\ref{fig:cliquebundle} on page~\pageref{fig:cliquebundle} can be extended to any number of twin classes in both clique bundles:
we just add two adjacent twin classes to the clique bundles $\mathcal{C}$ and $\mathcal{C}'$, respectively, analogously to the existing ones.
Furthermore, a special case can occur:
if a clique bundle only contains singleton classes, then more than one vertex of such a clique bundle can be in an mds.

These two facts make it harder to bound the number of vertices in a clique bundle in terms of the parameters $k$ and $t$ as in the case of clique classes or stable set bundles.
The keynote to overcome this is gathering the vertices of a clique bundle into certain subsets.
In such a subset, the vertices have the same neighborhood regarding all pseudo classes except clique bundles. 
With respect to clique bundles, however, the vertices of a subset have the same number of neighbors in each clique bundle.

To formalize the above-mentioned idea, we introduce the following notion.

\begin{definition}\label{def:mds-vector}
	Let $G$ be a graph with pseudo class partition $\{P_1,\ldots,P_s\}$.
	For every vertex $v\in V(G)$ we define the vector $\mu^v=(\mu^v_1,\ldots,\mu^v_s) \in \IN^s_0$ as follows: 
	if $v\in P_i$, then we set 
	\begin{subnumcases}{\label{def:mu_selbst-drin} \mu^v_i\coloneqq}
		 1, 		 & if $P_i$ is a singleton class, clique class or clique bundle, \label{def:mu_selbst-drin_SC-CC-CB}  \\
		 2, 		 & if $P_i$ is a stable set bundle, \label{def:mu_selbst-drin_SSB} \\
		 |P_i|,  & if $P_i$ is a stable set class. \label{def:mu_selbst-drin_SSC} 
	\end{subnumcases}
	If $v\notin P_i$ and $P_i$ is not a clique bundle, then we set
	\begin{subnumcases}{\label{def:mu_nicht-drin_noCB} \mu^v_i\coloneqq}
		 1, 											& if $v$ is  adjacent to $P_i$, \label{def:mu_nicht-drin_noCB_non-adjacent} \\
		 0, 											& if $v$ is not adjacent to $P_i.\qquad$ \label{def:mu_nicht-drin_noCB_adjacent}
	\end{subnumcases}	
	If $v\notin P_i$ and $P_i$ is a clique bundle, then we set
	\begin{subnumcases}{\label{def:mu_nicht-drin_CB} \mu^v_i\coloneqq}
		 |P_i\setminus N[v]|+1, & if there exists an mds $D\subseteq V$ with $N[v]\cap D\subseteq P_i$, \label{def:mu_nicht-drin_CB_mds_exists} \\
		 0, 										& otherwise. \label{def:mu_nicht-drin_CB_no_mds_exists}
	\end{subnumcases}
	We call $\mu^v$ the \emph{\vectorname-vector} of $v$.
\end{definition}
We remark that $\mu^v$ rather contains information about how $v$ can be dominated by the pseudo classes of a graph, than how $v$ dominates the pseudo classes (in particular in the cases (\ref{def:mu_nicht-drin_CB_mds_exists}) and (\ref{def:mu_nicht-drin_CB_no_mds_exists})).

Let $P_i$ be a clique bundle and $D$ an mds such that $N[v]\cap D\subseteq P_i$.
Then, $v$ is dominated only by vertices of $P_i$.
Furthermore, the vertices of $P_i$ are pairwise mds-exchangeable.
This means that there are more vertices in $D\cap P_i$ than in $P_i\setminus N[v]$.
So, the number $\mu^v_i$ tells us how many vertices of $P_i$ must be at least in an mds to dominate $v$, such that $v$ is dominated only by vertices of $P_i$.

To decide whether an mds $D$ with $N[v]\cap D\subseteq P_i$ exists, it is sufficient to check if the set $\left(V(G)\setminus N[v]\right)\cup P_i$ is a dominating set. 
If so, this dominating set contains an mds $D\subseteq \left(V(G)\setminus N[v]\right)\cup P_i$ with $N[v]\cap D\subseteq P_i$.

The values of $\mu^v$ are bounded by $t$ in every target-$t$ equidominating graph.

\begin{lemma}\label{lem:mu>r_not_equidom} 
	Let $G=(V,E)$ be a graph with pseudo class partition $\{P_1,\ldots,P_s\}$ and let $r\in\IN$.
	If there is a vertex $v\in V$ with $\mu^v_i>r$ for some $i\in [s]$, then $G$ is not target-$t$ equidominating for all $t\leq r$.
\end{lemma}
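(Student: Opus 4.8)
The plan is to argue by contradiction: suppose $G$ is target-$t$ equidominating for some $t\le r$, witnessed by an equidominating structure $(\vw,t)$. Since $\vw$ takes values in $\IN=\{1,2,3,\dots\}$, every vertex has weight at least $1$, so for every mds $D$ and every pseudo class $P_j$ we have $\vw(D)\ge\vw(D\cap P_j)\ge|D\cap P_j|$. Consequently, it is enough to exhibit one mds $D$ of $G$ with $|D\cap P_i|\ge\mu^v_i$, where $i$ is the index from the hypothesis: then $t=\vw(D)\ge|D\cap P_i|\ge\mu^v_i>r\ge t$, which is absurd.

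To find such an mds I would go through the cases of Definition~\ref{def:mds-vector}. As $r\in\IN$ we have $r\ge1$, so $\mu^v_i>r$ forces $\mu^v_i\ge2$; this rules out every line of the definition whose value is $0$ or $1$, leaving only three possibilities for $P_i$: it is a stable set bundle containing $v$ (value $2$); it is a stable set class containing $v$ (value $|P_i|$); or $v\notin P_i$, $P_i$ is a clique bundle, $\mu^v_i=|P_i\setminus N[v]|+1$, and there is an mds $D$ with $N[v]\cap D\subseteq P_i$. In the stable-set-class case, $P_i$ is itself a stable set; in the stable-set-bundle case, pick a stable set class $S\subseteq P_i$ (which has size $2$, hence is a stable set). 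In both cases the extension argument from the paragraph following Observation~\ref{obs: SSC-maximal-t-knoten} yields an mds $D$ with $S\subseteq D$ (taking $S=P_i$ in the first case), whence $|D\cap P_i|\ge|S|=\mu^v_i$, using Observation~\ref{obs:keinereineroderalleinSSC} in the stable-set-class case to see that $|D\cap P_i|=|P_i|$.

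The remaining clique-bundle case is the crux and, I expect, the main obstacle. Here I would show that the mds $D$ provided by the definition already satisfies $|D\cap P_i|\ge|P_i\setminus N[v]|+1$ — this is precisely the inequality asserted in the remark just below Definition~\ref{def:mds-vector}. Since $D$ dominates $v$ and $N[v]\cap D\subseteq P_i$, the set $D\cap P_i\cap N[v]$ is nonempty. Assume for contradiction that $|D\cap P_i|\le|P_i\setminus N[v]|$; then $|(P_i\setminus N[v])\setminus D|=|P_i\setminus N[v]|-|D\cap P_i\setminus N[v]|\ge|D\cap P_i|-|D\cap P_i\setminus N[v]|=|D\cap P_i\cap N[v]|$, so the vertices of $D\cap P_i\cap N[v]$ can be assigned injectively to vertices of $(P_i\setminus N[v])\setminus D$ and swapped in one at a time. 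Because the vertices of a clique bundle are pairwise mds-exchangeable and the swapped-out and swapped-in vertices remain distinct throughout, every intermediate set stays an mds, so the resulting set $D'$ is an mds with $D'\cap P_i\subseteq P_i\setminus N[v]$; and since $D'\setminus P_i=D\setminus P_i$ is disjoint from $N[v]$ (because $N[v]\cap D\subseteq P_i$), we get $N[v]\cap D'=\emptyset$, contradicting that $D'$ dominates $v$. Hence $|D\cap P_i|\ge|P_i\setminus N[v]|+1=\mu^v_i$, completing the plan. The delicate points are the counting that guarantees enough ``free'' vertices in $P_i\setminus N[v]$ to absorb all of $D\cap P_i\cap N[v]$, and checking that iterated pairwise exchanges stay legitimate; both are straightforward once set up carefully.
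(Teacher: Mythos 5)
Your proposal is correct and follows essentially the same route as the paper: a case analysis over the lines of Definition~\ref{def:mds-vector} that can yield $\mu^v_i\ge 2$, producing in each case an mds $D$ with $|D\cap P_i|\ge\mu^v_i>r\ge t$, which contradicts $\vw(D)=t$ since all weights are at least $1$. Your detailed exchange argument in the clique-bundle case is just a careful write-up of the inequality $|D\cap P_i|\ge|P_i\setminus N[v]|+1$ that the paper asserts (with only a sketched justification) in the remark following Definition~\ref{def:mds-vector}.
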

\begin{proof}
	Let $t\leq r$ and $\mu^v_i>r$.
	If $v\in P_i$ and $P_i$ is a stable set bundle, then $\mu^v_i=2$ and hence $r=1$. 
	Since we can extend two non-adjacent vertices of $P_i$ to an mds (with at least two elements), $G$ is not target-1 equidominating.
	Analogously, if $v\in P_i$ and $P_i$ is a stable set class, there exists an mds containing more than $t$ vertices.
	
	The last possible case for $\mu^v_i>1$ is $v\notin P_i$ and $P_i$ is a clique bundle.
	Then, there exists an mds $D$ with $N[v]\cap D\subseteq P_i$ and $|D|\geq|D\cap P_i|\geq \mu^v_i > t$.
	Again, $G$ is not target-$t$ equidominating.
\end{proof}

As mentioned before, we want to gather -- or rather partition -- the vertices of clique bundles into subsets.
We do this in a way such that the vertices of each subset have identical \vectorname-vectors.
However, a problem arises if we then reduce such a subset of a graph.
Likewise for the other reduction rules, the main condition to prove that the reduction is safe for the parameterized problems is: 
a subset of vertices of the reduced graph is an mds if and only if it is an mds of the original graph (compare Lemma~\ref{lem:reductions_general}(iii)). 

Now the problem is the following:
it can occur that the pseudo class partition changes if we delete some vertices of a clique bundle.
Clearly, if the pseudo class partition changes, then we also obtain different \vectorname-vectors.
With possibly different pseudo class partitions and \vectorname-vectors we cannot prove the above-mentioned main condition.

Since the clique bundles are determined by the graph, we need a more general structure.
Therefore, we introduce a new mathematical object and transfer all relevant information and properties to it.
However, we do this in a way such that neither the pseudo class partition nor the \vectorname-vectors change when we delete some elements from it.
Roughly speaking, the graph does not determine the partition, but the partition determines the graph.

\begin{definition}\label{def:pseudo_graph}
	A \emph{pseudo graph} is a triple $\mathfrak{P}=(V,\mathcal{P},\mu)$, consisting of a non-empty set $V$, a partition $\mathcal{P}=\{P_1,\ldots,P_s\}$ of $V$ and a function $\mu \colon V \to \N_0^s$, such that $\mu(v)\neq (0,\ldots,0)$ for all $v\in V$.
\end{definition}

Next, we need an equivalent for (minimal) dominating sets in pseudo graphs.
By $(\mu(v))_i$ we denote the $i$-th component of $\mu(v)$.

\begin{definition}\label{def:mu-feasible}
	Let $\mathfrak{P}=(V,\mathcal{P}=\{P_1,\ldots,P_s\},\mu)$ be a pseudo graph and $X\subseteq V$.
	We call $X$ a \emph{dense set of $\mathemph{\mathfrak{P}}$} if for every $v\in V$ a block $P_i\in\mathcal{P}$ exists with $0<(\mu(v))_i\leq|X\cap P_i|$.
	If every proper subset $X'\subsetneq X$ is not dense, then $X$ is said to be a \emph{minimal dense set of $\mathemph{\mathfrak{P}}$}.
\end{definition}

If it is not required for each $v\in V$ that at least one component of $\mu(v)$ is greater than zero (see Definition~\ref{def:pseudo_graph}), then it is possible that no (minimal) dense set exists at all.
The next definition motivates the two previous definitions.

\begin{definition}
	Let $G=(V,E)$ be a graph with pseudo class partition $\{P_1,\ldots,P_s\}$ and \vectorname-vectors $\mu^v\in\IN_0^s$ for $v\in V$.
	By $\mathfrak{P}(G)=(V,\{P_1,\ldots,P_s\},\mu)$ we denote the \emph{pseudo graph of $\mathemph{G}$}, where $\mu(v)\coloneqq \mu^v$ for each $v\in V$.
	We say that a pseudo graph $\mathfrak{P}$ is \emph{induced by} a graph $G$ if $\mathfrak{P}(G)=\mathfrak{P}$.
\end{definition}

Note that the pseudo graph of a graph is unique (up to the order of the pseudo class partition), while two different graphs can have the same pseudo graph.
For example, two graphs have identical pseudo graphs if they differ only with respect to edges between a clique bundle $\mathcal{C}$ and a vertex $v$ where there is no mds $D$ with $N[v]\cap D \subseteq \mathcal{C}$.
Further, there are pseudo graphs that are not induced by a graph. 

We can compute the pseudo class partition in time $\mathcal{O}(nm^2)$.
Furthermore, we can determine all \vectorname-vectors in time $\mathcal{O}(n^2 + nm)$.
Hence, the pseudo graph of a graph with $n$ vertices and $m$ edges can be computed in time $\mathcal{O}(nm^2+n^2)$. 

The next lemma and the subsequent corollary show that induced pseudo graphs and minimal dense sets indeed correspond to mds as desired.

\begin{lemma}\label{lem:ds_iff_mu-feasible}
	Let $G=(V,E)$ be a graph with \vectorname-vectors $\mu^v$, $v\in V$, pseudo class partition $\mathcal{P}=\{P_1,\ldots,P_s\}$ and pseudo graph $\mathfrak{P}(G)$.
	Further, let $D\subseteq V$.
	Then $D$ is a dominating set of $G$ if and only if $D$ is a dense set of $\mathfrak{P}(G)$.
\end{lemma}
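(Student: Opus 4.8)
The plan is to prove the two implications separately, unwinding the definitions of ``dominating set'' and ``dense set'' and checking them block by block over the pseudo class partition $\mathcal P=\{P_1,\ldots,P_s\}$. Recall that $D\subseteq V$ dominates $G$ iff $N[D]=V$, i.e.\ iff every vertex $v\in V$ lies in $N[w]$ for some $w\in D$; and $D$ is a dense set of $\mathfrak P(G)$ iff for every $v\in V$ there is a block $P_i$ with $0<\mu^v_i\le |D\cap P_i|$. So for each vertex $v$ I would show: $v$ is dominated by $D$ in $G$ $\iff$ there is a block $P_i$ witnessing density of $D$ at $v$.

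First I would fix $v$ and let $P_j$ be the pseudo class containing $v$. The argument splits along the five possible types of $P_j$ (singleton class, clique class, clique bundle, stable set bundle, stable set class), and, for the case $w\in D$ dominating $v$ with $w\notin N[v]\cap P_j$, along whether the relevant neighbouring pseudo class $P_i$ is a clique bundle or not. The easy direction is roughly: if $D$ is a dense set, pick a witnessing block $P_i$ with $0<\mu^v_i\le|D\cap P_i|$; then by Definition~\ref{def:mds-vector} either $v\in P_i$ and $P_i$ sees $v$ ``internally'' (for a singleton/clique class $\mu^v_i=1$ just says $D$ meets $P_i=\{v\}$ or the clique containing $v$; for a stable set class $\mu^v_i=|P_i|$ forces $D\supseteq P_i\ni v$; for a stable set bundle $\mu^v_i=2$ and $|D\cap P_i|\ge 2$ means $D$ contains a vertex of $P_i$ adjacent to $v$, since $P_i$ induces a $K_{2n}-ne$), or $v\notin P_i$ with $\mu^v_i\ge 1$, which by (\ref{def:mu_nicht-drin_noCB_non-adjacent}) or (\ref{def:mu_nicht-drin_CB_mds_exists}) means $v$ is adjacent to $P_i$; and if $P_i$ is a clique bundle with $\mu^v_i=|P_i\setminus N[v]|+1\le |D\cap P_i|$, then $|D\cap P_i\cap N[v]|\ge |D\cap P_i|-|P_i\setminus N[v]|\ge 1$, so some vertex of $D$ in $P_i$ is adjacent to (or equal to) $v$. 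In every case $v\in N[D]$, so $D$ dominates $G$.

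For the converse, assume $D$ dominates $G$ and let $w\in D$ with $v\in N[w]$; let $P_i$ be the pseudo class of $w$. If $P_i$ is not a clique bundle, I would check directly that $\mu^v_i>0$ and $\mu^v_i\le|D\cap P_i|$: when $v\in P_i$ the defining values $1$, $2$, $|P_i|$ are each $\le|D\cap P_i|$ (using Observations~\ref{obs:keinereineroderalleinSSC} and~\ref{obs:einknotenproCC} and the structure of stable set bundles — e.g.\ if $v$ lies in a stable set class $P_i$ and is dominated by $w\in D$, then either $w=v$ and $D\cap P_i\ne\emptyset$, or $w$ is the unique non-neighbour-partner, which in a stable set class of size $\ge 3$ cannot happen, forcing $D\supseteq P_i$; the stable set bundle case is similar with $|D\cap P_i|\ge 2$); when $v\notin P_i$, $v$ adjacent to $w\in P_i$ gives $v$ adjacent to the whole class, so $\mu^v_i=1\le|D\cap P_i|$. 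The genuinely delicate case is $P_i$ a clique bundle with $v\notin P_i$: here I must produce a single block witnessing density. If $D\cap N[v]\subseteq P_i$ then $D$ itself is an mds-contained-in-a-dominating-set witness, so by (\ref{def:mu_nicht-drin_CB_mds_exists}) $\mu^v_i=|P_i\setminus N[v]|+1$, and since $D$ dominates $V$ but $N[v]\cap D\subseteq P_i$, every vertex of $P_i\setminus N[v]$ must still be dominated — and because $P_i$ is a clique bundle its vertices are pairwise mds-exchangeable, which (as explained in the paragraph following Definition~\ref{def:mds-vector}) forces $|D\cap P_i|\ge|P_i\setminus N[v]|+1=\mu^v_i$; otherwise $D$ contains a vertex outside $P_i$ adjacent to $v$, lying in some other pseudo class $P_{i'}$, and then the previous (non-clique-bundle, or self-containing) analysis applies to $P_{i'}$. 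The hard part will be making this clique-bundle bookkeeping fully rigorous: specifically, justifying the inequality $|D\cap P_i|\ge|P_i\setminus N[v]|+1$ from the mds-exchangeability of the clique bundle's vertices together with the fact that the vertices of $P_i\setminus N[v]$ cannot be dominated except from inside $P_i$ when $N[v]\cap D\subseteq P_i$, and checking that the existence clause (\ref{def:mu_nicht-drin_CB_mds_exists}) really is triggered (i.e.\ that $(V\setminus N[v])\cup P_i$ is a dominating set, hence contains such an mds). Once that case is dispatched, assembling the per-vertex equivalences into the statement of the lemma is immediate.
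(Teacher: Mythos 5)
Your backward direction (dense $\Rightarrow$ dominating) is correct and matches the paper. The forward direction, however, has a structural gap: you anchor the density witness at $v$ to the pseudo class $P_i$ of a single chosen dominator $w\in D\cap N[v]$, but when $w$ lies in $v$'s own stable set class (or stable set bundle) that block need not be a witness, since by (\ref{def:mu_selbst-drin_SSC}) and (\ref{def:mu_selbst-drin_SSB}) we have $\mu^v_i=|P_i|$ (resp.\ $2$), which can exceed $|D\cap P_i|$. Your claim that ``the defining values $1$, $2$, $|P_i|$ are each $\le|D\cap P_i|$'' is simply false: take $G=P_3$ with the two leaves forming a stable set class $S$ and the centre $u$ a singleton class, and let $D=\{v,u\}$ for a leaf $v$; then $D$ is dominating and $v\in D$, but $|D\cap S|=1<2=\mu^v_S$. (The lemma survives because the witness is $\{u\}$, not $S$.) The justification you sketch is also not meaningful for a stable set class: $v$ has no neighbours inside $P_i$, so there is no ``non-neighbour-partner'' case, and nothing forces $D\supseteq P_i$. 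The missing idea --- which is exactly what the paper's proof uses --- is to exploit the domination of the \emph{other} vertices of $v$'s class: if $P_i\not\subseteq D$, some twin $v'\in P_i\setminus D$ must be dominated by a vertex of $D$ outside $P_i$, and since twins have the same external neighbourhood that vertex also dominates $v$; its pseudo class then serves as the witness, reducing to your ``$v\notin P_i$'' analysis. The analogous repair is needed for stable set bundles.

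A second, smaller gap sits in the clique bundle case of the same direction. Your dichotomy ``either $N[v]\cap D\subseteq P_i$, or some dominator of $v$ lies in another pseudo class $P_{i'}$ to which the earlier analysis applies'' does not terminate when $N[v]\cap D$ is spread over two or more clique bundles and nothing else: every $P_{i'}$ you pass to is again a clique bundle in the same situation. The paper closes this with one simultaneous exchange argument over all clique bundles meeting $N[v]\cap D$: if each such bundle $P_i$ satisfied $|D\cap P_i|\le|P_i\setminus N[v]|$, one could replace, inside each bundle, the dominators of $v$ by non-neighbours of $v$ while preserving domination (by mds-exchangeability of the bundle's vertices), contradicting that $v$ is dominated. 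This is also precisely the inequality $|D\cap P_i|\ge|P_i\setminus N[v]|+1$ that you flag as ``the hard part'' without supplying a proof, so that step remains open in your write-up as well.
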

\begin{proof}
	\uline{$\Longrightarrow$:}
	Let $D\subseteq V$ be a dominating set of $G$ and $v\in V$ with $v\in P_l$ ($l\in[s]$). 
	If $P_l$ is a stable set class, then every vertex of $P_l$ or a neighbor $x\in P_i\cap N(v)$ lies in $D$.
	In the first case we have $0<\mu_l^v=|P_l|=|D\cap P_l|$ (see~(\ref{def:mu_selbst-drin_SSC}) in Definition~\ref{def:mds-vector}).
	In the latter case, if there is a vertex $w\in N[v]\cap D$ that lies in a singleton class, a clique class, a stable set class or a stable set bundle $P_i$ ($i\in[s]$), then we have $0<\mu_i^v=1\leq|D\cap P_i|$, by~(\ref{def:mu_nicht-drin_noCB_adjacent}).
	If each vertex of $N[v]\cap D$ lies in a clique bundle, then we cannot exchange (within each clique bundle) all neighbors of $v$ lying in $D$ with non-neighbor of $v$ not lying in $D$, (this would contradict the mds-exchangeability).
	This means that there is a clique bundle $P_{i}$ with $0<\mu^v_i\leq|D\cap P_i|$ (compare~(\ref{def:mu_nicht-drin_CB_mds_exists})).
	
	If $P_l$ is a singleton class, a stable set class, a clique class or a clique bundle, we can show analogously that $0<\mu_i^v=|D\cap P_i|$ holds for some $i\in[s]$, by considering (\ref{def:mu_selbst-drin_SC-CC-CB}) and (\ref{def:mu_selbst-drin_SSB}).
	Hence, $D$ is a dense set of $\mathfrak{P}(G)$.

	\uline{$\Longleftarrow$:}
	Let $D$ be a dense set of $\mathfrak{P}$ and $v\in V$ with $v\in P_l$ ($l\in[s]$).	
	Then, there exists $P_i\in \mathcal{P}$ with $0<(\mu(v))_{i}\leq|D\cap P_i|$.
	If $i=l$, then $v$ or a vertex of $N(v)\cap P_l$ lies in $D$, by (\ref{def:mu_selbst-drin}).
	Thus, $v$ is dominated.
	If $i\neq l$ and $P_{i}$ is a singleton class, stable set class, clique class or a stable set bundle (in $G$), then $v$ is dominated by the vertices of $D\cap P_{i}$, by (\ref{def:mu_nicht-drin_noCB_adjacent}).
	If $P_{i}$ is a clique bundle, then $v$ has less non-neighbors in $P_{i}$ than there are vertices in $D\cap P_{i}$ (since $|P_i\setminus N[v]|+1\leq |D\cap P_{i}|$, by (\ref{def:mu_nicht-drin_CB_mds_exists})).
	This means, at least one vertex of $N(v)\cap P_{i}$ lies in $D$.
	Hence, $v$ is dominated.
	It follows that $D$ is a dominating set of $G$ and the proof is finished.
\end{proof}

Using Lemma~\ref{lem:ds_iff_mu-feasible}, it is straightforward to prove the following corollary by contradiction.

\begin{corollary}\label{cor:mds_iff_min-mu-feasible}
	Let $G=(V,E)$ be a graph with pseudo graph $\mathfrak{P}(G)$ and let $D\subseteq V$.
	Then $D$ is an mds of $G$ if and only if $D$ is a minimal dense set of $\mathfrak{P}(G)$.
\end{corollary}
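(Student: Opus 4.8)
The plan is to transfer the statement directly across the equivalence of Lemma~\ref{lem:ds_iff_mu-feasible}, which asserts that for every $D\subseteq V$, $D$ is a dominating set of $G$ if and only if $D$ is a dense set of $\mathfrak{P}(G)$. Since an mds is by definition a dominating set no proper subset of which is dominating, and a minimal dense set is by definition a dense set no proper subset of which is dense, it suffices to show that minimality with respect to one notion carries over to minimality with respect to the other. The key observation is that Lemma~\ref{lem:ds_iff_mu-feasible} is quantified over \emph{all} subsets of $V$, so it applies simultaneously to $D$ and to each of its proper subsets.

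Concretely, I would argue both directions by contradiction. For the forward direction, assume $D$ is an mds of $G$; then $D$ is dominating, hence dense by Lemma~\ref{lem:ds_iff_mu-feasible}. If $D$ were not a minimal dense set, some proper subset $D'\subsetneq D$ would be dense, and then Lemma~\ref{lem:ds_iff_mu-feasible} (applied to $D'$) would make $D'$ a dominating set of $G$, contradicting the minimality of $D$. For the converse, assume $D$ is a minimal dense set of $\mathfrak{P}(G)$; then $D$ is dense, hence dominating by the lemma. If $D$ were not a minimal dominating set, a proper subset $D'\subsetneq D$ would be dominating, hence dense by the lemma, contradicting the minimality of $D$ as a dense set.

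I expect no genuine obstacle: the entire content is already packaged in Lemma~\ref{lem:ds_iff_mu-feasible}, and the corollary is just the observation that an order-preserving biconditional on the property ``is a (dominating/dense) set'' preserves inclusion-minimal elements. The only point that deserves a word of care is making explicit that the lemma is invoked twice — once for $D$ itself and once for a hypothetical smaller set — which is legitimate precisely because its hypothesis places no restriction on the subset considered. No computation is involved.
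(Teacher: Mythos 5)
Your argument is correct and is exactly the one the paper intends: the paper states only that the corollary follows from Lemma~\ref{lem:ds_iff_mu-feasible} ``by contradiction,'' and your two contradiction arguments, applying the lemma both to $D$ and to a hypothetical proper subset, fill in precisely that routine. No discrepancy.
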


Next, we introduce analogous terms to identify the minimal dense sets of a pseudo graph.
\begin{definition}
	A pseudo graph $\mathfrak{P}=(V,\mathcal{P},\mu)$ is called \emph{equidense} if there exists $t\in\IN$ and a weight function $\vw \colon V \rightarrow \IN$ such that for all $D\subseteq V$ the following equivalence holds: 
	\[D \text{ is a minimal dense set} \iff \vw(D) = t.\] 
	Further, we call the pair $(\vw,t)$ an \emph{equidense structure}, $\vw$ an \emph{equidense function} and $t$ a \emph{target value}. 
\end{definition}

\begin{definition}\label{def:k-equi-mu-feasible}
	For a given $t\in\IN$, a pseudo graph $\mathfrak{P}=(V,\mathcal{P},\mu)$ is called \emph{target-$\mathemph{t}$ equidense} if there exists an equidense structure of the form $(\vw,t)$ of $G$. 
\end{definition}

\begin{definition}\label{def:target-t_equi-mu-feasible}
	For a given $k\in\IN$, a pseudo graph $\mathfrak{P}=(V,\mathcal{P},\mu)$ is said to be \emph{$\mathemph{k}$-equidense} if there exists an equidense structure $(\vw,t)$ with $\vw \colon V \rightarrow [k]$ for some $t\in\IN$.
	In this case, $(\vw,t)$ is called a \emph{$\mathemph{k}$-equidense structure} and $\vw$ a \emph{$\mathemph{k}$-equidense function}.
\end{definition}

Finally, we call a pseudo graph \emph{target-$\mathemph{t}$ $\mathemph{k}$-equidense} if a $k$-equidense structure with target value $t$ exists.
By Corollary~\ref{cor:mds_iff_min-mu-feasible}, we immediately get the following result.

\begin{corollary}\label{cor:equidom_iff_equi-mu-feasible}
	Let $G$ be a graph with pseudo graph $\mathfrak{P}(G)$ and let $k,\,t\in\IN$.
	Then $G$ is target-$t$ $k$-equidominating if and only if $\mathfrak{P}(G)$ is target-$t$ $k$-equidense.
	Moreover, in the affirmative case, we can use the same structure to identify minimal dominating and minimal dense sets in $G$ and $\mathfrak{P}(G)$, respectively.
\end{corollary}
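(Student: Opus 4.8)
The plan is to unwind the definitions and invoke Corollary~\ref{cor:mds_iff_min-mu-feasible} directly. The key observation is that $G$ and $\mathfrak{P}(G)$ share the same ground set $V$, so a weight function $\vw\colon V\to[k]$ is simultaneously a candidate $k$-equidominating function for $G$ and a candidate $k$-equidense function for $\mathfrak{P}(G)$, and likewise a subset $D\subseteq V$ is a candidate mds of $G$ exactly when it is a candidate minimal dense set of $\mathfrak{P}(G)$.

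Next I would fix $k,t\in\IN$ and a weight function $\vw\colon V\to[k]$, and spell out what it means for $(\vw,t)$ to witness that $G$ is target-$t$ $k$-equidominating: for every $D\subseteq V$, $D$ is an mds of $G$ if and only if $\vw(D)=t$. By Corollary~\ref{cor:mds_iff_min-mu-feasible}, for every $D\subseteq V$ the phrase ``$D$ is an mds of $G$'' may be replaced by ``$D$ is a minimal dense set of $\mathfrak{P}(G)$'' without changing the truth value. Hence $(\vw,t)$ witnesses that $G$ is target-$t$ $k$-equidominating if and only if, for every $D\subseteq V$, $D$ is a minimal dense set of $\mathfrak{P}(G)$ if and only if $\vw(D)=t$, which is precisely the statement that $(\vw,t)$ witnesses that $\mathfrak{P}(G)$ is target-$t$ $k$-equidense. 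Quantifying over all weight functions $\vw\colon V\to[k]$ yields the claimed equivalence, and since the argument exhibits literally the same pair $(\vw,t)$ on both sides, the ``moreover'' clause follows at once: the structure that certifies equidomination in $G$ certifies equidenseness in $\mathfrak{P}(G)$, and minimal dominating sets of $G$ are exactly the minimal dense sets of $\mathfrak{P}(G)$.

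Because the whole argument is a substitution licensed by Corollary~\ref{cor:mds_iff_min-mu-feasible}, I do not expect any real obstacle; the only point needing a word of care is to make explicit that the equidominating structure of $G$ and the equidense structure of $\mathfrak{P}(G)$ have the same domain $V$, the same weight bound $k$, and the same target value $t$, so that after the substitution the two defining biconditionals are syntactically identical. One could compress this to the single sentence ``$(\vw,t)$ is a $k$-equidominating structure of $G$ with target value $t$ if and only if it is a $k$-equidense structure of $\mathfrak{P}(G)$ with target value $t$, by Corollary~\ref{cor:mds_iff_min-mu-feasible}'', but the slightly expanded version above keeps the bookkeeping of the parameters transparent.
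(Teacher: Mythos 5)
Your proposal is correct and is exactly the argument the paper intends: the paper derives this corollary immediately from Corollary~\ref{cor:mds_iff_min-mu-feasible} by the same substitution of ``mds of $G$'' with ``minimal dense set of $\mathfrak{P}(G)$'' inside the defining biconditional, using that both structures live on the same ground set $V$ with the same $(\vw,t)$.
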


Now, we gathered together everything to define the reduction rule and to prove that we can use it (in combination with Corollary~\ref{cor:equidom_iff_equi-mu-feasible}) for the {\sc Target-$t$ Equidomination} problem and the $k$-{\sc Equidomination} problem. 
Again, the rule is specified by a positive integer $r\in\IN$.

\begin{center}
\parbox{0.9\columnwidth}{\emph{$\boldsymbol{r}$-Pseudo Graph Reduction}: If a subset $M \subseteq P$ of a block $P$ of the partition of a pseudo graph with $\mu(v)=\mu(w)$ for all $v,\,w\in M$ contains more than $r$ vertices, delete all but $r$ vertices of $M$.
}
\end{center}

\begin{lemma} \label{lem:clique_bundle_subset_reduction}
	Let $r,\,k\in \IN$ and $\mathfrak{P}=(V,\mathcal{P}=\{P_1,\ldots,P_s\},\mu)$ be a pseudo graph induced by a graph $G=(V,E)$, with $(\mu(v))_i\leq r$ for all $v\in V$ and $i\in[s]$.
	Let $P_l$ ($l\in[s]$) be a block of $\mathcal{P}$ such that $P_l$ is a clique bundle of $G$.
	Further, let $M \subseteq P_l$ be a subset of  $P_l$ with $\mu(v)=\mu(w)$ for all $v,\,w\in M$ and $|M|>r$.
	
	Let $\mathfrak{P}'=(V',\mathcal{P}',\mu')$ be the pseudo graph obtained from $\mathfrak{P}$ by applying the $r$-Pseudo Graph Reduction rule with respect to $M$.
	Then for all $t\leq r$, $\mathfrak{P}$ is target-$t$ $k$-equidense if and only if $\mathfrak{P}'$ is target-$t$ $k$-equidense.
\end{lemma}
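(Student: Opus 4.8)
The plan is to transport the proof of Lemma~\ref{lem:reductions_general} into the pseudo-graph setting, with ``dominating set'' replaced by ``dense set'', ``mds'' by ``minimal dense set'', and Corollary~\ref{cor:mds_iff_min-mu-feasible} linking the two worlds. Write $M'\coloneqq M\cap V'$, so $|M'|=r$; the reduction leaves every block except $P_l$ untouched, turns $P_l$ into $P_l'\coloneqq P_l\cap V'$ (non-empty, as it contains $M'$), and sets $\mu'=\mu|_{V'}$, so $\mathfrak{P}'$ is indeed a pseudo graph. The first step is the substructure property: a set $D\subseteq V'$ is a dense set of $\mathfrak{P}'$ if and only if it is a dense set of $\mathfrak{P}$. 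For a surviving vertex $v\in V'$ nothing changes, since $\mu(v)$ and every intersection number $|D\cap P_i|$ are preserved (using $D\subseteq V'$, so $|D\cap P_l|=|D\cap P_l'|$); for the remaining direction at a deleted vertex $v\in M\setminus M'$ one uses that $v$ shares its \vectorname-vector with an arbitrary $w\in M'$, so whichever block witnesses denseness at $w$ also witnesses it at $v$. From this, minimality transfers exactly as in Lemma~\ref{lem:reductions_general}, yielding: a set $D\subseteq V'$ is a minimal dense set of $\mathfrak{P}'$ if and only if it is a minimal dense set of $\mathfrak{P}$.

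The genuinely new ingredient, and the step I expect to be the main obstacle, is the analogue of hypothesis~(i) of Lemma~\ref{lem:reductions_general}: $|D\cap M|\le r$ for every minimal dense set $D$ of $\mathfrak{P}$. By Corollary~\ref{cor:mds_iff_min-mu-feasible}, such a $D$ is an mds of $G$. If $|D\cap M|\le 1$ there is nothing to show (recall $r\ge 1$); otherwise $|D\cap P_l|\ge 2$, and since $P_l$ is a clique, every $w\in D\cap P_l$ has a private neighbour $p_w\in pn(w,D)$ lying outside $P_l$. For this $p_w$ one checks $N[p_w]\cap D=\{w\}\subseteq P_l$, so case~(\ref{def:mu_nicht-drin_CB_mds_exists}) of Definition~\ref{def:mds-vector} applies and gives $\mu^{p_w}_l=|P_l\setminus N[p_w]|+1$. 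As $p_w$ is non-adjacent to every vertex of $(D\cap M)\setminus\{w\}\subseteq P_l$, we have $|P_l\setminus N[p_w]|\ge|D\cap M|-1$, whence $|D\cap M|\le\mu^{p_w}_l\le r$ by the hypothesis $(\mu(v))_i\le r$. This is exactly the place where the clique-bundle clause in the definition of $\mu$ and the bound $(\mu(v))_i\le r$ are really used.

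Next I would record a swap-invariance observation replacing ``every equidominating function is constant on mds-exchangeable vertices'': if $D$ is a minimal dense set (of $\mathfrak{P}$ or of $\mathfrak{P}'$), $w_1\in D$, $w_2\notin D$, and $w_1,w_2$ lie in one block and have equal \vectorname-vectors, then $D-w_1+w_2$ is again a minimal dense set, because the swap leaves every $|D\cap P_i|$ unchanged, so both denseness and the non-denseness of every proper subset are preserved. Using this I would show that a $k$-equidense function $\vw'$ of $\mathfrak{P}'$ is constant on $M'$: given $w_1,w_2\in M'$, extend $\{w_1\}$ to a maximal stable set $D_0$ of $G$; then $D_0$ is an mds of $G$ meeting the clique $P_l$ exactly in $w_1$, hence $D_0\cap M=\{w_1\}\subseteq M'$, so $D_0\subseteq V'$ and $D_0$ is a minimal dense set of $\mathfrak{P}'$ by the first paragraph; the swap observation then forces $\vw'(w_1)=\vw'(D_0)=\vw'(D_0-w_1+w_2)=\vw'(w_2)$.

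Finally I would assemble the equivalence. For ``$\Longrightarrow$'' a $k$-equidense structure $(\vw,t)$ of $\mathfrak{P}$ restricts to $(\vw|_{V'},t)$, which works for $\mathfrak{P}'$ by the minimality transfer. For ``$\Longleftarrow$'' let $(\vw',t)$ with $t\le r$ be a $k$-equidense structure of $\mathfrak{P}'$; by the previous paragraph $\vw'$ has a common value $c\ge 1$ on $M'$, and I extend it to $\vw\colon V\to[k]$ by $\vw(v)\coloneqq c$ for $v\in M\setminus M'$. If $X\subseteq V$ satisfies $\vw(X)=t$, then $|X\cap M|\le|X\cap M|\,c\le\vw(X)=t\le r$, so $|X\cap(M\setminus M')|\le|M'\setminus X|$ and I can swap the vertices of $X$ in $M\setminus M'$ for vertices of $M'\setminus X$; since all of $M$ carries weight $c$ and a single \vectorname-vector, the resulting $\widetilde X\subseteq V'$ has $\vw'(\widetilde X)=t$, hence is a minimal dense set of $\mathfrak{P}'$ and of $\mathfrak{P}$, and the swap observation then makes $X$ a minimal dense set of $\mathfrak{P}$. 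Conversely, for a minimal dense set $D$ of $\mathfrak{P}$ the bound $|D\cap M|\le r$ again lets me swap $D$ down to some $\widetilde D\subseteq V'$ that is a minimal dense set of $\mathfrak{P}'$, so $\vw(D)=\vw(\widetilde D)=\vw'(\widetilde D)=t$. Throughout one must keep in mind that $\mathfrak{P}'$ need not be induced by any graph, so every argument about $\mathfrak{P}'$ has to be carried out purely at the level of (minimal) dense sets, while $G$ is used only to analyse $\mathfrak{P}=\mathfrak{P}(G)$.
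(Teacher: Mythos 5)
Your proposal is correct and follows essentially the same route as the paper: the transfer of (minimal) dense sets between $\mathfrak{P}$ and $\mathfrak{P}'$, the restriction argument for one direction, and the constant extension of $\vw'$ on the reduced block combined with in-block swaps for the other. The only difference is that you spell out, via private neighbours and clause~(\ref{def:mu_nicht-drin_CB_mds_exists}) of Definition~\ref{def:mds-vector}, why $(\mu(v))_i\leq r$ forces $|D\cap M|\leq r$ for every minimal dense set $D$ -- a step the paper's proof asserts without detail -- which is a welcome elaboration rather than a deviation.
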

\begin{proof}
	Note that $\mu'=\mu\big|_{V'}$ and that besides $P_l$ the partitions $\mathcal{P}$ and $\mathcal{P}'$ of $V$ and $V'$, respectively, have identical blocks.
	
	In the first place, we claim that a subset $D\subseteq V'$ is a minimal dense set of $\mathfrak{P}'$ if and only if $D$ is a minimal dense set of $\mathfrak{P}$.
	To prove this, it is sufficient to show that the equivalence holds for dense sets (not necessarily minimal).
	So, let $D\subseteq V'$.
	First, let $D$ be a dense set of $\mathfrak{P}'$.
	Since $\mu(v)=\mu(w)$ for all $v\in P_l\setminus P'_l$ and $w\in P'_l$, we directly get that $D$ is a dense set of $\mathfrak{P}$.
	Secondly, let $D$ be a dense set of $\mathfrak{P}$.
	By considering the definition of dense, $D$ is clearly a dense set of $\mathfrak{P}'$ and the claim is proved.

	Now, we prove the equivalence stated in the lemma.
	Let $t\leq r$ and $M'= M\cap V'$.
	
	\uline{$\Longrightarrow$:} Let $\mathfrak{P}$ be target-$t$ $k$-equidense and $(\vw,t)$ be a $k$-equidense structure of $\mathfrak{P}$.
	Any subset of $V'$ is a minimal dense set of $\mathfrak{P}'$ if and only if it is a minimal dense set of $\mathfrak{P}$, which in turn is the case if and only if it has total weight $t$.	
	It follows that $(\vw',t)$ with $\vw'\coloneqq\vw\big|_{V'}$ is a $k$-equidense structure of $\mathfrak{P}'$ with target value $t$.

	\uline{$\Longleftarrow$:} Let $\mathfrak{P}'$ be target-$t$ $k$-equidense and $(\vw',t)$ be a $k$-equidense structure of $\mathfrak{P}'$.
	Since $\mathfrak{P}$ is induced by a graph and $P_l$ is a clique bundle of $G$, there exists a minimal dense set $D$ with $|D\cap P_l|=1$.
	It follows that $\vw'$ is constant on $P_l \cap V'$.
	We define 
	\[\vw(v)\coloneqq \begin{cases} \vw'(v), & \text{if }v \in  V',\\
																	\vw'(w), & \text{otherwise}, \end{cases}\]
	with any $w \in P_l \cap V'$ and claim that $(\vw,t)$ is a $k$-equidense structure of $\mathcal{P}$.
	It is clear that $\vw(v)\leq k$ for every $v\in V$.

	First, let $X\subseteq V$ be a subset with $\vw(X)=t$. 
	As $\vw(X\cap M)\leq\vw(X)=t$, we get $|X\cap M|\leq t\leq r$. 
	Since $|M'|=r$, we can assume $|X\cap M|\subseteq M'$ and hence $X\subseteq V'$.
	It follows that $X$ is a minimal dense set of $\mathfrak{P}'$ and thus also of $\mathfrak{P}$.
	
	Secondly, let $D\subseteq V(G)$ be a minimal dense set of $\mathfrak{P}$.
	Since $(\mu(v))_i\leq r$ for all $v\in V$ and $i\in[s]$, we have $|D\cap M|\leq|D\cap P_l|\leq r$.
	Again, we can assume that $D\subseteq V'$.
	This means that $D$ is a minimal dense set of $\mathfrak{P}'$. 
	It follows that $\vw'(D)=t$ and consequently $\vw(D)=t$.
	This finishes the proof.
\end{proof}

\section{A Kernel for the {\sc Target}-\textit{t} {\sc Equidomination} Problem} \label{sec:kerneltargett}

We will now prove that the {\sc Target-$t$ Equidomination} problem is fixed-parameter tractable. 
For that, we construct a generalized kernel the size of which can be bounded by a function of $t$.
We apply the XP algorithm described in Section~\ref{sec:xpalgorithm} to the kernel.
In this way, we do not only obtain a complexity result but also an explicit {FPT} algorithm.

\begin{theorem} \label{thm:targettequidominationfpt}
	The \textsc{Target}-$t$ \textsc{Equidomination} problem admits a generalized kernel of $\mathcal{O}\left(t^{t+1}\right)$ which is computable in polynomial time.
	Moreover, there is an algorithm to solve the \textsc{Target}-$t$ \textsc{Equidomination} problem that runs in time $\mathcal{O}\left(nm^2 +n^2 + t^{2t^2+3t+1}\right)$ for a graph on $n$ vertices and $m$ edges. 
\end{theorem}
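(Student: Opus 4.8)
The plan is to pass from the input graph $G$ to its pseudo graph $\mathfrak{P}(G)$ and to shrink the latter by the reduction rules of Section~\ref{sec:reduction} until its vertex set is bounded by a function of $t$; by Corollary~\ref{cor:equidom_iff_equi-mu-feasible} the resulting pseudo graph is a generalized kernel for {\sc Target-$t$ Equidomination}, and by Corollary~\ref{cor:mds_iff_min-mu-feasible} the XP routine of Section~\ref{sec:xpalgorithm} can be run on it. First I would handle the easy rejections. If $G$ has more than $t$ pseudo classes, then, since every vertex lies in some mds and hence has weight at most $t$ in any equidominating structure, Corollary~\ref{cor:differentpseudodifferentweights} shows that $G$ is not target-$t$ equidominating; likewise, if some \vectorname-vector has an entry exceeding $t$, then $G$ is not target-$t$ equidominating by Lemma~\ref{lem:mu>r_not_equidom}. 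So we may assume there are $s\le t$ pseudo classes and that every entry of every \vectorname-vector is at most $t$.

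Next I would bound the size of each pseudo class after reduction. A singleton class has one vertex, and a stable set class has at most $t$ vertices by Observation~\ref{obs: SSC-maximal-t-knoten}. Applying the $t$-Clique Class Reduction to every clique class and the $\lceil t/2\rceil$-Stable Set Bundle Reduction to every stable set bundle preserves the property of being target-$t$ $k$-equidominating by Lemmas~\ref{lem:cliqueclassreduction} and~\ref{lem:stablesetbundlereduction} (the hypotheses $t\le r$ and $t\le 2r$ hold for these choices of $r$), leaving clique classes with at most $t$ vertices and stable set bundles with at most $t+1$ vertices. For a clique bundle $P_l$ I would move to the pseudo graph: the \vectorname-vector of a vertex of $P_l$ has $l$-th entry $1$ and each of its remaining $s-1\le t-1$ entries lying in $\{0,\dots,t\}$, so at most $(t+1)^{t-1}$ distinct \vectorname-vectors occur in $P_l$; partitioning $P_l$ accordingly and applying the $t$-Pseudo Graph Reduction to each part — justified by Lemma~\ref{lem:clique_bundle_subset_reduction} since $\mathfrak{P}(G)$ is induced by $G$ and all \vectorname-vector entries are at most $t$ — shrinks $P_l$ to at most $t\,(t+1)^{t-1}$ vertices. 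Since the rules touch pairwise disjoint blocks their order is irrelevant, all of them run in polynomial time, and summing over the at most $t$ pseudo classes gives a kernel on $\mathcal{O}\!\left(t\cdot t(t+1)^{t-1}\right)=\mathcal{O}\!\left(t^{t+1}\right)$ vertices; this is the desired generalized kernel.

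Finally I would run the XP algorithm of Section~\ref{sec:xpalgorithm} on the kernel with $k:=t$, in the target-$t$ variant that keeps only those weight functions $\vw$ with $\vw(D)=t$, using Corollary~\ref{cor:mds_iff_min-mu-feasible} so that the algorithm operates directly on minimal dense sets (this is precisely the point of working with a pseudo graph, which need not itself be induced by a graph). On a kernel with $N=\mathcal{O}(t^{t+1})$ vertices and parameter $t$, the dominant term of the XP running time is $N^{2t+2}t^{-t-1}=t^{(t+1)(2t+2)-(t+1)}=t^{(t+1)(2t+1)}=t^{2t^2+3t+1}$, while all other terms — together with the polynomial-time work of building $\mathfrak{P}(G)$ and performing the reductions, which costs $\mathcal{O}(nm^2+n^2)$ — are of lower or separately accounted-for order; this yields the claimed total $\mathcal{O}\!\left(nm^2+n^2+t^{2t^2+3t+1}\right)$. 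The step I expect to be the main obstacle is verifying that the \vectorname-vectors of the clique-bundle vertices that survive the deletions are genuinely unchanged — which is exactly why the clique-bundle rule is phrased on the pseudo graph rather than on $G$ in Lemma~\ref{lem:clique_bundle_subset_reduction} — and, relatedly, confirming that the XP routine applies verbatim to the reduced pseudo graph even though it may no longer be realizable as any graph.
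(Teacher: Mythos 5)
Your proposal is correct and follows essentially the same route as the paper's proof: the same rejection tests via Corollary~\ref{cor:differentpseudodifferentweights} and Lemma~\ref{lem:mu>r_not_equidom}, the same three reduction rules with the same parameters $r=t$ and $r=\lceil t/2\rceil$, the same per-block size bounds (including the $(t+1)^{t-1}$ count of distinct \vectorname-vectors in a clique bundle) yielding the $\mathcal{O}(t^{t+1})$ kernel, and the same substitution of the kernel size into the XP running time. The concern you flag at the end — that the clique-bundle rule must be phrased on the pseudo graph so that the \vectorname-vectors and the partition survive the deletion — is precisely the point the paper addresses by introducing pseudo graphs and Lemma~\ref{lem:clique_bundle_subset_reduction}, so no gap remains.
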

\begin{proof}
	Let $G$ be a graph with $|V(G)|=n$, $|E(G)|=m$ and $t\in\IN$.
	First, we use an algorithm for modular decomposition to compute the twin partition.
	If there is a stable set class with more than $t$ vertices, then $G$ is not target-$t$ equidominating, by Observation~\ref{obs: SSC-maximal-t-knoten}.
	If no such stable set class exists, then we apply Algorithm~\ref{alg:adjacentmdsexchangeable} to obtain the pseudo class partition $\{P_1,\ldots,P_s\}$ and we determine the \vectorname-vectors $\mu^v$ for all $v\in V(G)$.
	
	If $s>t$ or $\mu^v_i>t$ for some $v\in V(G)$ and $i\in [s]$, then we conclude that $G$ is not target-$t$ equidominating, by Corollary~\ref{cor:differentpseudodifferentweights} and Lemma~\ref{lem:mu>r_not_equidom}.
	Otherwise, we apply the $r$-Clique Class Reduction rule with $r=t$ and the $r$-Stable Set Bundle Reduction rule with $r=\lceil t/2\rceil$ to all clique classes and stable set bundles, respectively, to obtain an induced subgraph $G'$ of $G$.
	Then, we determine the pseudo graph $\mathfrak{P}(G')$ and apply the $r$-Pseudo Graph Reduction rule (simultaneously) to all blocks of $\mathfrak{P}(G')$ that are clique bundles of $G'$, again with $r=t$.
	This yields a pseudo graph $\mathfrak{P}''=(V'',\mathcal{P}'',\mu'')$.
	
	Now, $\mathcal{P}''$ has at most $t$ blocks.
	Every block that is a clique class or stable set class in $G$ contains at most $t$ elements.
	Blocks of stable set bundles may have up to $t+1$ elements. 
	In blocks of clique bundles there can be at most $(t+1)^{(t-1)}$ distinct \vectorname-vectors (note that $\mu^v(i)=1$ for all vertices $v$ within a clique bundle $P_i$).
	Thus, after the reductions  every block of $\mathcal{P}''$ contains at most $t(t+1)^{(t-1)}$ elements.
	
	Taken together, we get $n''\coloneqq|V''|= \mathcal{O}\left(t^{t+1}\right)$.
	Finally, by Lemma~\ref{lem:cliqueclassreduction}, Lemma~\ref{lem:stablesetbundlereduction} and Lemma~\ref{lem:clique_bundle_subset_reduction}, the obtained pseudo graph $\mathfrak{P}''$ is target-$t$ equidense if and only if $G$ is target-$t$ equidominating.
	
	We can compute the pseudo class partition of $G$ in time $\mathcal{O}(nm^2)$.
	The computation of the \vectorname-vectors $\mu^v$ for all $v\in V(G)$ and the partitioning of the clique bundles in preparation for the Pseudo Class Reduction can be done in $\mathcal{O}(n^2 + nm)$.
	Applying the three reduction rules needs linear time.
	This finishes the proof of the first statement of this theorem.

	By the proof of Lemma~\ref{lem:reductions_general}, we know that we can easily extend any target-$t$ equidominating structure of $G'$ to a target-$t$ equidominating structure of $G$.
	The same holds for a target-$t$ equidense structure of $\mathfrak{P}'$ and $\mathfrak{P}$.
	Moreover, we can use the same function and target value for $G'$ and $\mathfrak{P}'(G')$, by Corollary~\ref{cor:equidom_iff_equi-mu-feasible}.
	It follows that an equidense structure of $\mathfrak{P}'$ induces an equidominating structure of $G$.
	
	Furthermore, by the proofs of the lemmas of Section~\ref{sec:reduction}, we know that a subset $D\subseteq V''$ is a minimal dense set of $\mathfrak{P}''$ if and only if $D$ is an mds of $G$. 
	Due to this equivalence, we can analogously apply Algorithm~\ref{alg:xp_algorithm} to obtain a target-$t$ equidense structure of $\mathfrak{P}''$, if existent.
	
	Since we do not have to compute the pseudo class partition, the first two summands of the running time Algorithm~\ref{alg:xp_algorithm} vanish.
	So, for applying Algorithm~\ref{alg:xp_algorithm} to $\mathfrak{P}''$ we need time 
	\begin{align*}
		\mathcal{O}\left({n''}^tt^t+{n''}^{2t+2}t^{-t-1}+t^{3t+3}\right)
			& =\mathcal{O}\left(t^{t^2+2t}+t^{2t^2+4t+2}t^{-t-1}+t^{3t+3}\right) \\
			& =\mathcal{O}\left(t^{2t^2+3t+1}\right).
	\end{align*}
	
	Together with the computation of the pseudo class partition and the \vectorname-vectors of $G$, we achieve a total running time of
	\[\mathcal{O}\left(nm^2 + n^2 + nm + t^{2t^2+3t+1}\right)=\mathcal{O}\left(nm^2 +n^2 + t^{2t^2+3t+1}\right).\]
\end{proof}

\section{A Kernel for the \textit{k}-{\sc Equidomination} Problem} \label{sec:kernelkequidomination}

In this section we show that the $k$-\textsc{Equidomination} problem admits a generalized kernel the size of which is bounded by a function of $k$ and, therefore, is FPT.
The proof of Theorem~\ref{thm:kequidominationfpt} is along the lines of the argumentation used in \cite{kequistable_fpt} to prove that the $k$-\textsc{Equistable} problem is FPT.
In the following, we assume that $k> 1$.
The only graphs that are 1-equidominating are the graphs $K_n$, $\overline{K_n}$ and $T(2n,n)$ ($n\in\IN$) with equidominating structure $(\vw\equiv 1,1)$, $(\vw\equiv 1,n)$ and $(\vw\equiv 1,2)$, respectively. 

\begin{theorem} \label{thm:kequidominationfpt}
	The $k$-\textsc{Equidomination} problem admits a generalized kernel of size $\mathcal{O}(k^{3k+1})$ which is computable in polynomial time. 
	Furthermore, there is an algorithm to solve the $k$-\textsc{Equidomination} problem which runs in time $\mathcal{O}\left(nm^2 + n^2 + k^{6k^2+7k+1}\right)$ for a given graph on $n$ vertices and $m$ edges.
\end{theorem}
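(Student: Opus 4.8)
The plan is to mimic the proof of Theorem~\ref{thm:targettequidominationfpt}, the only genuine novelty being that the target value $t$ is not part of the input and — as $\overline{K_n}$ already shows — is not bounded by any function of $k$, so the first job is to bring $t$ under control. As preprocessing I would, in time $\mathcal O(nm^2+n^2)$, compute the twin partition, the pseudo class partition $\{P_1,\dots,P_s\}$ and all \vectorname-vectors of $G$, rejecting at once (by Corollary~\ref{cor:differentpseudodifferentweights}) if $s>k$. A stable set class $S$ with $N(S)=\emptyset$ consists of isolated vertices, each of which lies in every dominating set; a short argument shows $G$ is $k$-equidominating iff $G-(S\setminus\{v\})$ is (re-attaching a weight-$1$ vertex is always possible), so all but one vertex of each such class may be deleted.

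The core step is a bound of the form $t\le p(k)$ with $p$ a fixed polynomial, in the sense that every $k$-equidominating graph with no isolated vertices (and likewise the pseudo graphs that arise) admits a structure whose target value is at most $p(k)$; the intended estimate is $t=\vw(D)=\sum_{i=1}^s w_i\,|D\cap P_i|$ for a suitably chosen mds $D$, where $w_i=\vw(P_i)/|P_i|\le k$ is the (constant) weight of $P_i$ and $s\le k$. For singleton and clique classes $|D\cap P_i|\le 1$ (Observation~\ref{obs:einknotenproCC}) and for stable set bundles $|D\cap P_i|\le 2$; for stable set classes and clique bundles I would argue, using Lemma~\ref{lem:verschiedenegewichteabstandzwei}, Lemma~\ref{lem:SSCmitgleichemgewicht}, Lemma~\ref{lem:mu>r_not_equidom} and the bound $|S|\le\vw(S)\le t$ for a stable set class $S$, that either the relevant sizes are already $\mathrm{poly}(k)$ or $G$ is not $k$-equidominating and we reject. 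Since $w_i\le k$, $s\le k$ and each $|D\cap P_i|$ is $\mathrm{poly}(k)$, this yields $t\le p(k)$; matching the kernel bound claimed below forces $p$ to be cubic, so set $r\coloneqq p(k)=\Theta(k^3)$. The point of the bound is that it need never be evaluated: with this choice of $r$, the equivalence "$G$ is $k$-equidominating iff the reduced instance is" holds in both directions, because whenever one side is $k$-equidominating it has a structure with target $\le r$, so Lemmas~\ref{lem:cliqueclassreduction}, \ref{lem:stablesetbundlereduction} and~\ref{lem:clique_bundle_subset_reduction} become applicable.

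Having fixed $r$, I would run exactly the reduction chain of Theorem~\ref{thm:targettequidominationfpt}: the $r$-Clique Class Reduction, the $\lceil r/2\rceil$-Stable Set Bundle Reduction, and then — after passing to the pseudo graph and invoking Corollary~\ref{cor:equidom_iff_equi-mu-feasible} — the $r$-Pseudo Graph Reduction applied simultaneously to all clique-bundle blocks; stable set classes are already of size $\le t\le r$ and need no reduction. In the resulting pseudo graph $\mathfrak P''$ there are at most $k$ blocks, each non-clique-bundle block has at most $r$ vertices, and a clique-bundle block — for which $\mu(v)_l=1$ throughout and every other component is $\le r$ by Lemma~\ref{lem:mu>r_not_equidom} — has at most $(r+1)^{k-1}$ distinct \vectorname-vectors and hence, after the reduction, at most $r(r+1)^{k-1}$ vertices. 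Summing over $\le k$ blocks gives $|V''|=\mathcal O\big(k\cdot r(r+1)^{k-1}\big)=\mathcal O(k^{3k+1})$, the announced generalized kernel. Finally I would run Algorithm~\ref{alg:xp_algorithm} with parameter $k$ on $\mathfrak P''$; since the pseudo class partition is already known its first two summands vanish, and with $n''=\mathcal O(k^{3k+1})$ its running time is $\mathcal O\big((n'')^k k^k+(n'')^{2k+2}k^{-k-1}+k^{3k+3}\big)=\mathcal O(k^{6k^2+7k+1})$, giving the total $\mathcal O(nm^2+n^2+k^{6k^2+7k+1})$.

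The main obstacle is precisely the target-value bound of the second paragraph: for the Target-$t$ problem $t$ is simply the parameter, whereas here one must first recognise and excise the genuinely unbounded parts of the graph — isolated vertices, and oversized stable set classes and clique bundles, which turn out to be incompatible with $k$-equidomination — and only then show that the remaining structure forces $t$ to be polynomial in $k$ so that the three reduction rules of Section~\ref{sec:reduction}, which are stated only for target-$t$ $k$-equidominance with $t\le r$, can be used with an $r$ depending on $k$ alone. Everything after the bound is a routine adaptation of Section~\ref{sec:reduction} together with the running-time analysis of Algorithm~\ref{alg:xp_algorithm}.
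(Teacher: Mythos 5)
There is a genuine gap, and it sits exactly where you locate the ``main obstacle'': the claim that $t$ can be bounded by a polynomial in $k$. That bound is simply false in the presence of a large isolated stable set class --- $\overline{K_n}$ is $1$-equidominating, but every equidominating structure of it has $t=\vw(S)=n$ --- so your assertion that ``whenever one side is $k$-equidominating it has a structure with target $\le r$'' with $r=\Theta(k^3)$ cannot hold, and Lemmas~\ref{lem:cliqueclassreduction}, \ref{lem:stablesetbundlereduction} and~\ref{lem:clique_bundle_subset_reduction} (all stated only for $t\le r$) do not become applicable. You try to preempt this by deleting, in preprocessing, all but one vertex of each isolated stable set class and ``re-attaching a weight-$1$ vertex,'' but that reduction is wrong in the backward direction: if $(\vw',t')$ is a $k$-equidominating structure of the reduced graph $G'$ and you give the $m-1$ re-attached isolated vertices weight $1$ with target $t=t'+m-1$, then any $Y\subseteq V(G')$ with $\vw'(Y)=t'+j$ for some $1\le j\le m-1$ yields a set of total weight $t$ that omits $j$ isolated vertices and hence is not even dominating. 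Such $Y$ generically exist (e.g.\ for $G'=P_3\cup K_1$ with weights $1,2,1$ on the path, $3$ on the isolated vertex and $t'=5$, the set $\{a,b,v\}$ has weight $6=t'+1$). So the single-vertex reduction does not preserve $k$-equidomination as justified, and the ``short argument'' does not exist in the form you describe.

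The paper's proof needs two dedicated lemmas precisely to close this hole, and your proposal reconstructs neither. Lemma~\ref{lem:onlyonebigssc} shows that at most one pseudo class of size $\ge k^2$ can be a stable set class, which splits the proof into Case~1 (no large stable set class; here your plan works and the paper indeed derives $t\le k^3+k^2$ from a small dominating set and runs the three reduction rules, obtaining the $\mathcal O(k^{3k+1})$ pseudo-graph kernel exactly as you describe) and Case~2 (a unique large stable set class $S$, all else of size $<k^2$). In Case~2 a non-isolated $S$ of size $\ge k^5$ is shown to be incompatible with $k$-equidomination ($t\le k^4$ from an mds avoiding $S$ versus $t\ge\vw(S)\ge k^5$), while an isolated $S$ is reduced to $k^5$ --- not one --- vertices: Lemma~\ref{lem:konstantaufisolierterssc} first establishes that one may assume the equidominating function is \emph{constant} on $S$, the deleted vertices are re-attached with that common weight $i$ (not weight $1$), and the counting argument that a weight-$t$ set $X$ must satisfy $|S\setminus X|\le k^4$ and may be assumed to satisfy $S\setminus X\subseteq S'$ requires keeping $|S'|\ge k^4$ representatives. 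In this case the reduction rules of Section~\ref{sec:reduction} are never invoked; the kernel is the graph itself, of size at most $k^5+k^3$. Without this case analysis and these two lemmas your argument does not go through.
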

The proof of this theorem builds upon two more lemmas. 

\begin{lemma} \label{lem:onlyonebigssc}
	A graph $G=(V,E)$ is not $k$-equidominating ($k\in \IN$) if $G$ has two pseudo classes of size at least $k^2$, where one of those pseudo classes is a stable set class.
\end{lemma}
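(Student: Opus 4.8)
The plan is to argue by contradiction. Suppose $(\vw,t)$ is a $k$-equidominating structure of $G$; let $S$ be the stable set class with $|S|\ge k^2$ and $P$ the other pseudo class with $|P|\ge k^2$. I will exhibit a vertex set of weight exactly $t$ that is not an mds, contradicting the defining property of $(\vw,t)$. (For $k=1$ there is nothing to prove: every $1$-equidominating graph is one of $K_n$, $\overline{K_n}$, $T(2n,n)$, each with a single pseudo class; so assume $k\ge 2$.) First I pin down $t$: by the remark after Observation~\ref{obs: SSC-maximal-t-knoten}, $S$ extends to an mds $D_0$ with $D_0\cap N(S)=\emptyset$, so writing $D_0=S\disjcup D'$ with $D'\subseteq V\setminus N[S]$ and using that weights are positive, $t=\vw(D_0)\ge\vw(S)\ge|S|\ge k^2$. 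In particular, since every weight is at most $k$, no mds has size $\le k-1$ (it would weigh at most $k(k-1)<k^2\le t$); this observation is what handles the degenerate configurations below.

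The engine is a swap on $D_0$. For nonempty $Z\subseteq S$ and nonempty $W\subseteq V\setminus D_0$ with $\vw(W)=\vw(Z)$, set $X\coloneqq(D_0\setminus Z)\cup W$; then $\vw(X)=t$. Removing $Z$ from $D_0$ undominates exactly $Z$ (their neighbours all lie in $N(S)$, which misses $D_0$) while every other vertex stays dominated, because the surviving vertices of $S$ still cover $N(S)$. Thus if $W$ avoids $N(S)$, then $X$ still misses $Z$ and is not a dominating set; and if $W$ is seen by $S$ (so $W\subseteq N(S)$) and $|S\setminus Z|\ge 2$, then $X$ dominates $V$ but is not minimal, because for $s\in S\setminus Z$ the set $X\setminus\{s\}$ still dominates ($s$ is covered by $W$, and $N(S)$ by the other vertices of $S$). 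Either way $X$ is the desired weight-$t$ non-mds. To find $Z$ and $W$ I use $P$ together with the crude bound $\vw\le k$: some weight $a$ occurs at least $k$ times in $S$, and some weight $b$ occurs at least $k$ times in a large part of $P$ lying outside $D_0$ (recall that $\vw$ is constant on $P$ when $P$ is a clique class, stable set bundle, or clique bundle, and that $P$ is again a huge stable set class otherwise); then taking $Z$ to consist of $b/\gcd(a,b)$ vertices of $S$ of weight $a$ and $W$ of $a/\gcd(a,b)$ vertices of $P$ of weight $b$ (both counts at most $k$) gives $\vw(Z)=\vw(W)=\operatorname{lcm}(a,b)$, and $|S\setminus Z|\ge k^2-k\ge 2$. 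This is exactly where the threshold $k^2$ is spent.

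The case split is on how $P$ meets $S$. If $P$ is seen by $S$, use option~(B): a stable set class and a pseudo class all of whose vertices it sees cannot both lie entirely in one mds (a vertex of $S$ would have no private neighbour), so $P\setminus D_0$ is large and the matching applies. If $P$ is not seen by $S$, then $P\subseteq V\setminus N[S]$, and provided $P\not\subseteq D_0$ use option~(A) with $W\subseteq P\setminus D_0$. The remaining possibility — every mds containing $S$ also contains all of $P$ — forces, via a short argument (otherwise one could dominate $P$ from a vertex outside $N[S]$ and build an mds of $G[V\setminus N[S]]$ missing $P$), that $N(P)\subseteq N(S)$; one then either re-chooses $D'$ so that the reservoir $V\setminus(N[S]\cup D')$ is nonempty and applies~(A), or $G[V\setminus N[S]]$ is edgeless, in which case comparing two mds's — one containing $S$, one in which $S$ is dominated from $N(S)$ — forces a vertex to carry weight $\ge|S|\ge k^2>k$, a contradiction. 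The main obstacle, absorbing almost all of the bookkeeping, is this last branch, and it is most delicate when $P$ is a clique bundle, since a clique bundle need not see $S$ uniformly and may put several vertices into an mds; a careful proof therefore examines the four possible shapes of $P$ (a singleton being ruled out by size) separately.
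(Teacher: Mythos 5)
Your core mechanism is the one the paper uses: pigeonhole over the $k$ possible weights inside the two large pseudo classes to obtain sets $Z\subseteq S$ and $W\subseteq P$ with $|Z|,|W|\le k$ and $\vw(Z)=\vw(W)$, then exchange them inside an mds containing $S$ to produce a weight-$t$ set that is not an mds. Two small remarks on the main case: the paper reaches the contradiction more cheaply, since after the exchange the set meets the stable set class $S$ in $|S|-|Z|$ vertices with $2\le |S|-|Z|\le |S|-1$, which already violates Observation~\ref{obs:keinereineroderalleinSSC} without any domination analysis; and when $P$ is a clique bundle a set $W\subseteq P$ can meet $N(S)$ without being contained in it, so your dichotomy should be on $W\cap N(S)=\emptyset$ versus $W\cap N(S)\neq\emptyset$ (your argument for option (B) in fact only uses the latter, so this is cosmetic).

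The genuine gap sits in the branch you yourself flag as the crux: when $P$ cannot be kept (mostly) outside the reference mds $D_0$. Three concrete problems. First, in two places you only secure that the reservoir of usable vertices ($P\setminus D_0$, respectively $V\setminus(N[S]\cup D')$) is \emph{nonempty}, but your construction of $W$ needs up to $a/\gcd(a,b)\le k$ vertices of a common weight $b$ there; if, say, every vertex of $S$ has weight $2$ and the single available vertex has weight $3$, no $Z\subseteq S$ with $\vw(Z)=\vw(W)$ exists and the swap cannot be performed. Second, in the sub-branch where $G[V\setminus N[S]]$ is edgeless you assert that comparing an mds containing $S$ with one ``in which $S$ is dominated from $N(S)$'' forces a single vertex of weight at least $|S|$; neither the existence of such a second mds of the required shape nor that conclusion is proved --- the missing weight $\vw(S)\ge k^2$ could a priori be spread over many vertices of $N(S)$, each of weight at most $k$. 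Third, the clique-bundle case, where $|P\cap D_0|$ is not bounded by any of the paper's observations and $P$ need not see $S$ uniformly, is explicitly deferred rather than settled. For what it is worth, the paper's own proof compresses all of this into the unproved claim that some mds $D\supseteq S$ satisfies $|P\cap D|\le 2$, which is false in general (for $K_{1,k^2}$ together with $k^2$ isolated vertices, every mds containing $S$ contains all of $P$); so your instinct that this configuration requires a genuine argument is correct, but the argument you sketch does not yet close it.
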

\begin{proof}
	For $k\in \IN$, let $G=(V,E)$ be a graph with a stable set class $S\subseteq V$ and a different pseudo class $P\subseteq V$ such that $\min\{|S|,|P|\}\geq k^2$.
	Suppose that $G$ is $k$-equidominating with $k$-equidominating function $\vw \colon V \rightarrow [k]$.
	It is straightforward that -- regardless of what kind of pseudo class $P$ is -- there is an mds $D$ with $S\subseteq D$ and $|P\cap D|\leq 2$.

	Let $i,j\in [k]$ be weights such that $|\{s\in S\mid\vw(s)=i\}|\geq k$ and $|\{p\in P\mid\vw(p)=j\}|\geq k$.
	Such weights exist due to the size of $S$ and $P$.
	Further, there are at least $k$ vertices of $P$ of weight $j$ are not in $D$.
	Now, let $S'\subseteq S$ be a subset of $j$ vertices of weight $i$ and $P'\subseteq P$ be a subset of $i$ vertices of weight $j$ with $P'\cap D=\emptyset$.
	Consequently we get $\vw(S')=\vw(P')$.
	This leads to a contradiction as the set $(D\setminus S')\cup P'$ can not be an mds (cf.\ Observation~\ref{obs:keinereineroderalleinSSC}) while being of the same weight as the mds $D$.
\end{proof}

\begin{lemma} \label{lem:konstantaufisolierterssc}
	Let $G=(V,E)$ be a graph and $k\in \IN$.
	Furthermore, let $S$ be an isolated stable set class of size at least $k^5$ and $|S\setminus V|\leq k^3$.
	Then $G$ is $k$-equidominating if and only if there exists a $k$-equidominating function that is constant on $S$.
\end{lemma}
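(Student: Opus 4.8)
The plan is to prove the nontrivial implication ``$G$ is $k$-equidominating $\Longrightarrow$ some $k$-equidominating function is constant on $S$''; the converse needs no argument. Write $R\coloneqq V\setminus S$ (the size hypothesis being $|R|\le k^3$). Since $S$ is an isolated stable set class, every vertex of $S$ is an isolated vertex of $G$, so every dominating set of $G$ contains $S$; and since a vertex of $S$ is always its own private neighbour while there are no edges between $S$ and $R$, one checks that the minimal dominating sets of $G$ are precisely the sets $S\cup Y$ with $Y$ an mds of $G[R]$. Now fix a $k$-equidominating structure $(\vw,t)$ of $G$ and put $t'\coloneqq t-\vw(S)$. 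Combining the previous characterization with the defining equivalence of $(\vw,t)$ yields the two-sided statement: for $Y\subseteq R$, $Y$ is an mds of $G[R]$ if and only if $\vw(Y)=t'$; in particular $t'\ge 0$ ($t'\ge 1$ if $R\neq\emptyset$), being the weight of an mds of $G[R]$. Finally, since $|S|\ge k^5$ and $\vw$ takes at most $k$ values on $S$, the pigeonhole principle provides a weight $a\in[k]$ attained by at least $k^4$ vertices of $S$; call this vertex set $S_a$, so $|S_a|\ge k^4$.

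The heart of the proof — and the step I expect to be the main obstacle — is the following claim, which exploits only that $(\vw,t)$ correctly rejects the sets obtained from an mds by deleting some isolated vertices of $S$: for every $Y\subseteq R$ the integer $\vw(Y)-t'$ is not a positive multiple of $a$. I would argue by contradiction. If $\vw(Y)-t'=ai$ for some integer $i\ge 1$, then $ai=\vw(Y)-t'\le\vw(R)\le k\,|R|\le k^4\le a\,|S_a|$, hence $i\le|S_a|$, so we may pick $T\subseteq S_a$ with $|T|=i$; note $\emptyset\neq T\subsetneq S$. Putting $S'\coloneqq S\setminus T$ we get $\vw(S'\cup Y)=\vw(S)-\vw(T)+\vw(Y)=\vw(S)-ai+(t'+ai)=\vw(S)+t'=t$, whereas $S'\cup Y$ does not dominate the isolated vertices of $T$ and hence is not an mds of $G$ — contradicting that $(\vw,t)$ is an equidominating structure. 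Note that the two size hypotheses enter precisely, and only, at the chain $ai\le k\,|R|\le k^4\le a\,|S_a|$, so the stated bounds $k^5$ and $k^3$ are exactly what is consumed.

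With the claim in hand, I would define $\vw^\ast\colon V\to[k]$ by $\vw^\ast\equiv a$ on $S$ and $\vw^\ast\coloneqq\vw$ on $R$, and set $t^\ast\coloneqq a\,|S|+t'$ (so $t^\ast\ge 1$). For an arbitrary $X\subseteq V$, write $X=S'\cup Y$ with $S'\subseteq S$, $Y\subseteq R$, and put $j\coloneqq|S|-|S'|$; then $\vw^\ast(X)=a\,|S|-aj+\vw(Y)$. If $j=0$, then $\vw^\ast(X)=t^\ast$ if and only if $\vw(Y)=t'$, if and only if $Y$ is an mds of $G[R]$, if and only if $X=S\cup Y$ is an mds of $G$. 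If $j\ge 1$, then $\vw^\ast(X)=t^\ast$ would force $\vw(Y)-t'=aj$, a positive multiple of $a$, which is excluded by the claim; and such an $X$ misses isolated vertices of $S$, so it is not an mds of $G$ anyway. Hence $\vw^\ast(X)=t^\ast$ holds exactly when $X$ is an mds of $G$, so $(\vw^\ast,t^\ast)$ is a $k$-equidominating structure of $G$ whose weight function is constant on $S$, as desired. (The degenerate case $R=\emptyset$ is subsumed: then $t'=0$, the claim is vacuous, and the displayed computation still gives the equivalence.)
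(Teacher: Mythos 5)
Your proof is correct and follows essentially the same route as the paper's: the same pigeonhole argument produces a weight $a$ with $|S_a|\ge k^4$, the same modified structure $(\vw^\ast,t^\ast)$ with $t^\ast=t-\vw(S)+a|S|$ is built, and the size hypotheses are consumed in the same inequality $k\,|V\setminus S|\le k^4\le a\,|S_a|$. The only cosmetic difference is that you package the verification as the claim that $\vw(Y)-t'$ is never a positive multiple of $a$ for $Y\subseteq V\setminus S$, whereas the paper bounds $|S\setminus X|$ directly and then exchanges $S\setminus X$ into $S_a$; both arguments are sound.
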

\begin{proof}
	First note that the stable set class $S$ is contained in every mds of $G$.
	The sufficiency-part is trivial.
	
	So, let $G$ be $k$-equidominating with $k$-equidominating structure $(\vw,t)$.
	As there are only $k$ different weights and $|S|\geq k^5$, a weight $i\in [k]$ exists such that $|S_i|\geq k^4$, with $S_i\coloneqq\{s\in S\mid\vw(s)=i\}$.
	We show that $(\vw',t')$ with 
	\[\vw'(v)\coloneqq \begin{cases} \vw(v), & \text{if } v \in  V\setminus S,\\
																			i, 	 & \text{if } v\in S, \end{cases}\]
	and $t'\coloneqq t-\vw(S)+\vw'(S)=t-\vw(S)+i|S|$ is a $k$-equidominating structure.
	It is easy to see that $\vw'$ is bounded by $k$ and that $\vw'(D)=t'$ holds for every mds $D$.

	As before, the only tricky part is to show that any subset $X\subseteq V$ with $\vw'(X)=t'$ is an mds.
	So, let $X\subseteq V$ be a subset of vertices with $\vw'(X)=t'$.
	
	We define $r\coloneqq|S\setminus X|$ and suppose that $r> k^4$.
	Since $|X\setminus S|\leq |V\setminus S| \leq k^3$, we get $\vw'(X\setminus S)\leq k^4$.
	It follows that 
	\begin{align*}
		\vw'(X) & = \vw'(S)-\vw'(S\setminus X)+\vw'(X\setminus S)  \\
						& \leq i(|S|-r)+k^4 \\
						& < i(|S|-r) +ir \\
						&  = i|S| \\
						& \leq t',
	\end{align*}
	a contradiction. 
	So, $r=|S\setminus X|\leq k^4$ and hence, together with $k^4\leq|S_i|$ and $\vw'\big|_{S}\equiv i$, we may assume that $S\setminus X \subseteq S_i$.
	Otherwise we can exchange vertices of $(S\setminus X)\setminus S_i$ with vertices of $X\cap S_i$ until $S\setminus X \subseteq S_i$.
	In doing so, we maintain the total weight $t'$ as well as the property whether being an mds.
	This yields 
	\begin{align*}
		\vw(X) & = \vw'(X) -\vw'(X\cap S) + \vw(X\cap S)  \\
							& = t' - i(|S|-r) + \vw(X\cap S) \\
							& = t' - i|S| + (\vw(X\cap S) +ir) \\
							& = t' - \vw'(S) + (\vw(X\cap S) + \vw(S\setminus X)) \\
							& = t' - \vw'(S) + \vw(S) \\
							& = t.
	\end{align*}
	Hence, $X$ is an mds of $G$ and the proof is finished.
\end{proof}
 
It follows that if we want to check a graph of the form like the one given in Lemma~\ref{lem:konstantaufisolierterssc} for $k$-equidomination, it is sufficient to work with $k$-equidominating functions that are constant on the isolated stable set class.
Now, we gathered all tools to prove Theorem~\ref{thm:kequidominationfpt}.
\begin{proof}[Proof of Theorem~\ref{thm:kequidominationfpt}.]
	Let $G$ be a graph with $|V(G)|=n$, $|E(G)|=m$ and $k\in\IN$.
	First, we decompose $G$ into pseudo classes.
	If there are more than $k$ pseudo classes, then $G$ is not $k$-equidominating, by Corollary~\ref{cor:differentpseudodifferentweights}.
	Otherwise, following Lemma~\ref{lem:onlyonebigssc}, two cases may occur.
	If none of the cases are fulfilled, then $G$ is also not $k$-equidominating.
	\begin{case}
		Every pseudo class of size at least $k^2$, if any, is either a clique class, a clique bundle or a stable set bundle.
	\end{case}
	As far as existent, we take one vertex of every singleton class, clique class and clique bundle, two vertices of an isolated stable set bundle, all vertices of an isolated stable set class and one neighbor of every other stable set class and stable set bundle.
	By this, we get a dominating set of size at most $k^2+k$, which contains an mds.
	This means $t\leq k^3+k^2$ must hold for every $k$-equidominating structure $(\vw,t)$ of $G$.
	Then, we compute all \vectorname-vectors and check if one of them has a component greater than $k^3+k^2$.
	If so, $G$ is not $k$-equidominating, by Lemma~\ref{lem:mu>r_not_equidom}.
	
	Otherwise, we perform the $r$-Clique Class Reduction rule and the $r$-Stable Set Bundle Reduction rule with $r = k^3+k^2$ and $r = \lceil(k^3+k^2)/2\rceil$, respectively. 
	Next, we compute the pseudo graph of the reduced graph and apply the $r$-Pseudo Graph Reduction rule with $r = k^3+k^2$.
	We obtain a pseudo graph $\mathfrak{P}''=(V'',\mathcal{P}'',\mu'')$ with $|V''|=\mathcal{O}(k^{3k+1})$ (compare the proof of Theorem~\ref{thm:targettequidominationfpt}).
	Again, by Lemma~\ref{lem:cliqueclassreduction}, Lemma~\ref{lem:stablesetbundlereduction} and Lemma~\ref{lem:clique_bundle_subset_reduction}, the obtained pseudo graph $\mathfrak{P}''$ is $k$-equidense if and only if $G$ is $k$-equidominating.
	\begin{case}
		There is a unique stable set class $S$ with $|S|\geq k^2$ and every other pseudo class has fewer than $k^2$ vertices.
	\end{case}
	In this case $|V(G)\setminus S| \leq k^3$, since there are at most $k-1$ pseudo classes besides $S$ and each of them has fewer than $k^2$ vertices.
	Further, we can assume that $|S|\geq k^5$ since otherwise $|V(G)|\leq k^5+k^3= \mathcal{O}(k^{3k+1})$ and the proof is finished.
	We distinguish two subcases.

	\noindent\textit{Case 2.1.~The stable set class $S$ does not see any other twin class.}\newline	
	We now construct a graph $G'$ by deleting all but $k^5$ many vertices of $S$ and claim that $G'$ is $k$-equidominating if and only if $G$ is $k$-equidominating.
	Let $S'\coloneqq S\cap V(G')$ be the set of the remaining vertices of $S$.
	Note that $S'$ and $S$ are contained in every mds of $G'$ and $G$, respectively.
	 
	\uline{$\Longleftarrow$:} Let $G$ be $k$-equidominating and $(\vw,t)$ a $k$-equidominating structure.
	We show that $(\vw',t')$ with $\vw'\coloneqq\vw\big|_{V(G')}$ and $t'\coloneqq t-\vw(S\setminus S')$ is a $k$-equidominating structure of $G'$.
		
	Let $D'$ be an mds of $G'$. 
	Then, $D'\disjcup (S\setminus S')$ is an mds of $G$ and consequently $\vw'(D')=\vw(D')=t-\vw(S\setminus S')=t'$.
	
	Now, let $X'\subseteq V(G')$ be a subset of vertices with $\vw'(X')=t'$.
	Then, $\vw(X'\disjcup(S\setminus S'))=t$ and hence $X'\disjcup(S\setminus S')$ is an mds of $G$.
	It follows that $X'$ is an mds of $G'$.
	
	\uline{$\Longrightarrow$:}  Let $G'$ be $k$-equidominating with $k$-equidominating structure $(\vw',t')$.
	Since all conditions of Lemma~\ref{lem:konstantaufisolierterssc} are met, we may assume that $\vw'\big|_{S'}\equiv i$ for some $i\in [k]$.
	We prove that $(\vw,t)$ with 
	\[\vw(v)\coloneqq \begin{cases} \vw'(v), & \text{if }v \in  V(G'),\\
																		 i,    & \text{if }v\in S\setminus S', \end{cases}\]
	and $t\coloneqq t'+i|S\setminus S'|$ is a $k$-equidominating structure of $G$. 

	Let $D\subseteq V(G)$ be an mds of $G$.
	Then there is an mds $D'\subseteq V(G')$ of $G'$ such that $D=(S\setminus S')\disjcup D'$.
	Thus $\vw(D)=\vw(S\setminus S')+\vw(D')=i|S\setminus S'|+\vw'(D')=t$. 

	Now let $X\subseteq V(G)$ be a subset $\vw(X)= t$.
	By $|V(G)\setminus S|\leq k^3$, we get that $\vw(V(G)\setminus S)\leq k^4$.
	Suppose $|S\setminus X|>k^4$.
	Then
	\begin{align*}
		\vw(X) & = \vw(X\cap S) + \vw(X\cap(V(G)\setminus S)) \\
							& \leq \vw(S) - \vw(S\setminus X) + \vw(V(G)\setminus S) \\
							& \leq i|S| - i|S\setminus X| + k^4 \\
							& < i|S| - ik^4 + k^4 \\ 
							& \leq i|S| \\
							& = \vw'(S') + i|S\setminus S'| \\
							& \leq t'+ i|S\setminus S'| \\
							& = t,
	\end{align*}
	a contradiction.
	Thus, $|S\setminus X|\leq k^4$ and together with $k^4 \leq |S'|$ we may assume that $S\setminus X \subseteq S'$ (compare proof of Lemma~\ref{lem:konstantaufisolierterssc}).
	With $X'\coloneqq X\cap V(G')$ we get $S\setminus S'= X\setminus X'$.
	It follows that $\vw'(X')=\vw(X)-i|X\setminus X'|=\vw(X)-i|S\setminus S'|=t'$. 
	So, $X'$ is an mds of $G'$ and consequently $X$ is an mds of $G$.
	Hence, $G$ is $k$-equidominating and the claim is proved.
	
	Taken together, we have proved that it is sufficient to check whether $G'$ is $k$-equidomi\-na\-ting with $|V(G')|\leq k^5+k^3 = \mathcal{O}(k^{3k+1})$. 

	\noindent\textit{Case 2.2. The stable set class $S$ sees another twin class $T$.}\newline
	So, suppose $G$ is $k$-equidominating with $k$-equidominating structure $(\vw,t)$ and let $D$ be an mds with $D \cap S=\emptyset$.
	It follows that $|D|\leq |V(G)\setminus S|\leq k^3$ and therefore $t \leq k^4$.
	But at the same time $t\geq\vw(S)\geq k^5$, a contradiction.
	Hence, $G$ is not $k$-equidominating.
	
	The determination of the pseudo class partition, the \vectorname-vectors and the pseudo graph as well as the application of the reduction rules can be done in polynomial time (compare the proof of Theorem~\ref{thm:targettequidominationfpt}).
	This finishes the proof of the first assertion of this theorem.

	Having the pseudo class partition and the \vectorname-vectors at hand, we conclude that $G$ is not $k$-equidominating if there are more than $k$ pseudo classes, two stable set classes of size at least $k^2$ or a non-isolated stable set class of size at least $k^5$.
	If no stable set class of size at least $k^2$ exists, we first apply the three reduction rules in linear time and then Algorithm~\ref{alg:xp_algorithm} to the obtained pseudo graph $\mathfrak{P}''$ of size $n''=\mathcal{O}(k^{3k+1})$ in time
	\begin{align*}
		\mathcal{O}\left({n''}^k k^k + {n''}^{2k+2} k^{-k-1} + k^{3k+3}\right)
			& =\mathcal{O}\left(k^{3k^2+2k}+k^{6k^2+7k+1}+k^{3k+3}\right) \\
			& =\mathcal{O}\left(k^{6k^2+7k+1}\right).
	\end{align*}
	If an isolated stable set class exists, then we apply Algorithm~\ref{alg:xp_algorithm} to the graph $G'$ on at most $k^5+k^3$ vertices, obtained by reducing the stable set class to $k^5$ vertices (if necessary).

	This yields a total running time (compare proof of Theorem~\ref{thm:targettequidominationfpt}) of
	\[\mathcal{O}\left(nm^2 + n^2 + k^{6k^2+7k+1}\right).\]
\end{proof}

\section{Hereditarily Equidominating Graphs} \label{sec:hereditarily}

A graph $G$ is called \emph{hereditarily equidominating} if every induced subgraph of $G$ is equidominating.
In this section, we give a characterization of the class of hereditarily equidominating graphs in terms of the list of forbidden induced subgraphs and a structural decomposition.
This decomposition yields an $\mathcal O (n (n+m))$ time recognition algorithm.

In order to state our characterization, we need some more notions.
A \emph{chain graph} is a bipartite graph where the neighborhoods of the vertices of either side are comparable with respect to inclusion.
Let $G_1$ and $G_2$ be two disjoint graphs.
Let $U_i$ be the (possibly empty) set of universal vertices of $G_i$, for $i \in \{1,2\}$.
Let $B$ be any chain graph with bipartition $U_1$, $U_2$.
We call the graph $(G_1 \cup G_2) + E(B)$ a \emph{chain-join} of $G_1$ and $G_2$.
Note that the disjoint union of any two graphs is a particular chain-join of these two graphs.

Regarding our decomposition theorem below, the class of \emph{basic graphs} equals $\{K_1\} \cup \{K_{2n} - ne : n \ge 2\}$.
An equidominating structure of $K_{2n} - ne$, $n \ge 2$, is given by $\vw \equiv 1$ and $t=2$.
Hence, basic graphs are equidominating.
Using Lemma~\ref{lem:universalremoval} below, we see that basic graphs are in fact hereditarily equidominating.
Interestingly, the basic graphs (except for $K_1$) are those graphs which consist of one stable set bundle. 

Let $\mathcal F \coloneqq \{P_5,C_5,{\it bull},{\it banner},{\it house},K_{2,3},\overline{P_2 \cup P_3}\}$ (see Figure~\ref{fig:forbiddensubgraphs} for an illustration).
As the next theorem shows, the set $\mathcal F$ is exactly the set of forbidden induced subgraphs of the class of hereditarily equidominating graphs.

\begin{theorem}\label{thm:hereditarystructure}
For any graph $G$, the following assertions are equivalent.
\begin{enumerate}[label={(\alph*)}]
	\item $G$ is hereditarily equidominating.
	\item $G$ is $\mathcal F$-free.
	\item One of the following assertions holds.
		\begin{enumerate}[label={(\roman*)}]
			\item $G$ is a basic graph.
			\item $G$ is obtained from a hereditarily equidominating graph by adding a universal vertex.
			\item $G$ is the chain-join of two hereditarily equidominating graphs.
		\end{enumerate}
\end{enumerate}
\end{theorem}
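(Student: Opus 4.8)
The plan is to prove the three implications in the cycle (c) $\Rightarrow$ (a) $\Rightarrow$ (b) $\Rightarrow$ (c), arguing by induction on $|V(G)|$, so that in the step (b) $\Rightarrow$ (c) the full equivalence is available for strictly smaller $\mathcal{F}$-free graphs (recall that induced subgraphs of $\mathcal{F}$-free graphs are $\mathcal{F}$-free). For (a) $\Rightarrow$ (b): if $G$ is hereditarily equidominating then every induced subgraph is equidominating, so in particular any induced copy of some $F\in\mathcal{F}$ would make $F$ itself equidominating; hence it suffices to verify that none of the seven graphs in $\mathcal{F}$ is equidominating. Each verification is a finite computation: list the minimal dominating sets of $F$, record the equations $\vw(D)=t$ they impose together with the inequality $\vw(D')\neq t$ for one conveniently chosen dominating set $D'$ that fails to be minimal, and derive a contradiction from $\vw>0$. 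For instance, in $P_5=v_1v_2v_3v_4v_5$ the sets $\{v_1,v_4\}$, $\{v_2,v_4\}$, $\{v_2,v_5\}$ and $\{v_1,v_3,v_5\}$ are all minimal dominating, which forces $\vw(v_4)=\vw(v_5)$ and $\vw(v_3)+\vw(v_5)=\vw(v_4)$, hence $\vw(v_3)=0$, impossible; the remaining six cases are analogous, several of them following immediately from the mds-exchangeability observations of Section~\ref{sec:propertiestwinclasses}.

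For (c) $\Rightarrow$ (a) we treat the three sub-cases. In (i), $(\vw\equiv 1,\,1)$ is an equidominating structure of $K_1$ and $(\vw\equiv 1,\,2)$ of $K_{2n}-ne$ with $n\ge 2$, since the minimal dominating sets of the latter are exactly its two-element subsets; heredity of the basic graphs then follows from Lemma~\ref{lem:universalremoval}, because deleting a vertex of one of the removed edges makes its partner universal and reduces to a smaller basic graph. Sub-case (ii) is precisely the relevant direction of Lemma~\ref{lem:universalremoval}. For (iii) we must show that a chain-join $G=(G_1\cup G_2)+E(B)$ of two hereditarily equidominating graphs is hereditarily equidominating: the heart is a description of the minimal dominating sets of $G$. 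Since every non-universal vertex of $G_i$ has all of its neighbours inside $G_i$, an mds of $G$ restricts to a dominating set of each $G_i$, and one checks that the minimal ones are essentially the unions $D_1\disjcup D_2$ of minimal dominating sets of $G_1$ and $G_2$ (with small corrections governed by the chain order on $U_1$ and $U_2$ when a vertex of $U_i$ can be dominated from the other side), together with a few degenerate families that occur when some $G_i$ is a complete graph. From equidominating structures of $G_1$ and $G_2$ one then assembles one for $G$ after rescaling the weights so that the two target values agree, and closure under induced subgraphs follows by noting that an induced subgraph of a chain-join is again a chain-join of induced subgraphs of $G_1$ and $G_2$, once vertices that have become universal are absorbed.

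For (b) $\Rightarrow$ (c), let $G$ be $\mathcal{F}$-free. If $G$ is disconnected, write $G=G_1\disjcup G_2$ with both parts nonempty (e.g.\ $G_1$ a connected component); both are $\mathcal{F}$-free and smaller, hence hereditarily equidominating by the induction hypothesis, and the disjoint union is a chain-join, so (iii) holds. If $G$ has a universal vertex $v$, then $G-v$ is $\mathcal{F}$-free and smaller, hence hereditarily equidominating by induction, so (ii) holds. The substantive case is that $G$ is connected and has no universal vertex; here we must exhibit a chain-join decomposition with nonempty chain graph or prove that $G$ is basic.

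This last structural dichotomy is the main obstacle of the theorem. Starting from a non-universal vertex $v$ and the nonempty set $\overline{N}[v]$, one uses freeness with respect to $P_5$, $C_5$, bull, banner, house, $K_{2,3}$ and $\overline{P_2\cup P_3}$ to show that $V(G)$ splits as $X\cup Y$ with $N[v]\subseteq X$, with every $X$--$Y$ edge incident only to vertices that are universal within their own side, and with these edges forming a chain graph; and one shows that when no such proper split exists, $G$ must consist of a single stable set bundle, i.e.\ $G$ is basic. Translating the seven excluded subgraphs into exactly the nestedness and modularity needed for a chain-join, while ruling out every configuration that is neither basic nor decomposable, is where the real work lies.
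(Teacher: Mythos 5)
Your overall architecture --- the cycle (c) $\Rightarrow$ (a) $\Rightarrow$ (b) $\Rightarrow$ (c) with induction on $|V(G)|$, verification that the seven graphs of $\mathcal F$ are not equidominating, and preservation of (hereditary) equidomination under adding a universal vertex and under chain-joins --- is the same as the paper's, and your $P_5$ computation and the reductions of the disconnected and universal-vertex cases are fine. But there is a genuine gap at the heart of (b) $\Rightarrow$ (c): for a connected $\mathcal F$-free graph $G$ without universal vertex you only \emph{state} that $G$ must be basic or admit a chain-join decomposition (``this is where the real work lies'') without proving it. The paper's argument here is concrete and not merely a translation exercise: take any minimal connected dominating set $D$ of $G$; since $G$ is $(P_5,C_5)$-free and connected, $D$ is a clique; if $|D|\ge 3$, two private neighbors of two vertices of $D$ together with a third vertex of $D$ induce a \textit{bull} or a \textit{house}, so $|D|=2$, say $D=\{x,y\}$; then either some private neighbors $x'\in pn(x,D)$, $y'\in pn(y,D)$ are adjacent, in which case $\{x,y,x',y'\}$ induces a dominating $C_4$ that is forced to be a homogeneous set and $K_{2,3}$-/$\overline{P_2\cup P_3}$-freeness yields $G\cong K_{2n}-ne$ (Lemma~\ref{lem:dominatingC4}); or no such pair is adjacent, and a maximum-degree choice of $\{x,y\}$ lets one split $N[x]\cap N[y]$ into two sides and verify the chain condition using \textit{bull}-, \textit{house}- and $\overline{P_2\cup P_3}$-freeness (Lemma~\ref{lem:dominatingbull}). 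Without an argument of this kind the implication (b) $\Rightarrow$ (c) is unproved.

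A secondary, fixable issue is in your proof that a chain-join $G$ of two equidominating graphs $G_1,G_2$ is equidominating: ``rescaling the weights so that the two target values agree'' does not work as stated, because a subset of total weight $t_1'+t_2'$ need not split as $t_1'$ on $V(G_1)$ and $t_2'$ on $V(G_2)$. One must instead keep $\vw_1$ on $G_1$ and multiply $\vw_2$ by some $k>\sum_{v\in V(G_1)}\vw_1(v)$, taking $t=t_1+kt_2$; then $\vw(X)=t$ forces $\vw_1(X\cap V(G_1))=t_1$ and $\vw_2(X\cap V(G_2))=t_2$ by reduction modulo $k$. Relatedly, after reducing to the case that $G$ has no universal vertex, the minimal dominating sets of the chain-join are \emph{exactly} the unions of an mds of $G_1$ with an mds of $G_2$ --- no ``small corrections'' or ``degenerate families'' are needed, and proving this exact statement (rather than an ``essentially'' version) is what makes the weight argument go through.
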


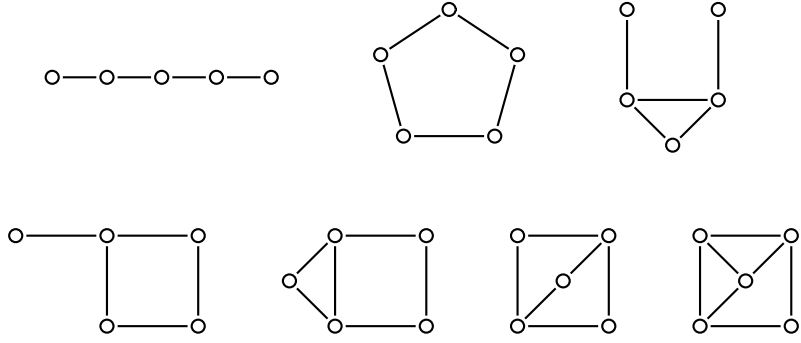
\begin{figure}
\psset{unit=1.2cm,nodesep=0.0333,}
\begin{center}
\begin{pspicture}(0,0)(8.5,4.5)

\cnode(0.4,2.75){0.1cm}{a_1}
\cnode(1,2.75){0.1cm}{a_2}
\cnode(1.6,2.75){0.1cm}{a_3}
\cnode(2.2,2.75){0.1cm}{a_4}
\cnode(2.8,2.75){0.1cm}{a_5}

\ncarc[arcangle=0]{-}{a_1}{a_2}
\ncarc[arcangle=0]{-}{a_2}{a_3}
\ncarc[arcangle=0]{-}{a_3}{a_4}
\ncarc[arcangle=0]{-}{a_4}{a_5}

\cnode(4.25,2.1){0.1cm}{b_1}
\cnode(5.25,2.1){0.1cm}{b_2}
\cnode(5.5,3){0.1cm}{b_3}
\cnode(4.75,3.5){0.1cm}{b_4}
\cnode(4,3){0.1cm}{b_5}

\ncarc[arcangle=0]{-}{b_1}{b_2}
\ncarc[arcangle=0]{-}{b_2}{b_3}
\ncarc[arcangle=0]{-}{b_3}{b_4}
\ncarc[arcangle=0]{-}{b_4}{b_5}
\ncarc[arcangle=0]{-}{b_5}{b_1}

\cnode(7.2,2){0.1cm}{c_1}
\cnode(6.7,2.5){0.1cm}{c_2}
\cnode(7.7,2.5){0.1cm}{c_3}
\cnode(6.7,3.5){0.1cm}{c_4}
\cnode(7.7,3.5){0.1cm}{c_5}

\ncarc[arcangle=0]{-}{c_1}{c_2}
\ncarc[arcangle=0]{-}{c_1}{c_3}
\ncarc[arcangle=0]{-}{c_2}{c_3}
\ncarc[arcangle=0]{-}{c_2}{c_4}
\ncarc[arcangle=0]{-}{c_3}{c_5}

\cnode(0,1){0.1cm}{d_1}
\cnode(1,1){0.1cm}{d_2}
\cnode(2,1){0.1cm}{d_3}
\cnode(1,0){0.1cm}{d_4}
\cnode(2,0){0.1cm}{d_5}

\ncarc[arcangle=0]{-}{d_1}{d_2}
\ncarc[arcangle=0]{-}{d_2}{d_3}
\ncarc[arcangle=0]{-}{d_2}{d_4}
\ncarc[arcangle=0]{-}{d_3}{d_5}
\ncarc[arcangle=0]{-}{d_4}{d_5}

\cnode(3,0.5){0.1cm}{e_1}
\cnode(3.5,1){0.1cm}{e_2}
\cnode(3.5,0){0.1cm}{e_3}
\cnode(4.5,0){0.1cm}{e_4}
\cnode(4.5,1){0.1cm}{e_5}

\ncarc[arcangle=0]{-}{e_1}{e_2}
\ncarc[arcangle=0]{-}{e_1}{e_3}
\ncarc[arcangle=0]{-}{e_2}{e_3}
\ncarc[arcangle=0]{-}{e_2}{e_5}
\ncarc[arcangle=0]{-}{e_3}{e_4}
\ncarc[arcangle=0]{-}{e_4}{e_5}

\cnode(5.5,0){0.1cm}{f_1}
\cnode(5.5,1){0.1cm}{f_2}
\cnode(6,0.5){0.1cm}{f_3}
\cnode(6.5,1){0.1cm}{f_4}
\cnode(6.5,0){0.1cm}{f_5}

\ncarc[arcangle=0]{-}{f_1}{f_2}
\ncarc[arcangle=0]{-}{f_1}{f_3}
\ncarc[arcangle=0]{-}{f_1}{f_5}
\ncarc[arcangle=0]{-}{f_2}{f_4}
\ncarc[arcangle=0]{-}{f_3}{f_4}
\ncarc[arcangle=0]{-}{f_4}{f_5}

\cnode(7.5,0){0.1cm}{g_1}
\cnode(7.5,1){0.1cm}{g_2}
\cnode(8,0.5){0.1cm}{g_3}
\cnode(8.5,1){0.1cm}{g_4}
\cnode(8.5,0){0.1cm}{g_5}

\ncarc[arcangle=0]{-}{g_1}{g_2}
\ncarc[arcangle=0]{-}{g_1}{g_3}
\ncarc[arcangle=0]{-}{g_1}{g_5}
\ncarc[arcangle=0]{-}{g_2}{g_3}
\ncarc[arcangle=0]{-}{g_2}{g_4}
\ncarc[arcangle=0]{-}{g_3}{g_4}
\ncarc[arcangle=0]{-}{g_4}{g_5}

\end{pspicture}
\end{center}
\caption{The forbidden set $\mathcal F$; Top row: $P_5$, $C_5$, {\it bull}; Bottom row: {\it banner}, {\it house}, $K_{2,3}$, $\overline{P_2 \cup P_3}$.}
\label{fig:forbiddensubgraphs}
\end{figure}

The proof of the above theorem builds upon the following lemmas.

\begin{lemma}\label{lem:Fnoted}
Every graph in $\mathcal F$ is not equidominating.
\end{lemma}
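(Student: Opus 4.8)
The plan is to show, for each of the seven graphs $H \in \mathcal F = \{P_5,C_5,\mathit{bull},\mathit{banner},\mathit{house},K_{2,3},\overline{P_2 \cup P_3}\}$, that $H$ is not equidominating. The uniform strategy is as follows: assume $(\vw,t)$ is an equidominating structure of $H$, list enough minimal dominating sets of $H$ to pin down relations among the vertex weights (each mds has weight exactly $t$), and then exhibit either (a) two vertex subsets that must have equal weight but where exactly one is an mds, or (b) a single subset of weight $t$ that is not an mds, contradicting the definition. To reduce the casework, I would first invoke the structural observations already available in the paper: by Observation~\ref{obs:constantaufklassen} (and Corollary~\ref{cor:differentpseudodifferentweights}) the weight function must be constant on each twin class and must assign distinct weights to distinct pseudo classes; this immediately handles the highly symmetric cases. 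For $C_5$, all five vertices form a single twin class (each pair of non-adjacent vertices are false twins after accounting for symmetry — more carefully, $C_5$ is vertex-transitive so $\vw$ must be constant, say $\vw \equiv c$), and the minimal dominating sets of $C_5$ have sizes $2$ and $3$ (two adjacent vertices vs.\ an independent-ish triple), forcing $2c = t = 3c$, impossible. Similarly $K_{2,3}$: the side of size $3$ is a stable set class and the side of size $2$ is a stable set class, so $\vw$ is constant $a$ on the triple and constant $b$ on the pair; the two sides are the two maximal stable sets hence mds, giving $3a = t = 2b$, but also any single vertex of the triple together with any single vertex of the pair is a (minimal) dominating set, giving $a+b = t$; from $3a = 2b$ and $a + b = t = 3a$ we get $b = 2a$ and then $3a = 2b = 4a$, a contradiction.

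For the remaining five graphs ($P_5$, $\mathit{bull}$, $\mathit{banner}$, $\mathit{house}$, $\overline{P_2 \cup P_3}$) I would proceed by direct enumeration. Fix a labelling of the vertices, compute the list of minimal dominating sets (each such graph has only a handful — typically between four and ten), and write down the corresponding linear system $\vw(D) = t$ over the vertex weights. Because an equidominating structure additionally requires that \emph{every} subset of weight $t$ be an mds, the key observation is that the system is overdetermined in a way that cannot be satisfied by positive integers: either the equations $\vw(D_1) = \vw(D_2) = t$ for two particular mds already force a third set $D_3$ (which is \emph{not} an mds — e.g.\ a non-minimal dominating set, or a dominating set missing a private neighbour) to also have weight $t$; or they force some weight to be non-positive. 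For instance, in $P_5 = v_1v_2v_3v_4v_5$ the minimal dominating sets include $\{v_2,v_4\}$, $\{v_2,v_5\}$, $\{v_1,v_3,v_5\}$, $\{v_1,v_4\}$ and $\{v_3\}$ is \emph{not} dominating but $\{v_2,v_3,v_4\}$ is dominating and not minimal; from $\vw(v_2)+\vw(v_4) = \vw(v_2)+\vw(v_5)$ we get $\vw(v_4) = \vw(v_5)$, and chasing a couple more such equalities pins all of $\vw(v_1),\dots,\vw(v_5)$ down to a single common value, whence the size-$2$ and size-$3$ minimal dominating sets again give $2c = 3c$. The cases of $\mathit{bull}$, $\mathit{banner}$, $\mathit{house}$ and $\overline{P_2 \cup P_3}$ are morally identical: each contains a pair of twin leaves or a dominating vertex, reducing the number of free weights, and then one extracts a contradiction from the parity/size mismatch between small and large minimal dominating sets.

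The main obstacle is not conceptual but bookkeeping: one must correctly enumerate the minimal dominating sets of each graph (missing one can break the argument, and including a non-minimal one as if it were an mds would be an error), and then argue that \emph{no} positive-integer solution to the resulting constraints exists — which includes ruling out the degenerate possibility that the ``bad'' subset one exhibits actually coincides with an mds under the proposed weights. I would organise the write-up as a short case analysis, handling the vertex-transitive / highly-symmetric graphs ($C_5$, $K_{2,3}$, and $\overline{P_2\cup P_3}$ via twin-class constancy) first, then the four remaining graphs each in a couple of lines by choosing the right pair of minimal dominating sets whose weight-equality collapses the system. In every case the contradiction ultimately takes the form ``a minimal dominating set of one size and a minimal dominating set of another size would need the same total weight, but all relevant vertices are forced to carry equal positive weight,'' which is impossible.
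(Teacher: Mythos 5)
Your overall strategy is the same as the paper's: the paper's proof likewise assumes an equidominating structure, uses pairs of minimal dominating sets to force weight equalities, and then exhibits a weight-$t$ set that is not an mds (the paper only writes out the $P_5$ case explicitly and leaves the other six members of $\mathcal F$ to the same routine). However, the cases you work out beyond $P_5$ contain genuine errors, most seriously for $C_5$. No two vertices of $C_5$ are twins (for adjacent $u,v$ the sets $N(u)\setminus\{v\}$ and $N(v)\setminus\{u\}$ are distinct singletons, and similarly for non-adjacent pairs), so the twin-class machinery gives nothing here; and vertex-transitivity does not by itself force $\vw$ to be constant, since an equidominating function need not be invariant under automorphisms. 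Worse, your enumeration of the minimal dominating sets of $C_5$ is wrong: a pair of adjacent vertices does not dominate $C_5$ (it misses the antipodal vertex), and $C_5$ has independence number $2$, so there is no independent triple. The mds of $C_5$ are exactly the five non-adjacent pairs; chaining overlapping such pairs does give $\vw\equiv c$ and $t=2c$, but the contradiction is then that an adjacent pair has weight $t$ without being dominating --- not a ``$2c=3c$'' size mismatch. Since you yourself identify correct mds enumeration as the main hazard, this case needs to be redone.

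Two smaller points. In the $P_5$ case the equations do not pin all five weights to a common value: they only force $\vw(v_1)=\vw(v_2)$ and $\vw(v_4)=\vw(v_5)$, which already suffices because $\{v_1,v_5\}$ then has weight $t$ without being dominating (equivalently, the mds $\{v_1,v_3,v_5\}$ would force $\vw(v_3)=0$); this is exactly the paper's argument, so state it that way rather than via a spurious collapse to a single weight. And your claim that each of the \textit{bull}, \textit{banner}, \textit{house} and $\overline{P_2\cup P_3}$ contains a pair of twin leaves or a dominating vertex is false for the \textit{bull} and the \textit{house}, which have neither twins nor a universal vertex; those cases genuinely require the explicit mds computation you describe and should not be waved through as ``morally identical.'' The $K_{2,3}$ case as you present it is correct.
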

\begin{proof}
Let us check, for example, that $P_5$ is not equidominating.
For this, we assume that $V(P_5) = \{1,2,3,4,5\}$ with $E(P_5) = \{12,23,34,45\}$. 

Suppose that $P_5$ is equidominating and let $(\vw,t)$ be an equidominating structure of $P_5$.
As both $\{1,4\}$ and $\{2,4\}$ are mds of $G$, $\vw(1)+\vw(4) = \vw(2)+\vw(4) = t$, and so $\vw(1) = \vw(2)$.
Similarly, $\vw(4) = \vw(5)$.
Hence, $\vw(1) + \vw(5) = \vw(2) + \vw(4) = t$, in contradiction to the fact that $\{1,5\}$ is not a dominating set of $P_5$.
So, $P_5$ is not equidominating.
\end{proof}

\begin{lemma}\label{lem:universalremoval}
Let $G$ be a graph, and let $G'$ be the graph obtained from $G$ by attaching a universal vertex.
Then $G$ is equidominating if and only if $G'$ is equidominating.
\end{lemma}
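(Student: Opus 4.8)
The plan is to first identify the minimal dominating sets of $G'$ explicitly in terms of those of $G$, and then transfer equidominating structures back and forth by restriction and extension. Write $V \coloneqq V(G)$ and let $u$ be the vertex added to form $G'$, so that $V(G') = V \disjcup \{u\}$ and $N_{G'}(u) = V$. (As usual I assume $V \neq \emptyset$; the degenerate case $V = \emptyset$, where $G' = K_1$, only works under the harmless convention that graphs are nonempty.)

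First I would establish the auxiliary claim that a set $D \subseteq V(G')$ is an mds of $G'$ if and only if either $D = \{u\}$, or $D \subseteq V$ and $D$ is an mds of $G$. Indeed, $\{u\}$ dominates $G'$ since $u$ is universal, and is trivially minimal; and if $u \in D$ with $|D| > 1$, then already $\{u\} \subsetneq D$ dominates $G'$, so no such $D$ is minimal. Hence $\{u\}$ is the only mds of $G'$ containing $u$. For $D \subseteq V$, adding $u$ does not change adjacencies inside $V$, so $N_{G'}[D] \cap V = N_G[D]$ and $D$ dominates $u$ automatically (being nonempty); thus $D$ dominates $G'$ iff it dominates $G$, and since every proper subset of $D$ again lies in $V$, $D$ is an mds of $G'$ iff it is an mds of $G$.

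For the implication from $G'$ to $G$, given an equidominating structure $(\vw',t')$ of $G'$ I would take $\vw \coloneqq \vw'|_V$ and $t \coloneqq t'$: by the claim, a set $D \subseteq V$ is an mds of $G$ iff it is an mds of $G'$ iff $\vw'(D) = t'$, i.e.\ iff $\vw(D) = t$. For the converse implication, given an equidominating structure $(\vw,t)$ of $G$ I would extend it by $\vw'(v) \coloneqq \vw(v)$ for $v \in V$ and $\vw'(u) \coloneqq t$, keeping $t' \coloneqq t$. Then $\vw'(\{u\}) = t$ and $\vw'(D) = \vw(D) = t$ for every mds $D$ of $G$, so by the claim every mds of $G'$ has weight $t$. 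Conversely, if $X \subseteq V(G')$ satisfies $\vw'(X) = t$, then either $u \notin X$, in which case $X \subseteq V$ with $\vw(X) = t$, so $X$ is an mds of $G$ and hence of $G'$; or $u \in X$, in which case $t = \vw'(X) = t + \vw(X \setminus \{u\})$ forces $X \setminus \{u\} = \emptyset$ since weights are positive integers, so $X = \{u\}$, again an mds of $G'$. Thus $(\vw',t')$ is an equidominating structure of $G'$.

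This lemma is essentially bookkeeping, so I do not expect a genuine obstacle; the one observation that makes everything work is that any dominating set of $G'$ containing $u$ collapses to $\{u\}$ under minimality. That is precisely what lets us keep the same target value $t$ when passing to $G'$ and makes the weight count for subsets containing $u$ trivial. The only mild care needed is the empty-graph caveat noted above.
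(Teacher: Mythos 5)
Your proposal is correct and follows essentially the same route as the paper's proof: identify the mds of $G'$ as the mds of $G$ together with the singleton $\{u\}$, then transfer the structure by restriction in one direction and by assigning weight $t$ to the universal vertex in the other. You simply spell out in more detail the verification that the paper compresses into ``clearly $\mathcal D' = \mathcal D \cup \{x\}$'' and the check that no set of weight $t$ can properly contain $u$.
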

\begin{proof}
Let $x$ be the universal vertex that is attached to $G$ in order to obtain $G'$.
Let $\mathcal D$ ($\mathcal D'$) be the set of mds of $G$ (of $G'$).
Clearly $\mathcal D' = \mathcal D \cup \{x\}$.

Assume that $G$ is equidominating and let $(\vw,t)$ be an equidominating structure of $G$.
Put $\vw'\big|_{V(G)} :\equiv \vw$ and $\vw'(x)\coloneqq t$.
Then $(\vw',t)$ is an equidominating structure of $G'$.

Now assume that $G'$ is equidominating and let $(\vw',t)$ be an equidominating structure of $G$.
Put $\vw :\equiv \vw'\big|_{V(G)}$.
Then $(\vw,t)$ is an equidominating structure of $G$.
\end{proof}

\begin{lemma}\label{lem:chain-joinpreserves}
Let $G_1$ and $G_2$ be two equidominating graphs, and let $G$ be a chain-join of $G_1$ and $G_2$.
Then $G$ is equidominating.
Moreover, if $G_1$ and $G_2$ are hereditarily equidominating, so is $G$.
\end{lemma}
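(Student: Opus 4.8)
The plan is to reduce the ``moreover'' statement to the first assertion, and to prove the first assertion by induction on $|V(G)|$, stripping off universal vertices with Lemma~\ref{lem:universalremoval} and exhibiting an explicit weight function once none remain; throughout I assume $G_1$ and $G_2$ are nonempty (otherwise $G$ equals the nonempty one). If $G$ has a universal vertex $x$, then $x\in U_1\cup U_2$, say $x\in U_1$, and one checks that $U(G_1-x)=U_1\setminus\{x\}$ and that $G-x$ is the chain-join of $G_1-x$ and $G_2$ (delete the edges of $B$ incident with $x$; the remaining $B$-edges still form a chain graph on the bipartition $\bigl(U(G_1-x),U(G_2)\bigr)$). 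By Lemma~\ref{lem:universalremoval}, $G_1-x$ is equidominating; if it is nonempty, $G-x$ is a chain-join of two smaller nonempty equidominating graphs and hence equidominating by induction, while if it is empty, $G-x=G_2$ is equidominating; in either case $G=(G-x)+x$ is equidominating by Lemma~\ref{lem:universalremoval}. So from now on $G$ has no universal vertex.

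Under this assumption I would show that the minimal dominating sets of $G$ are exactly the sets $D_1\cup D_2$ with $D_i$ an mds of $G_i$, where $D_i\coloneqq D\cap V(G_i)$. The steps: (i) a vertex of $V(G_i)\setminus U_i$ can be dominated only from inside $V(G_i)$, so if $D$ dominates $G$ and $D_i\neq\emptyset$ then $D_i$ dominates $G_i$; (ii) a private neighbour in $G$ of a vertex of $D_1$ cannot lie in $V(G_2)$ once $D_2\neq\emptyset$ (it would be dominated by $D_2$), so private neighbours transfer to $G_i$ and each $D_i$ is an mds of $G_i$; (iii) each possible degeneracy -- an mds $D$ with $D_1=\emptyset$ (or symmetrically $D_2=\emptyset$), and a set $D_1\cup D_2$ with each $D_i$ an mds of $G_i$ that fails to be minimal in $G$ -- forces, using that the vertices of $U_i$ are universal in $G_i$ and that $B$ is a chain graph (so a set of universal vertices of one side that must be dominated is dominated by a single ``busiest'' vertex of $B$), a vertex universal in $G$, which is excluded. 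Granting the characterization, I set $\omega\coloneqq\omega_1$ on $V(G_1)$ and $\omega\coloneqq(W_1+1)\,\omega_2$ on $V(G_2)$ with $W_1\coloneqq\omega_1(V(G_1))$, and $t\coloneqq t_1+(W_1+1)\,t_2$. Every mds $D_1\cup D_2$ has weight $t$; conversely, if $\omega(X)=t$ with $X=X_1\cup X_2$, then $(W_1+1)\bigl(\omega_2(X_2)-t_2\bigr)=t_1-\omega_1(X_1)$, whose right-hand side has absolute value at most $W_1<W_1+1$, so $\omega_2(X_2)=t_2$ and hence $\omega_1(X_1)=t_1$; thus $X_1$ and $X_2$ are mds of $G_1$ and $G_2$, and $X$ is an mds of $G$.

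For the hereditary statement I would note that every induced subgraph of $G$ is again a chain-join of equidominating graphs. For $W=W_1\cup W_2$ with $W_i\subseteq V(G_i)$, a universal vertex of $G_i$ stays universal in every induced subgraph of $G_i$, so the $B$-edges surviving in $G[W]$ still join $U(G_1[W_1])$ to $U(G_2[W_2])$; since an induced subgraph of a chain graph is a chain graph and adjoining isolated vertices preserves the chain property, these edges form a chain graph on the bipartition $\bigl(U(G_1[W_1]),U(G_2[W_2])\bigr)$, so $G[W]$ is the chain-join of $G_1[W_1]$ and $G_2[W_2]$ (trivially so if $W_1$ or $W_2$ is empty). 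By the hereditary hypothesis these two graphs are equidominating, hence by the first assertion so is $G[W]$; since this includes $W=V(G)$, $G$ is hereditarily equidominating.

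I expect step (iii) to be the main obstacle. The degenerate minimal dominating sets of a chain-join occur exactly when one of the $G_i$ is complete and a universal vertex of that $G_i$ -- or one universal vertex from each side -- cooperates with an mds of the other side, and the point is that these are precisely the configurations removed by the universal-vertex stripping. Most of the work therefore lies in checking that every such degeneracy manufactures a vertex universal in $G$, which is where the chain graph structure of $B$ is essential.
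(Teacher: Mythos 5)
Your proposal is correct and follows essentially the same route as the paper's proof: reduce to the case without universal vertices via Lemma~\ref{lem:universalremoval}, characterize the minimal dominating sets of $G$ as exactly the unions $D_1\cup D_2$ of minimal dominating sets of $G_1$ and $G_2$ (with the degenerate configurations ruled out by producing a universal vertex from the inclusion-maximal neighbourhood in the chain graph $B$), combine $\omega_1$ with a scaled copy of $\omega_2$ exceeding $\omega_1(V(G_1))$, and observe that induced subgraphs of chain-joins are chain-joins. The only (harmless) differences are that you organize the universal-vertex stripping as an explicit induction and state the mds characterization as a standalone claim, whereas the paper interleaves it with the weight computation; your minor claim that $U(G_1-x)=U_1\setminus\{x\}$ need not hold, but as you note yourself, isolated vertices may be added to a chain graph, so nothing is affected.
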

\begin{proof}
Let $U_i$ be the set of universal vertices of $G_i$, for $i \in \{1,2\}$.
Moreover, let $(\vw_i,t_i)$ be an equidominating structure of $G_i$, for $i \in \{1,2\}$, and let $k = 1 + \sum_{v \in V(G_1)} \vw_1(v)$.
We define a weight function $\vw$ on $G$ by $\vw\big|_{V(G_1)} \equiv \vw_1$ and $\vw\big|_{V(G_2)} \equiv k \cdot \vw_2$.
Put $t \coloneqq t_1 + kt_2$.

By Lemma~\ref{lem:universalremoval}, we may assume that $G$ does not have a universal vertex.
We claim that $(\vw,t)$ is an equidominating structure of $G$.

To see this, pick any set $X \subseteq V(G)$ with $\vw(X) = t$.
By the choice of $\vw$ and $t$, $\vw_1(X \cap V(G_1)) = t_1$ and $\vw_2(X \cap V(G_2)) = t_2$.
Hence, $X_i \coloneqq X \cap V(G_i)$ is an mds of $G_i$, for $i \in \{1,2\}$.
In particular, $X$ is a dominating set of $G$.
We have to show that $X$ is minimal.

Suppose that $X$ is not minimal, that is, there is some $x \in X$ with $N[X \setminus \{x\}] = V(G)$.
W.l.o.g.\ $x \in X_1$.
As $X_1$ is an mds of $G_1$, there is some $y \in V(G_1)$ not dominated by $D_1 \setminus \{x\}$.
Hence, $y \in N_G[X_2]$.
In particular, $y \in U_1$, implying $X_1 \setminus \{x\} = \emptyset$.

So, $N_G[X_2] = V(G)$.
This means $U_2 \neq \emptyset$, and so $N_G[X_2] = N_G[U_2]$.
As the edge cut $E_G(U_1,U_2)$ induces a chain graph, there is some vertex $v \in U_2$ whose set of neighbors in $U_1$ is maximal among the vertices in $U_2$.
Hence, $V(G) = N_G[X_2] = N_G[U_2] = N_G[v]$, and thus $v$ is a universal vertex of $G$, a contradiction.
This shows that $X$ is an mds of $G$.
 
Conversely, let $D$ be an mds of $G$.
We have to prove that $\vw(D)=t$.
It suffices to show that $D_i \coloneqq D \cap V(G_i)$ is an mds $D_i$, for $\{1,2\}$.
Then, $\vw_i(D_i) = t_i$, for $i \in \{1,2\}$, and so $\vw(D) = t_1 + k t_2 = t$.

By symmetry, it suffices to show that $D_1$ is an mds of $G_1$.
As noted above, the assumption that $G$ does not have any universal vertex implies $V(G_1) \not\subseteq N_G[V(G_2)]$.
In particular, $D_1 \neq \emptyset$.
If $U_1 \cap D_1 \neq \emptyset$, it must be that $|D_1|=1$, and thus $D_1$ is an mds of $G_1$.
So we may assume that $U_1 \cap D_1 = \emptyset$.

Since $N_G[V(G_2)] \cap (V(G_1) \setminus U_1) = \emptyset$, $D_1$ is a dominating set of $G_1 - U_1$ and thus of $G_1$.
Suppose that $D_1$ is not minimal, that is, there is some $x \in D_1$ such that $D_1' \coloneqq D_1 \setminus \{x\}$ is still a dominating set of $G_1$.
As $U_1 \cap D_1 = \emptyset$, we have $N_G[D_1] \cap V(G_2) = \emptyset$.
Hence, $D_2$ is a dominating set of $G_2$, and so $D_1' \cup D_2$ is a dominating set of $G$.
This contradicts the minimality of $D$, however.
This completes the proof of the first statement of the lemma.

\bigskip
Now, assume that $G_1$ and $G_2$ are hereditarily equidominating.
Let $H$ be any induced subgraph of $G$.
Observe that $H$ is a chain-join of the two graphs $H_1 \coloneqq G_1[V(H) \cap V(G_1)]$ and $H_2 \coloneqq G_2[V(H) \cap V(G_2)]$.
As $G_1$ and $G_2$ are hereditarily equidominating, $H_1$ and $H_2$ are equidominating.
By applying the first statement of the lemma, we see that $H$ is equidominating, too.
As $H$ is arbitrary, $G$ is hereditarily equidominating.
\end{proof}

We now come to the decomposition of $\mathcal F$-free graphs.

\begin{lemma}\label{lem:dominatingC4}
Let $G$ be a connected $\mathcal F$-free graph without universal vertex.
Assume that $G$ has a minimal connected dominating set $D$ of size 2, say $D =\{x,y\}$, such that there are private neighbors $x'\in pn(x,D)$ and $y'\in pn(y,D)$ with $x'y' \in E(G)$.
Then $G \cong K_{2n} - ne$ for some $n \ge 2$.
\end{lemma}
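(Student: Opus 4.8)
The plan is to anchor everything on the neighbourhood structure around $D=\{x,y\}$ and then reconstruct the whole graph by forbidding small configurations with members of $\mathcal F$; this is most conveniently done by recognising the relevant $5$-vertex induced subgraphs through their complements, using that $\overline{P_5}$ is the house, $\overline{K_2\cup K_3}$ is $K_{2,3}$, and $\overline{P_2\cup P_3}$ lies in $\mathcal F$.

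First I would fix notation. Since $D$ is connected, $xy\in E(G)$, and since $D$ dominates, $V(G)=\{x,y\}\disjcup A\disjcup B\disjcup M$ with $A\coloneqq N(x)\setminus N[y]$, $B\coloneqq N(y)\setminus N[x]$ and $M\coloneqq N(x)\cap N(y)$. Then $x'\in A$ and $y'\in B$ are distinct, $x'y'\in E(G)$, and $\{x,y,x',y'\}$ induces a $C_4$ whose non-edges are $xy'$ and $x'y$; in particular $|V(G)|\ge 4$. Hence it suffices to prove that every vertex of $G$ has exactly one non-neighbour: the non-adjacency relation is then a fixed-point-free involution on $V(G)$, i.e.\ a perfect matching, so $|V(G)|=2n$ and $G\cong K_{2n}-ne$ with $n\ge 2$.

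The first claim is that $|A|=|B|=1$. Suppose $a\in A$ with $a\neq x'$; then $a\sim x$ and $a\not\sim y$. Casing on whether $a$ sees $x'$ and whether $a$ sees $y'$, the induced subgraph on $\{a,x,x',y',y\}$ turns out to be a banner if $a$ sees neither, a house if $a$ sees only $x'$, a $K_{2,3}$ if $a$ sees only $y'$, and $\overline{P_2\cup P_3}$ if $a$ sees both (in each case one simply lists the at most five non-edges among the ten pairs and reads off the complement). All four lie in $\mathcal F$, a contradiction, so $|A|=1$, and by the symmetry $x\leftrightarrow y$, $x'\leftrightarrow y'$ also $|B|=1$. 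Thus the unique non-neighbour of $x$ is $y'$ and that of $y$ is $x'$.

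Next I would show that every $m\in M$ is adjacent to all of $x,y,x',y'$: adjacency to $x,y$ is the definition of $M$, and if $m\not\sim x'$ then $\{x',x,m,y,y'\}$ induces a house (if $m\not\sim y'$) or $\overline{P_2\cup P_3}$ (if $m\sim y'$), which is impossible; symmetrically $m\sim y'$. It follows that $x'$ has unique non-neighbour $y$ and $y'$ has unique non-neighbour $x$. Finally, each $m\in M$ has exactly one non-neighbour inside $M$: it has at least one, since otherwise $m$ would be universal in $G$; and it cannot have two, say $m_1$ and $m_2$, because — using that $x\not\sim y'$ while $x$ and $y'$ are adjacent to each of $m,m_1,m_2$ — the set $\{x,y',m,m_1,m_2\}$ induces $K_{2,3}$ when $m_1\not\sim m_2$ and $\overline{P_2\cup P_3}$ when $m_1\sim m_2$. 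Putting this together, every vertex of $G$ has exactly one non-neighbour, so $G\cong K_{2n}-ne$ with $n=|V(G)|/2\ge 2$.

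I expect the delicate point to be the two "two forbidden non-neighbours" arguments (hidden in the step $|A|=1$ and explicit in the last step): the naive attempt of using two \emph{adjacent} common neighbours — for instance $x$ and $y$ — yields $5$-vertex graphs that happen not to belong to $\mathcal F$, so one is forced to use the \emph{non-adjacent} pair $x,y'$, which becomes available only after proving that $y'$ (and $x'$) dominate $M$. Ordering the steps so that this is established first is the main thing to get right.
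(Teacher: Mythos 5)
Your proof is correct and follows essentially the same route as the paper's: you establish that every vertex outside the $C_4$ on $\{x,y,x',y'\}$ is completely joined to it (the paper asserts this by noting that a $C_4$ must be a homogeneous set in an $\mathcal F$-free graph, while you verify it by explicit case analysis), and then both proofs finish with the identical argument that a vertex with two non-neighbours $z',z''$ would yield $K_{2,3}$ or $\overline{P_2\cup P_3}$ on $\{x,y',z,z',z''\}$. Your observation that the non-adjacent pair $x,y'$ is the one that must be used in this last step is exactly the point of the paper's argument as well.
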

\begin{proof}
As $G[\{x,y,x',y'\}] \cong C_4$ and $G$ is $\mathcal F$-free, $D' \coloneqq \{x,y,x',y'\}$ is a homogeneous set (that is each vertex of $D'$ sees the same vertices in $V(G)\setminus D'$).
But $D = \{x,y\}$ is a dominating set, and thus also $D'$ is a dominating set.
Hence, $G$ is the complete join of $G[D']$ and $G[V(G)-D']$.

Pick any vertex $z \in V(G)-D'$.
Since $G$ does not have a universal vertex, there must be some non-neighbor $z'$ of $z$.
Suppose that there is a second non-neighbor of $z$, say $z''$.
Consider the graph $G' \coloneqq [\{x,y',z,z',z''\}]$.
If $z'z'' \in E(G)$, $G' \cong \overline{P_2 \cup P_3}$, a contradiction.
Hence, $z'z'' \notin E(G)$, and so $G' \cong K_{2,3}$, again a contradiction.

Thus, every vertex of $G$ has exactly one non-neighbor in $G$.
This means $G \cong K_{2n} - ne$, for some $n \ge 2$.
\end{proof}

\begin{lemma}\label{lem:dominatingbull}
Let $G$ be a connected $\mathcal F$-free graph without universal vertex that admits a connected dominating set of size 2.
Assume that for every minimal connected dominating set $D$ of size 2, say $D =\{x,y\}$, it holds that $x'y' \notin E(G)$ for any two private neighbors $x'\in pn(x,D)$ and $y'\in pn(y,D)$.
Then $G$ is a chain-join of two disjoint connected $\mathcal F$-free graphs $G_1$ and $G_2$.
\end{lemma}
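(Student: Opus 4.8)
The plan is to anchor the decomposition on a fixed connected dominating set $D=\{x,y\}$ of size $2$, which exists by assumption and is automatically minimal since $G$ has no universal vertex. Write $A:=pn(x,D)$, $B:=pn(y,D)$ and $R:=(N(x)\cap N(y))\setminus\{x,y\}$; minimality gives $A,B\neq\emptyset$, the hypothesis gives that $A$ and $B$ are completely non-adjacent, and domination gives $V(G)=\{x,y\}\disjcup A\disjcup B\disjcup R$ with $N[x]=\{x,y\}\cup A\cup R$ and $N[y]=\{x,y\}\cup B\cup R$. I would then put $R_A:=\{r\in R: r$ adjacent to every vertex of $A\}$, $R_B:=\{r\in R: r$ adjacent to every vertex of $B\}$, and propose $V_1:=\{x\}\cup A\cup R_A$, $V_2:=\{y\}\cup B\cup R_B$, $G_i:=G[V_i]$. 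These are vertex-disjoint; $x$ is universal in $G_1$ and $y$ in $G_2$, so both are connected; and both are $\mathcal F$-free as induced subgraphs of $G$. So the whole task reduces to showing $R=R_A\disjcup R_B$ and that $G$ is the chain-join of $G_1$ and $G_2$.

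First I would settle the structure of $R$. If $r\in R$ is adjacent neither to all of $A$ nor to all of $B$, then for any $a\in A$, $b\in B$ the set $\{a,x,r,y,b\}$ induces a bull (triangle $xry$ with pendants $a,b$); hence $R=R_A\cup R_B$. To obtain $R_A\cap R_B=\emptyset$ and the finer adjacency facts I need, I would use the hypothesis in full strength, applying it to the \emph{auxiliary} minimal connected dominating sets $\{x,r\}$ (valid whenever $r$ is adjacent to all of $B$, since then $N[x]\cup N[r]=V(G)$) and $\{y,r\}$ (valid whenever $r$ is adjacent to all of $A$); both are minimal because $G$ has no universal vertex. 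A short private-neighbour computation gives $pn(r,\{x,r\})=B$ and $pn(x,\{x,r\})=R\setminus N[r]$, so the hypothesis forces every non-neighbour of $r$ inside $R$ to miss all of $B$, and symmetrically to miss all of $A$; with the bull fact this shows $r$ has no non-neighbour in $R$, whence $r\in R_A\cap R_B$ would be universal, a contradiction. The same computations give that $R_A$ and $R_B$ are cliques and that $R_A$ is completely adjacent to every vertex of $R_B$ having a neighbour in $A$ (and symmetrically). Consequently $x$ and all of $R_A$ are universal in $G_1$, $y$ and all of $R_B$ are universal in $G_2$; let $U_1,U_2$ be the full sets of universal vertices.

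Next I would show every cross edge of $(V_1,V_2)$ runs between $U_1$ and $U_2$. The only cross edges whose endpoints are not yet known to be universal have type $R_A$–$B$ or $A$–$R_B$, so the crux is: if $b\in B$ has a neighbour $r\in R_A$ then $b\in U_2$. As $R_B$ is adjacent to all of $B$ and $b\sim y$, it suffices to show $b$ is adjacent to every other vertex of $B$; if some $b'\in B$ were not, then either $rb'\notin E(G)$, so $\{a,r,b,b',y\}$ induces a bull, or $rb'\in E(G)$, in which case taking any non-neighbour $\beta\in B$ of $r$ (one exists since $r\notin R_B$) the bull fact gives $b\beta,b'\beta\in E(G)$, and then $\{r,b,\beta,b'\}$ induces a $C_4$ with $x$ a pendant at $r$, i.e.\ a banner — either way a contradiction. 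The symmetric claim for $A$ and $R_B$ is the same, and together they give $G=(G_1\cup G_2)+E(H)$ where $H$ is the bipartite graph of cross edges on $U_1\disjcup U_2$.

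Finally one must check that $H$ is a chain graph, i.e.\ has no induced $2K_2$; this, together with the cross-edge analysis, is the step I expect to be the main obstacle. Given disjoint cross edges $u_1v_1,u_2v_2$ with $u_1v_2,u_2v_1\notin E(G)$, universality forces $u_1u_2,v_1v_2\in E(G)$, so $\{u_1,v_1,v_2,u_2\}$ induces a $C_4$; every $a\in A$ is adjacent to all of $U_1$, so if some $a$ missed both $v_1$ and $v_2$ then $\{a,u_1,v_1,v_2,u_2\}$ would induce a house — hence each $a\in A$ sees $v_1$ or $v_2$. Since $A$ misses $\{y\}\cup B$, this puts (say) $v_1$ in $R_B$ with a neighbour in $A$, and a short case analysis on the positions of $u_1,u_2,v_2$ — using that $x$ sees all of $R_B$, that $R_A$ is completely adjacent to every $R_B$-vertex with an $A$-neighbour, that $A$ misses $B$, and that $R_B$ is adjacent to all of $B$ — either yields an immediate contradiction or reduces to the configuration $u_1,u_2\in A$, $v_1,v_2\in R_B$, where any $\beta\in B$ produces an induced house on $\{u_1,v_1,v_2,u_2,\beta\}$. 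Thus $H$ is a chain graph and $G$ is the chain-join of the two connected $\mathcal F$-free graphs $G_1$ and $G_2$.
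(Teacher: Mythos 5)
Your proof is correct, but it reaches the decomposition by a genuinely different mechanism than the paper. Both arguments share the same skeleton: fix a connected dominating set $\{x,y\}$ of size two, split $V(G)$ into the private sides $A,B$ and the common neighbourhood $R$, split $R$ into an ``$x$-side'' and a ``$y$-side'', and verify that the cross edges form a chain graph between universal vertices. The difference is in how $R$ is controlled. The paper chooses $\{x,y\}$ \emph{extremally} (maximizing $|N(x)|+|N(y)|$) and uses that maximality, together with {\it bull}-, {\it house}- and $\overline{P_2\cup P_3}$-freeness, to prove the stronger statement that every $z\in R$ is adjacent to all of one private side and to \emph{none} of the other; the two sides of $R$ then separate cleanly and only the $R_A$--$R_B$ edges need the chain-graph check. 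You instead take an \emph{arbitrary} $\{x,y\}$ and exploit the lemma's hypothesis in full strength by applying it to the auxiliary minimal connected dominating sets $\{x,r\}$ ($r\in R_B$) and $\{y,r\}$ ($r\in R_A$); this yields $R=R_A\disjcup R_B$, that each part is a clique, and the key adjacency fact that every $R_B$-vertex with a neighbour in $A$ sees all of $R_A$ (and symmetrically). Your conclusion about $R$ is weaker --- a vertex of $R_A$ may still have neighbours in $B$ --- so you must additionally show (via {\it bull}/{\it banner}) that any such $B$-endpoint is universal in $G_2$, and your chain-graph case analysis is correspondingly longer; I checked that all cases close (each terminating in a {\it house}, or in a contradiction with the adjacency fact above). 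Two minor imprecisions, neither harmful: $pn(x,\{x,r\})$ equals $(A\setminus N(r))\cup(R\setminus N[r])$ rather than $R\setminus N[r]$, but you only use the inclusion $R\setminus N[r]\subseteq pn(x,\{x,r\})$; and the extra $A$-vertices are already known to miss $B$. What the paper's extremal choice buys is a cleaner canonical decomposition that the recognition algorithm of the subsequent corollary relies on (it locates $x$ and $y$ by maximum degree); what your version buys is that the lemma's seemingly global hypothesis about \emph{every} minimal connected dominating set of size two is actually put to work, whereas the paper invokes it only for the one chosen set.
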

\begin{proof}
Let $\{x,y\}$ be a connected dominating set such that, for all connected dominating sets $\{u,v\}$ of size two of $G$, $|N(u)|+|N(v)| \le |N(x)|+|N(y)|$.
Let $X \coloneqq N[x] \setminus N[y]$, $Y \coloneqq N[y] \setminus N[x]$, and $S \coloneqq N[x] \cap N[y]$.
It is clear that $X$, $Y$, $S$ are pairwise disjoint and $X \cup Y \cup S = V(G)$.
Note that, as $\{x,y\}$ is a minimal connected dominating set, $X,Y \neq \emptyset$.

Let $z \in S \setminus \{x,y\}$ be arbitrary.
Since $G$ does not have a universal vertex, there is a non-neighbor of $z$.
Suppose first that $X \cup Y \subseteq N(z)$.
Then there is some $z' \in S$ with $zz' \notin E(G)$.
Since $G$ is {\it bull}\,-free, $z'$ has some neighbor in $X \cup Y$, say $x' \in X \cap N(z)$.
Pick any $y' \in Y$.
If $y'z' \in E(G)$, then $G[\{x',y,y',z,z'\}] \cong \overline{P_2 \cup P_3}$.
Otherwise, $G[\{x',y,y',z,z'\}] \cong {\it house}$.
Since both is contradictory, we know that $z$ has some non-neighbor among $X \cup Y$, say in $Y$.
As $G$ is {\it bull}-free, $z$ is adjacent to all members of $X$.
In particular, $\{z,y\}$ is a connected dominating set.

Suppose that there is some $y' \in N(z) \cap Y$.
We claim that then $|N(z)| > |N(x)|$, in contradiction to our assumption that, among all connected dominating sets of size two of $G$, $|N(x)|+|N(y)|$ is maximum.
To prove this claim, suppose $|N(z)| \le |N(x)|$.
Since $X \cup \{y'\} \subseteq N(z)$ and $y' \notin N(x)$, there must be some $z' \in S \setminus N(z)$.
Pick any $x' \in X$.
Since $G$ is {\it bull}-free, $x'z' \in E(G)$ or $y'z' \in E(G)$.
W.l.o.g.\ $x'z' \in E(G)$.
Like above, this leads to $G[\{x',y,y',z,z'\}] \cong \overline{P_2 \cup P_3}$ or $G[\{x',y,y',z,z'\}] \cong {\it house}$, depending on whether $y'z' \in E(G)$.
Since both is contradictory, we obtain $N(z) \cap Y = \emptyset$.

Summing up, $X \subseteq N(z)$ and $N(z) \cap Y = \emptyset$.
As $z$ is arbitrary, for every vertex $z' \in S$ it either holds that $X \subseteq N(z')$ and $N(z') \cap Y = \emptyset$ or $Y \subseteq N(z')$ and $N(z') \cap X = \emptyset$.
This partitions the set $S$ into two disjoint sets $X'$ and $Y'$ where every member of $X'$ is adjacent to $X$ and every member of $Y'$ is adjacent to $Y$.
Consider the bipartite graph $B$ with color classes $X'$ and $Y'$ whose edges are given by the edge-cut $F \coloneqq E_G(X',Y')$.
Let $G_1$ ($G_2$) be the connected component of $G - F$ containing $X$ (containing $Y$).
Since the members of $X'$ are universal vertices of $G_1$ and the members of $Y'$ are universal vertices of $G_2$, it remains to prove that $B$ is a chain graph.

Suppose the opposite holds, that is, there are vertices $x',x'' \in X' \cup \{x\}$ and $y',y'' \in Y' \cup \{y\}$ such that $x'y',x''y'' \in E(B)$ and $x'y'',x''y' \notin E(B)$.
Let $x''' \in X$.
Then $G[\{x',x'',x''',y',y''\}] \cong {\it house}$, a contradiction.
This completes the proof.
\end{proof}

\begin{proof}[Proof of Theorem~\ref{thm:hereditarystructure}.]
Assume that $G$ is hereditarily equidominating.
Then by Lemma~\ref{lem:Fnoted} $G$ is $\mathcal F$-free.

\bigskip
Now let $G$ be an $\mathcal F$-free graph.
We may assume that $G$ is not basic and, by Lemma~\ref{lem:universalremoval}, that $G$ does not have universal vertices.
If $G$ is disconnected, it is the chain-join of one of its connected components with the other connected components.
So, we may assume that $G$ is connected.

Let $D$ be any minimal connected dominating set of $G$.
Since $G$ is $(P_5,C_5)$-free and connected, $D$ is a clique~\cite{CS12}.
Suppose that $|D| \ge 3$, and let $x,y,z \in D$ be distinct vertices.
Let $x'$ be a private neighbor of $x$, and $y'$ be a private neighbor of $y$.
If $x'y' \notin E(G)$, $G[\{x,x',y,y',z\}] \cong {\it bull}$, a contradiction.
Thus $x'y' \in E(G)$.
But then $G[\{x,x',y,y',z\}]$ is isomorphic to ${\it house}$, another contradiction.
This shows that $|D|\le 2$.

As we assumed that $G$ does not have a universal vertex, $|D|=2$, say $D=\{x,y\}$.
If $x$ and $y$ have private neighbors, say $x'\in pn(x,D)$ and $y'\in pn(y,D)$, such that $x'y' \in E(G)$, $G$ is basic by Lemma~\ref{lem:dominatingC4}.

So we may assume that for every minimal connected dominating set $D= \{x,y\}$ of $G$, every private neighbor of $x$ is non-adjacent to every private neighbor of $y$.
Hence, $G$ is the chain-join of two $\mathcal F$-free graphs, by Lemma~\ref{lem:dominatingbull}.
\bigskip

Finally, let $G$ be such that one of the following conditions holds.
\begin{enumerate}[label={(\roman*)}]
	\item\label{cond:Gisbasic} $G$ is a basic graph.
	\item\label{cond:universalvertex} $G$ is obtained from a hereditarily equidominating graph by adding a universal vertex.
	\item\label{cond:chain-join} $G$ is the chain-join of two hereditarily equidominating graphs.
\end{enumerate}
If $G$ is basic, it is clearly hereditarily equidominating.
From Lemmas~\ref{lem:universalremoval} and~\ref{lem:chain-joinpreserves} it follows that any graph obtained from a hereditarily equidominating graph by attaching a universal vertex or from two hereditarily equidominating graphs by a chain-join is again hereditarily equidominating.
This completes the proof.
\end{proof}

\subsection{Recognition of hereditarily equidominating graphs}

Given the fact that hereditarily equidominating graphs admit a finite forbidden subgraph characterization, it is clear that this class can be recognized efficiently.
A faster recognition is possible using the decomposition provided by Theorem~\ref{thm:hereditarystructure}.

\begin{corollary}
Let $G$ be a graph on $n$ vertices and $m$ edges.
It can be decided in time $\mathcal O(n (n+m))$ whether $G$ is a hereditarily equidominating graph.
\end{corollary}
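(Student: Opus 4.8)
The plan is to read part~(c) of Theorem~\ref{thm:hereditarystructure} as a recursive algorithm. Given $G$, we proceed as follows. If $G$ is a basic graph ($K_1$ or $K_{2n}-ne$ for some $n\ge 2$), which is recognizable in $\mathcal O(n+m)$ time from the degree sequence, we accept. Otherwise, if $G$ has a universal vertex $v$, then by Lemma~\ref{lem:universalremoval} (applied along the recursion) $G$ is hereditarily equidominating if and only if $G-v$ is, so we recurse on $G-v$; testing for and deleting a universal vertex costs $\mathcal O(n+m)$. Otherwise, if $G$ is disconnected, then $G$ is a chain-join of one of its components with the union of the others, so by Lemma~\ref{lem:chain-joinpreserves} it suffices to recurse on each connected component. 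Finally, if $G$ is connected, without universal vertex and not basic, then by the structure theorem $G$ is hereditarily equidominating if and only if it is a chain-join of two nonempty graphs $G_1,G_2$, which are then themselves hereditarily equidominating; so we try to compute such a decomposition and, if successful, recurse on $G_1$ and $G_2$, and otherwise reject.

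The one nonroutine step is the chain-join detection for a connected graph without universal vertices, and here I would follow the constructive proof of Lemma~\ref{lem:dominatingbull}. First compute a size-$2$ connected dominating set $\{x,y\}$ of $G$ maximizing $|N(x)|+|N(y)|$ (if none exists, reject): this can be done in linear time, for instance by observing that one endpoint may be taken to be a vertex of maximum degree and then finding, by marking the non-neighbours of that vertex and scanning its neighbourhood, its best dominating partner — some care being needed for the degenerate cases in which one of the two parts turns out to be a complete graph. Given $\{x,y\}$, put $X\coloneqq N[x]\setminus N[y]$, $Y\coloneqq N[y]\setminus N[x]$ and $S\coloneqq N[x]\cap N[y]$; attempt to split $S$ into the set $X'$ of vertices adjacent to all of $X$ and to none of $Y$ and the set $Y'$ of vertices adjacent to all of $Y$ and to none of $X$; and set $V_1\coloneqq X\cup X'$, $V_2\coloneqq Y\cup Y'$. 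Then verify that every edge between $V_1$ and $V_2$ runs between $X'$ and $Y'$, that $X'$ and $Y'$ consist of universal vertices of $G[V_1]$ and $G[V_2]$ respectively, and that the bipartite graph induced by $X'$ and $Y'$ is a chain graph. If any check fails we reject; otherwise we have a chain-join $G=(G[V_1]\cup G[V_2])+E(B)$ and we recurse on $G[V_1]$ and $G[V_2]$. Each of these steps runs in $\mathcal O(n+m)$ time.

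Correctness is by induction on $|V(G)|$ using Theorem~\ref{thm:hereditarystructure}. If $G$ is hereditarily equidominating it falls into one of the three cases of part~(c); in the chain-join case the proof of Lemma~\ref{lem:dominatingbull} shows that the construction above, applied to a maximum-degree-sum connected dominating set, does produce a valid chain-join into two $\mathcal F$-free — hence, by induction, hereditarily equidominating — graphs, so the recursion accepts. Conversely, if the algorithm accepts $G$, then $G$ is basic, or $G$ arises from an accepted graph by adding a universal vertex, or $G$ is a chain-join of two accepted graphs; in each case $G$ is hereditarily equidominating by basicness, Lemma~\ref{lem:universalremoval}, or Lemma~\ref{lem:chain-joinpreserves}. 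And whenever the algorithm rejects at some recursion node, the corresponding subgraph is connected, without universal vertex, not basic, and admits no chain-join decomposition, so by the structure theorem it is not $\mathcal F$-free, whence $G$ is not hereditarily equidominating either.

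For the running time, consider the recursion tree: a node handling a graph with $n'$ vertices and $m'$ edges does $\mathcal O(n'+m')$ work. Along any root-to-leaf path the vertex count strictly decreases at every step (deleting a universal vertex, splitting into two nonempty parts, or passing to a proper connected component), so the tree has depth at most $n$. Moreover, the subgraphs handled at a common depth have pairwise disjoint vertex sets — each is induced on a subset of its parent's vertex set, and siblings are vertex-disjoint — hence pairwise disjoint edge sets, so their total size is at most $n+m$; summing over the at most $n$ depths yields the claimed bound $\mathcal O(n(n+m))$. The main obstacle is exactly the linear-time chain-join step: finding the correct connected dominating set and handling cleanly the degenerate subcases of Lemma~\ref{lem:dominatingbull}, while at the same time ensuring that the algorithm rejects precisely on the inputs that fail to be hereditarily equidominating.
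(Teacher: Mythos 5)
Your proposal is correct and takes essentially the same route as the paper: recognize basic graphs from the degree sequence, strip universal vertices, detect a chain-join by building $X$, $Y$, $S$, $X'$, $Y'$ as in the proof of Lemma~\ref{lem:dominatingbull} and verifying the universality and chain-graph conditions before recursing, with linear work per level and recursion depth at most $n$ giving the $\mathcal O(n(n+m))$ bound. The only cosmetic difference is that the paper selects the pair $\{x,y\}$ directly (a maximum-degree vertex $x$, then a maximum-degree vertex $y$ in $N(v)$ for a non-neighbour $v$ of $x$) and delegates the co-chain check on $H[S]$ to the linear-time recognition algorithm of Heggernes and Kratsch, rather than explicitly computing a degree-sum-maximizing connected dominating pair.
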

\begin{proof}
Let $G$ be a graph on $n$ vertices and $m$ edges.
To decide whether $G$ is hereditarily equidominating, we apply our algorithm presented below to every connected component of $G$.

We use the notion of a \emph{co-chain graph}, which is just the complement of a chain graph.
Equivalently, a co-chain graph is a graph obtained from some chain graph by turning the two bipartition classes into cliques.
These cliques are called the \emph{co-classes} of the co-chain graph.
Note that the partition of a co-chain graph into its two co-classes might not be unique. 

Our algorithm takes as input a connected graph $H$, and mimics the decomposition of Theorem~\ref{thm:hereditarystructure}.
In order to detect whether $H$ is a chain join, the algorithm basically follows the proof of Lemma~\ref{lem:dominatingbull}.
\begin{enumerate}
	\item Compute the degrees of all vertices of $H$.
	\item\label{step:basic} If $H$ is basic, return that $H$ is hereditarily equidominating.
	\item\label{step:universal-vertex} If $H$ has a universal vertex, say $v$, reapply the algorithm to each connected component of $H-v$.
	\item Let $x$ be a vertex of maximum degree, and let $v$ be any non-neighbor of $x$.
	\item Let $y$ be the vertex of maximum degree among $N_H(v)$.
	\item Compute the sets $X \coloneqq N_H(x)\setminus N_H[y]$, $Y \coloneqq N_H(y)\setminus N_H[x]$, and $S \coloneqq N_H[x] \cap N_H[y]$.
	\item Compute the sets $X': = \{x' \in S : N_H(x') \cap X \neq \emptyset\}$ and $Y' \coloneqq S \setminus X'$.
	\item Compute the graph $H' \coloneqq H[S]$.
	\item\label{step:co-chain-recognition} Check whether $H'$ is a co-chain graph with co-classes $X'$ and $Y'$. If not, return that $G$ is not hereditarily equidominating.
	\item\label{step:universal-conditions} Check whether the following conditions are satisfied: 
		\begin{itemize}
			\item for every $x' \in X'$ it holds that $X \subseteq N_H(x')$ and $Y \cap N_H(x') = \emptyset$;
			\item for every $y' \in Y'$ it holds that $Y \subseteq N_H(y')$ and $X \cap N_H(y') = \emptyset$.
		\end{itemize}	
		If one of these conditions fails to hold, return that $G$ is not hereditarily equidominating.
	\item\label{step:final-reapply} Reapply the algorithm to each connected component of $H-S$.
\end{enumerate}

Let us first show how the algorithm can be implemented such that each iteration runs in $\mathcal O (|V(H)| + |E(H)|)$ time.
Since at least one vertex is removed in each iteration, the overall running time of the algorithm then computes to $\mathcal O(n (n+m))$.

Note that $H \cong K_{2k}-ke$, for some $k \ge 2$, if and only if $|V(H)| \ge 4$ and every vertex of $H$ has degree $|V(H)|-2$.
Hence, it can be checked in linear time whether $H$ is basic.
Since all other steps are standard, it remains to discuss how to perform Step~\ref{step:co-chain-recognition}.
In this particular step, we have to decide whether $H'$ is a co-chain graph with the prescribed co-classes $X'$ and $Y'$.
As shown by Heggernes and Kratsch~\cite{HK07}, co-chain graphs can be recognized in time $\mathcal O (|V(H)| + |E(H)|)$.
It is straightforward that the algorithm of Heggernes and Kratsch can be modified such that it includes prescribed co-classes.
Hence, Step~\ref{step:co-chain-recognition} can be performed in time $\mathcal O (|V(H)| + |E(H)|)$.
\medskip

We now come to the correctness of the algorithm.
Our aim is to show that the algorithm performs a decomposition according to Theorem~\ref{thm:hereditarystructure} (if possible).
For this, it suffices to show that in each iteration, the algorithm performs a single step of such a decomposition, or correctly decides that a decomposition is no longer possible.

Let us first show that if $H$ is a hereditarily equidominating graph, the algorithm correctly decomposes $H$.
By Theorem~\ref{thm:hereditarystructure}, $H$ is either basic, has a dominating vertex or is a chain-join.

In the case that $H$ is basic, which is checked in Step~\ref{step:basic}, the algorithm can safely return that $H$ is hereditarily equidominating.
If $H$ has a dominating vertex, say $v$, this is detected in Step~\ref{step:universal-vertex}.
By Theorem~\ref{thm:hereditarystructure}, $H$ is hereditarily equidominating if and only if $H-v$ is hereditarily equidominating.
Hence, the algorithm is correctly reapplied to the connected components of $H-v$.

So let us assume that $H$ is neither basic nor has a universal vertex.
Then, according to Theorem~\ref{thm:hereditarystructure}, $H$ is the chain-join of two graphs $H_1,H_2$.
Let $U_i$ be the set of universal vertices of $H_i$, for $i \in \{1,2\}$.
Let $x$ be a vertex of maximum degree in $H$.
As $H$ is a chain-join of $H_1$ and $H_2$, it holds that $x \in U_1 \cup U_2$, say $x \in U_1$.
Moreover, $x$ has the maximal closed neighborhood in $H$ among all vertices of $U_1$.

Let $v \in V(H) \setminus N_H[x]$, and let $y$ be the vertex of maximum degree among $N_H(v)$.
Then $v \in V(H_2) \setminus N_H(x) = V(H_2) \setminus N_H(V(H_1))$.
Hence, $y \in U_2$, and $y$ has the maximal closed neighborhood in $H$ among all vertices of $U_2$.

Now we define $X \coloneqq N_H(x)\setminus N_H[y]$, $Y \coloneqq N_H(y)\setminus N_H[x]$, $S \coloneqq N_H[x] \cap N_H[y]$, $X': = \{x' \in S : N_H(x') \cap X \neq \emptyset\}$, and $Y' \coloneqq S \setminus X'$.
Since $X' \subseteq U_1$, $Y' \subseteq U_2$, and $H$ is a chain-join, it holds that $H[S]$ is a co-chain graph with co-classes $X'$ and $Y'$.
Thus, both $X'$ and $Y'$ satisfy the conditions checked in Step~\ref{step:universal-conditions}.
Note that the graph $H-S$ equals the disjoint union of $H_1-X'$ and $H_2-Y'$, both being hereditarily equidominating by assumption.
Hence, the algorithm is correctly reapplied to the connected components of $H-S$.

Let us now assume that the algorithm performs one iteration on $H$ and does not return that $G$ is not hereditarily equidominating.
We have to show that $H$ is hereditarily equidominating if and only if all graphs are hereditarily equidominating to which the algorithm is reapplied.
Clearly this holds if $H$ has a universal vertex, so we proceed to analyze the case that the reapplication is called in Step~\ref{step:final-reapply}.

In this case the sets $S$, $X'$, $Y'$, and the graph $H'$ are computed, and both Step~\ref{step:co-chain-recognition} and Step~\ref{step:universal-conditions} are performed successfully.
In particular, $H'$ is a co-chain graph with co-classes $X'$ and $Y'$.
This means that $X'$ and $Y'$ are cliques in $H'$ and thus in $H$, and that the graph induced by the edges $E_H(X',Y')$ is a chain graph.
After Step~\ref{step:universal-conditions}, we know that for every $x' \in X'$ it holds that $X \subseteq N_H(x')$ and $Y \cap N_H(x') = \emptyset$, and for every $y' \in Y'$ it holds that $Y \subseteq N_H(y')$ and $X \cap N_H(y') = \emptyset$.
Hence, $H$ is a chain-join of the two connected components of the graph $H-E_H(X',Y')$.
Let us denote these two components by $H_1$ and $H_2$.
Moreover, let $H_1'\coloneqq H_1-X'$ and $H_2'\coloneqq H_2-Y'$.
Note that $H-S = H_1' \cup H_2'$.

By Theorem~\ref{thm:hereditarystructure}, $H$ is hereditarily equidominating if and only if both $H_1$ and $H_2$ are hereditarily equidominating.
Since $X'$ is a set of universal vertices of $H_1$, $H_1$ is hereditarily equidominating if and only if $H_1'$ is hereditarily equidominating, again by Theorem~\ref{thm:hereditarystructure}.
The analogous statement holds for $H_2$ and $H_2'$.
Summing up, $H$ is hereditarily equidominating if and only if $H-S = H_1' \cup H_2'$ is hereditarily equidominating.
This completes the proof.
\end{proof}

\section{Conclusion and Outlook}\label{sec:conclusion}
The main result of this paper is that the {\sc Equidomination} problem can be parameterized in two different ways such that the parameterized problems are fixed-parameter tractable.
One way is to take the target value $t$ of an equidominating structure as the parameter: this problem is called the {\sc Target}-$t$ {\sc Equidomination} problem.
The second way is to allow only vertex weights up to a certain value $k$, which leads to the $k$-{\sc Equidomination} problem.

To solve the kernelized instances we developed an XP algorithm that can be used for both problems.
Even though there are many analogies between equistability and equidomination ---as between the concepts of stability and domination in general--- there are new difficulties in the case of equidomination.
This fact results in several extensions we needed to discover and formalize.
Finally, we characterized hereditarily equidominating graphs in terms of seven forbidden subgraphs, which also leads to a polynomial time recognition algorithm.

There are several open questions that arise in the context of equidomination. 
Most importantly, we would like to see a hardness proof of the {\sc Equidomination} problem.
For this, however, it might be necessary to get a better grip on the combinatorial properties of equidominating graphs.
One way in this direction could be a characterization of target-$t$ equidominating and $k$-equidominating graphs for small $t,\, k\in \IN$.

Further, one could adapt the idea of equidomination ---namely characterizing mds by the $0$-$1$-solutions of linear equalities--- to other forms of domination like total, multiple or global domination.
Thoughts in this direction could concern a characterization of these graph classes as well as the complexity of the according decision problems. 


\end{document}